\definecolor{Gray}{gray}{0.9}
\def\0{\mathbf{0}}
\def\rr{\rightarrow}
\def\dr{\downarrow}
\def \< {\langle}
\def \> {\rangle}
\def\beqa{\begin{eqnarray}}
\def\eeqa{\end{eqnarray}}
\def\beqas{\begin{eqnarray*}}
\def\eeqas{\end{eqnarray*}}
\newtheorem{theorem}{Theorem}[section]
\newtheorem{lemma}[theorem]{Lemma}
\newtheorem{prop}[theorem]{Proposition}
\newtheorem{corollary}[theorem]{Corollary}
\newtheorem{remark}[theorem]{Remark}
\newtheorem{example}[theorem]{Example}
\newtheorem{definition}[theorem]{Definition}
\numberwithin{equation}{section}
\newcommand{\hatd}[1]{{}}
\newcommand{\bd}{\begin{displaymath}}
\newcommand{\ed}{\end{displaymath}}
\newcommand{\be}{\begin{equation}}
\newcommand{\ee}{\end{equation}}
\newcommand{\bq}{\begin{eqnarray}}
\newcommand{\eq}{\end{eqnarray}}
\newcommand{\bn}{\begin{eqnarray*}}
\newcommand{\en}{\end{eqnarray*}}
\def\wt{\widetilde}
\def\P{\mathbb{P}}
\def\C{\mathbb{C}}
\def\E{{\mathbb{E}}}
\newcommand{\R}{{\mathbb R}}
\newcommand{\N}{{\mathbb N}}
\definecolor{blue0}{RGB}{0,77,153} 
\definecolor{red0}{RGB}{179,0,77} 
\definecolor{green0}{RGB}{134,219,76} 
\definecolor{gray0}{RGB}{84,97,110}
\title{Optimal Portfolio Choice with Cross-Impact Propagators }
\author[1]{Eduardo Abi Jaber\thanks{eduardo.abi-jaber@polytechnique.edu, EAJ is grateful for the financial support from the Chaires FiME-FDD, Financial Risks, Deep Finance \& Statistics and Machine Learning \& Systematic Methods in Finance at Ecole \mbox{Polytechnique}.}}
\author[2]{Eyal Neuman\thanks{e.neumann@imperial.ac.uk}}
\author[2]{Sturmius Tuschmann\thanks{s.tuschmann23@imperial.ac.uk, ST is supported by the EPSRC Centre for Doctoral Training in Mathematics of Random \mbox{Systems}: Analysis, Modelling and Simulation (EP/S023925/1).}}
\affil[1]{Ecole Polytechnique, CMAP}
\affil[2]{Department of Mathematics, Imperial College London}
\begin{document}
\vspace{-0.5cm}
\maketitle

\begin{abstract}
\noindent We consider a class of optimal portfolio choice problems in continuous time where the agent's transactions create both transient cross-impact driven by a matrix-valued Volterra propagator, as well as temporary price impact. We formulate this problem as the maximization of a revenue-risk functional, where the agent also exploits available information on a progressively measurable price predicting signal. We solve the maximization problem explicitly in terms of operator resolvents, by reducing the corresponding first-order condition to a coupled system of stochastic Fredholm equations of the second kind and deriving its solution. We then give sufficient conditions on the matrix-valued propagator so that the model does not permit price manipulation. We also provide an implementation of the solutions to the optimal portfolio choice problem and to the associated optimal execution problem. Our solutions  yield financial insights on the influence of cross-impact on the optimal strategies and its interplay with  alpha decays.
\end{abstract} 

\begin{description}
\item[Mathematics Subject Classification (2020):] 93E20, 60H30, 91G80
\item[JEL Classification:] C02, C61, G11
\item[Keywords:] cross-impact, optimal portfolio choice, price impact, propagator models, predictive signals, Volterra stochastic control
\end{description}





\section{Introduction}

Optimal portfolio choice has been a core problem in quantitative finance. In this class of problems investors dynamically select their portfolio while taking into account expected returns, risk factors, transaction and price impact costs, and trading signals. The seminal papers of \citet{Garleanu_13,garleanu2016dynamic} provided a tractable framework for formulating and explicitly solving these problems under the assumptions of quadratic trading costs and exponential decay of the price impact, among others. Their work was further generalized in various directions (see e.g. \citep{ekren2019portfolio,horst2019multi,tsoukalas2019dynamic}). Despite its phenomenal impact, the main drawback of Gârleanu and Pedersen's framework is that it relies on the assumption that the price impact and the cross-impact decay at an exponential rate. 
 
 Price impact refers to the empirical fact that the execution of a large order
affects a risky asset's price in an adverse and persistent manner, leading to less favorable prices. Propagator models express price moves in terms of the influence of past trades, and therefore capture the decay of the price impact after each trade \citep{bouchaud_bonart_donier_gould_2018,bouchaud2009markets,curato2017optimal, gatheral2010no,webster2023handbook}. Cross-impact models provide an additional explanation for the price dynamics of a risky asset in terms of the influence of past trades of other assets in the market. The price distortion $D_t$ due to both effects is quantified by the $N$-dimensional process,   
\be \label{d_t} 
D_t  =\int_0^t G(t,s) dX_s, \quad t \geq 0, 
\ee
where $X_t=(X^1_t,...,X_t^N)$ describes the amount of shares in each of the $N$ assets in the portfolio at time $t$. Here $G(t,s)$ is a matrix of Volterra kernel functions, so that self-impact is captured by its diagonal entries and cross-impact is captured by its off-diagonal entries. The components of $G$ are often referred to as propagators (see e.g. \citep{benzaquen2017dissecting,bouchaud_bonart_donier_gould_2018,coz2023cross, mastromatteo2017trading,schneider2019cross,tomas2022build}). Throughout this paper we will adopt the convention introduced in Chapter~14.5.3 of \cite{bouchaud_bonart_donier_gould_2018} and refer to price impact (or equivalently price distortion) as the aggregated effects of self-impact and cross-impact. 

Gârleanu and Pedersen along with follow-up papers assumed that the price distortion $D_t$ decays exponentially with time. 
On the other hand, \citet{bouchaud_bonart_donier_gould_2018} (see Chapter 14.5.3) report on two main empirical observations regarding price impact: (i) Diagonal and off-diagonal elements of the propagator matrix $G$ decay as a power-law of the lag, i.e. $G_{ij}(t,s) \approx (t-s)^{-\beta_{ij}}$, with $0<\beta_{ij} < 1$. (ii) Most of the cross-correlations between price moves ($\approx 60-90\%$, depending on the timescale) are mediated by trades themselves, i.e. through a cross-impact mechanism, rather than through the cross-correlation of noise terms, which are not directly related to trading. See also \cite{benzaquen2017dissecting} for further details. 
\citet{coz2023cross} perform an empirical analysis using data of 500 US assets from 2017 to 2022. They conclude that price formation occurs endogenously within highly liquid assets first, and then trades in these assets influence the prices of their less liquid correlated products, with an impact speed constrained by their minimum trading frequency.

While empirical evidence on cross-impact and its power-law behavior is mixed \cite{benzaquen2017dissecting,bouchaud_bonart_donier_gould_2018,capponi2020multi,cont2023cross}, results on the corresponding optimal portfolio choice problem have been scarce for two main reasons: (i) Power-law type or any more general propagator matrices which do not have an exponential decay turn the optimal portfolio choice problem into non-Markovian and often time inconsistent, hence standard tools of stochastic control such as dynamic programming or FBSDEs do not apply in this case. (ii) Cross-impact may cause various problems from the financial point of view, such as round trips or
transaction-triggered price manipulations (see \cite{alfonsi2016multivariate,huberman2004price,schneider2019cross} for some pathological examples). Some recent progress has been made in \citep{abijaber2022optimal,abijaber2023equilibrium} for the single-asset version of this problem in the context of optimal execution  with signals. However the solution for the single-asset problem clearly does not take into account the aforementioned cross-impact effects.  

The main theoretical contribution in this non-Markovian portfolio choice setting was derived by \citet{alfonsi2016multivariate}, who proposed a discrete-time model for portfolio liquidation in presence of linear transient cross-impact using convolution type decay kernels $G(t)$. They characterized conditions on $G$ which guarantee that the model admits well-behaved optimal execution strategies by ensuring that price manipulation in the sense of \citet{huberman2004price} is excluded.  This means that a portfolio starting and terminating with zero inventory $(X_0=X_T=0)$ cannot create any profit, or equivalently negative trading costs, due to the induced price distortion $D_t$ in \eqref{d_t}. This condition translates into the following inequality, 
\be \label{round-t}
\E  \left[  \int_0^T D_t dX_t \right]   \geq 0,
\ee
where the time grid is discrete in \cite{alfonsi2016multivariate}, so that the integral is written as a sum. \citet{alfonsi2016multivariate} concluded that $G$ must be a matrix-valued nonnegative definite function in order for the model to satisfy condition \eqref{round-t} and thus preclude price manipulation. A sufficient condition for \eqref{round-t} in which $G$ is nonincreasing, nonnegative, convex, symmetric and commuting was given in \cite{alfonsi2016multivariate}, as well as a characterization of matrix-valued nonnegative definite functions.  Alfonsi et al. also derived a first-order condition for the optimal strategy of the execution problem and obtained an explicit solution for the case where the propagator matrix is given by $G(t)=\exp(-tC)$ for a symmetric nonnegative definite matrix $C\in\R^{N\times N}$. Note that in the framework of \cite{alfonsi2016multivariate}, not only the model is discrete in time, but also the alpha signals and the risk terms, which are central features of portfolio choice problems, are not incorporated. These simplifying assumptions turn the resulting stochastic control problem into a completely deterministic one. Moreover, explicit solutions in \cite{alfonsi2016multivariate} are only given for the special case of an exponential propagator matrix.
 
\paragraph{Our contribution.} In this work we extend the results of \citet{garleanu2016dynamic} and \citet{alfonsi2016multivariate}  in a few  crucial directions which are of relevance to recent empirical studies on cross-impact and also of theoretical interest. Our results provide financial insights on the influence of cross-impact on optimal trading strategies and the interplay between cross-impact and alpha decays.
\begin{itemize} 
\item[\textbf{(i)}] Derivation of explicit solutions: We formulate and solve the optimal portfolio choice problem in continuous time, allowing for a general Volterra propagator matrix $G:[0,T]^2\to\R^{N\times N}$ of nonnegative definite type as well as for general progressively measurable signals. Our optimal strategy is derived explicitly in Theorem \ref{thm:stochastic} in terms of resolvents of the operators involved, and it is implemented in Section \ref{sec-numerics}. Theorem \ref{thm:stochastic} extends the results of \cite{garleanu2016dynamic}, as it allows to solve the problem for a general propagator matrix, which includes the exponential decay kernels used in \cite{garleanu2016dynamic}, as well as the power-law kernels from \citep{benzaquen2017dissecting,mastromatteo2017trading} as specific examples. We also allow for general progressively measurable signals, which substantially generalize the mean-reverting signals used in \cite{garleanu2016dynamic} and the diffusion signals used in follow-up papers. 
Theorem \ref{thm:stochastic} also generalizes the results of \cite{alfonsi2016multivariate} in various directions. First we solve the problem in continuous time, which is compatible with the high-frequency trading timescale. We also include stochastic signals and risk factors,  which are crucial for general portfolio choice problems, and turn them from being matrix-valued and deterministic as in \cite{alfonsi2016multivariate}, into being operator-valued and stochastic. Moreover, we allow for a general Volterra propagator matrix $G(t,s)$ instead of a convolution propagator matrix $G(t)$ as used in \citep{alfonsi2016multivariate,garleanu2016dynamic}, and provide explicit solutions in the general case, in contrast to the special case with an exponential propagator matrix, which was solved in \cite{alfonsi2016multivariate}. As mentioned earlier, due to the generality of the propagator matrix, the portfolio choice problem that we solve is non-Markovian and time inconsistent, hence we introduce new tools, in the form of stochastic forward-backward systems of Fredholm equations of the second kind (see Section \ref{sec-fred}), in order to solve the corresponding first-order conditions. 

\item[\textbf{(ii)}] Preventing price manipulation: In Theorem \ref{thm:convolution} we give sufficient conditions for a convolution propagator matrix $G(t)$ to be nonnegative definite. This means that the expected costs caused by transient price impact are nonnegative for any trading strategy, so that in particular price manipulation in the sense of \citet{huberman2004price} is prevented (see \eqref{round-t}). Specifically, we show that if $G$  is nonincreasing, convex, nonnegative and symmetric, then $G$ is nonnegative definite, and price manipulation is excluded. Our result generalizes Theorem 2  of  \cite{alfonsi2016multivariate} from discrete to continuous time domains, and is of independent interest to the theory of nonnegative definite Volterra kernels. In the single-asset case, nonnegative definiteness is known to hold whenever $G$ is nonnegative, nonincreasing and convex (see Example 2.7 in \citet{GSS}). 
Theorem~\ref{thm:convolution} extends this result to the multi-asset case, and in particular recovers the latter when setting $N=1$ (see Section \ref{subsec:manipulation} for further details).  In Corollary~\ref{cor:product of matrix and real function} we address a popular example from the econophysics  literature (see e.g., \cite{mastromatteo2017trading}). Namely, we prove that if the convolution propagator matrix is factorized as $G(t)= C\phi(t)$, where $C$ is a symmetric nonnegative definite matrix and the function $\phi: [0,T] \rr \mathbb R$ is nonnegative, nonincreasing and convex on $(0,T)$, then $G$ is nonnegative definite, hence price manipulations are excluded.

\item[\textbf{(iii)}] Insights on the influence of cross-impact:
Our numerical study in Section~\ref{sec-numerics} explores the influence of transient cross-impact on the optimal liquidation of Asset 1, with an initial inventory of 0 in Asset 2. The conclusions can be summarized as follows:
\begin{itemize}
    \item[\textbf{(a)}] In the absence of signals, cross-impact induces a transaction-triggered round trip in Asset 2, prompting more aggressive trading in both assets. The strategy is particularly aggressive for the exponential propagator matrix compared to the more persistent fractional one (see Figure \ref{fig:withoutsignalwithcross}).
\item[\textbf{(b)}]
In the presence of positive alpha signals on both assets, cross-impact leverages the different alpha decays. The optimal strategy may involve selling or shorting the asset with the fastest alpha decay to leverage the cross-impact effect that decreases the price of the other asset with the slowest alpha decay. This allows profiting from both the more persistent signal on the second asset as well as the cross-impact effect (see Figures \ref{fig:withsignalcross} and \ref{fig:withsignalcross2}). 
\end{itemize}
\end{itemize} 

\paragraph{Organization of the paper.}  
In Section \ref{res-sec} we present our main results regarding the solution to the portfolio choice problem and the prevention of price manipulation. Section \ref {sec-numerics} is dedicated to numerical illustrations of our main results. In Section \ref{sec-fred} we derive explicit solutions to a class of systems of stochastic Fredholm equations arising from the first-order condition in the proof of Theorem \ref{thm:stochastic}. Finally, Sections \ref{sec-proof-strat}--\ref{sec-proof-lem-ker}
 are dedicated to the proofs of our main results. 

\section{Model Setup and Main Results} \label{res-sec}
\subsection{Model setup} 
Motivated by \cite{garleanu2016dynamic}, we introduce in the following a variant of the optimal portfolio choice problem with transient price impact driven by a Volterra propagator matrix and in the presence of general alpha signals. 

Let $(\Omega, \mathcal{F},\{\mathcal{F}_t\}_{0\leq t\leq T},\P)$ be a filtered probability space satisfying the usual conditions of right-continuity and completeness and let $T>0$ be a deterministic finite time horizon. We consider $N\in\N$ risky assets whose unaffected price vector is given by an $N$-dimensional semimartingale $P=(P_t)_{0\leq t\leq T}=(P_t^1,\dots,P_t^N)^\top_{0\leq t\leq T}$ 
with a canonical decomposition 
\be \label{p-dec}
P=A+M. 
\ee
Here $A=(A_t^1,\dots,A_t^N)^\top_{0\leq t\leq T}$ is a predictable finite-variation process with 
$$
\E\left[\int_0^T \|A_t\|^2dt\right]<\infty, 
$$
and $M=(M_t^1,\dots,M_t^N)^\top_{0\leq t\leq T}$ is a {continuous} martingale  such that 
\be \label{mart} 
d[M^i,M^j]_t=\Sigma_{ij}dt,\quad\text{for }i,j=1,\dots, N,
\ee 
where $\Sigma=(\Sigma_{ij})\in\R^{N\times N}$ is a symmetric nonnegative definite covariance matrix and $\|\cdot\|$ denotes the Euclidean norm.

We consider an investor whose initial portfolio is given by $X_0\in\R^N$, where the amounts of shares held in each of the $N$ risky assets are given by
\be \label{inv}
X_t^u =X_0+\int_0^tu_sds, \quad 0\leq t\leq T. 
\ee
Here $(u_s)_{0\leq s\leq T}=(u_s^1,\dots,u_s^N)^\top_{0\leq s\leq T}$ denotes the trading speed chosen by the investor from the set of admissible strategies
    \be \label{admis}
    \mathcal{U} \vcentcolon= \Big\{u:\Omega\times [0,T]\to \R^N \Big|u \text{ progressively measurable with }\E\Big[\int_0^T \|u_t\|^2dt\Big]<\infty \Big\}.
    \ee
We assume that the investor's trading activity causes linear temporary impact on the assets' execution prices given by 
\be \label{temp}
\frac{1}{2}\Lambda u_t, \quad 0\leq t\leq T, 
\ee
where $\Lambda\in\R^{N\times N}$ is a positive definite (not necessarily symmetric) matrix, i.e. $x^\top\Lambda x>0$ for all $x\in\R^N$. Note that the diagonal entries of $\Lambda $ capture the temporary self-impact while the off-diagonal entries $\Lambda_{ij}$ introduce temporary cross-impact on the price of asset $i$ caused by trading one share of asset $j$ per unit of time.
\begin{remark}
Note that empirical findings in \citet{capponi2020multi}, \citet{coz2023cross} and \citet{cont2023cross} suggest that there is no significant temporary cross-impact of assets, i.e.~that the off-diagonal entries of $\Lambda$ are zero. Incorporating these findings into our model would turn $\Lambda$ into a diagonal matrix with positive entries that capture the temporary self-impact of the assets \`a la \citet{almgren2001optimal}.
\end{remark}
The investor's trading activity also creates a price distortion which is given by 
\be \label{dist} 
D_t^u\vcentcolon=\int_0^t G(t,s)u_sds,\quad 0\leq  t \leq T, 
\ee
where $G:[0,T]^2\to\R^{N\times N}$ is a matrix of Volterra kernels, i.e.~each entry in the matrix satisfies $G_{ij}(t,s)=0$ for $t< s$. The so-called propagator matrix $G$ captures linear transient self-impact in the diagonal terms along with linear cross-impact in the off-diagonal terms. 
That is, the value $G_{ij}(t,s)$ describes the impact at time $t$ on the $i$-th asset's price caused by trading at time $s$ one share of the $j$-th asset per unit of time.
We further say that $G$ is nonnegative definite, if it holds for every $f\in L^2([0,T],\mathbb R^N)$ that 
\be \label{non-neg}
 \int_{0}^T \int_{0}^T f(t)^\top G(t,s)f(s)dsdt \geq 0. 
\ee 
We define $L^2([0,T]^2,\R^{N\times N})$ to be the space of Borel-measurable matrix-valued kernels $G:[0,T]^2\to\R^{N\times N}$ satisfying 
\be \label{l-2-ker-def}
\int_0^T\int_0^T \|G(t,s)\|^2dsdt <\infty.
\ee
Here, $\|\cdot\|$ denotes the Frobenius norm in consistency with our notation $\|\cdot\|$ for the Euclidean norm. 
The set of admissible kernels $\mathcal{G}$ is given by, 
\be \label{prop-class}
\mathcal{G}\vcentcolon=\Big\{G\in L^2([0,T]^2,\R^{N\times N})\Big| \,G\text{ is nonnegative definite with }G(t,s)=0\text{ for }t< s\Big\}.
\ee
\begin{example}\label{ex:propagators}
We present some typical examples for propagator matrices which arise from applications and can be incorporated into our model. 
\begin{enumerate}
\item[\textbf{(i)}] Exponential decay: Let $C$ be a symmetric nonnegative definite matrix in $\R^{N\times N}$. Motivated by \citet{obizhaeva2013optimal}, \citet{alfonsi2016multivariate} investigate the matrix exponential kernel 
    \be G(t,s)=\mathds{1}_{\{t\geq s\}}\exp(-(t-s)C).\ee 
    \citet{garleanu2016dynamic} employ the factorized exponential kernel 
    \be G(t,s)=\mathds{1}_{\{t\geq s\}}e^{-R(t-s)}C \ee 
with scalar price resiliency $R>0$.
\item[\textbf{(ii)}] Power-law decay: \citet{benzaquen2017dissecting} and \citet{mastromatteo2017trading} use for their empirical studies the factorized power-law kernel
    \be G(t,s)=\mathds{1}_{\{t\geq s\}}\big(1+\frac{t-s}{t_0}\big)^{-\beta}C,
\ee
where $C\in\R^{N\times N}$ is symmetric nonnegative definite, $\beta\in(0,1)$ and $t_0>0$.  If $\beta<\frac{1}{2}$, our model also allows the factorized kernel
\be 
G(t,s)=\mathds{1}_{\{t>s\}}(t-s)^{-\beta}C,
\ee
with singular power-law decay as introduced by \citet{gatheral2010no}.
\item[\textbf{(iii)}] Permanent impact: The constant kernel 
\be G(t,s)=\mathds{1}_{\{t\geq s\}}C,
\ee
for a symmetric nonnegative definite matrix $C\in\R^{N\times N}$ incorporates permanent self- and cross-impact as discussed in Section 5 of \citet{huberman2004price}, Section 3 of \citet{schneider2019cross} and Example 1 of \citet{alfonsi2016multivariate}. If $C$ is a diagonal matrix with nonnegative entries, $G$ represents permanent impact \`a la \citet{almgren2001optimal}.
\item[\textbf{(iv)}] Constructed decay kernels: Motivated by Section 3.1 of \cite{alfonsi2016multivariate}, the general kernel 
\be
G(t,s)=Q^T \mathds{1}_{\{t\geq s\}}\operatorname{diag}\big(g_1(t-s),\dots,g_N(t-s)\big)Q
\ee
for an invertible matrix $Q\in\R^{N\times N}$ and nonnegative, nonincreasing, convex kernel functions $g_1,\dots,g_N:[0,T]\to\R$, is nonnegative definite.  
\item[\textbf{(v)}] Interest rate derivatives: The following non-convolution kernel is a straightforward generalization of the price impact for bonds trading model proposed in Section 3.1 of  \citet{brigo2020}, to a portfolio of bonds,
 $$
G(t,s) =  \alpha (T-t) \mathds 1_{\{t>s\}}H(t-s)  C.
$$
Here $H:[0,T] \rr \mathbb{R}$ is a nonnegative, nonincreasing, convex function taking values in $\mathbb{R}$, such as the exponential or power-law kernels in the above examples and $C$ is a symmetric nonnegative definite matrix $C\in\R^{N\times N}$. The factor $ \alpha(T-t)$ for $\alpha \in (0,\infty)$ is added in order to enforce a terminal condition on the bond price regarding its expiration at time $T$.   
\end{enumerate}
\end{example}

\begin{lemma}\label{lemma:examples}
All propagator matrices introduced in Example \ref{ex:propagators} belong to the class of admissible kernels $\mathcal{G}$.
\end{lemma}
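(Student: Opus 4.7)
The plan is to verify, for each of the five cases in Example~\ref{ex:propagators}, the three requirements defining $\mathcal{G}$: the Volterra property $G(t,s)=0$ for $t<s$, square integrability on $[0,T]^2$, and nonnegative definiteness~\eqref{non-neg}. The Volterra property is built into each definition via the indicator factors. Square integrability is immediate in (i), (iii), (iv), and (v) from boundedness on $[0,T]$ (a nonincreasing convex function with finite value at $0$ is bounded); for (ii), the nonsingular kernel is bounded, while the singular one satisfies $\int_0^T\!\int_0^t (t-s)^{-2\beta}\,ds\,dt<\infty$ precisely because $\beta<1/2$. All substance thus lies in the nonnegative-definiteness check.

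For the factorized convolution cases (i), (ii), (iii), I would observe that each kernel has the form $\mathds{1}_{\{t\ge s\}}\phi(t-s)C$ with $C$ symmetric nonnegative definite and $\phi$ nonnegative, nonincreasing, and convex on $(0,T)$: this is routine for $\phi(u)=e^{-Ru}$, $(1+u/t_0)^{-\beta}$ with $\beta\in(0,1)$, $u^{-\beta}$ with $\beta\in(0,1/2)$, and the constant~$1$. Corollary~\ref{cor:product of matrix and real function} then settles all three. For the non-factorized matrix exponential $\exp(-(t-s)C)$ of~(i), I would first diagonalize $C=Q^\top\mathrm{diag}(\lambda_1,\dots,\lambda_N)Q$, obtaining $\exp(-(t-s)C)=Q^\top\mathrm{diag}(e^{-\lambda_i(t-s)})Q$, which places it under the template of~(iv).

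For (iv) the plan is a change of variables $h=Qf$, which collapses the quadratic form into $\sum_{i=1}^N\int_0^T\!\int_0^T h_i(t)h_i(s)\mathds{1}_{\{t\ge s\}}g_i(t-s)\,ds\,dt$; by the scalar ($N=1$) version of Theorem~\ref{thm:convolution} applied to each $g_i$, every summand is nonnegative, and so is the whole sum.

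The main obstacle is the non-convolution kernel~(v), whose $t$-dependent prefactor $\alpha(T-t)$ breaks the convolution structure. Here I would first use the symmetry of $C$ to symmetrize the quadratic form, reducing the problem to showing that the symmetric scalar kernel $K(t,s)=\min(T-t,T-s)\,H(|t-s|)$ is nonnegative definite. I would then split $K=K_1\cdot K_2$ with $K_1(t,s)=\min(T-t,T-s)$ and $K_2(t,s)=H(|t-s|)$. The identity $K_1(t,s)=\int_0^T\mathds{1}_{\{u>t\}}\mathds{1}_{\{u>s\}}\,du$ exhibits $K_1$ as an integral of rank-one nonnegative kernels, hence nonnegative definite; meanwhile Theorem~\ref{thm:convolution} applied to the symmetric extension of $\mathds{1}_{\{t\ge s\}}H(t-s)$ gives the same for $K_2$. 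The Schur product theorem then yields nonnegative definiteness of $K_1K_2$, and reintroducing $C$ via the diagonalization trick from~(iv) closes the argument.
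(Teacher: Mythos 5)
Your handling of cases (i)--(iv) is correct and essentially coincides with the paper's proof: the factorized convolution kernels are settled by Corollary \ref{cor:product of matrix and real function}, and the matrix exponential together with case (iv) by orthogonal diagonalization and the scalar convexity criterion; the paper phrases (iv) equivalently by checking the hypotheses of the matrix-valued Theorem \ref{thm:convolution} for $H(t)=Q^\top \operatorname{diag}\big(g_1(t),\dots,g_N(t)\big)Q$, i.e.\ the same computation $x^\top H(t)x=\sum_i y_i^2 g_i(t)$ with $y=Qx$. One small caveat on integrability: in (v) the paper allows a singular $H$ (e.g.\ power-law), so ``bounded, hence square integrable'' is not automatic there; square integrability follows instead as in the singular part of (ii), since the prefactor $\alpha(T-t)$ is bounded.

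The one step that is not rigorous as written is the appeal to the Schur product theorem in case (v). Schur's theorem is a statement about finite matrices, i.e.\ about nonnegative definiteness in the discrete sense appearing in Proposition \ref{prop:pos semidef kernels}; what you need is nonnegative definiteness in the integral sense \eqref{non-neg}, and your factor $K_2(t,s)=H(|t-s|)$ may blow up on the diagonal, so the two notions cannot be identified without an argument -- that passage is exactly the content of Proposition \ref{prop:pos semidef kernels} and requires continuity of the kernel. The gap is minor, because the rank-one representation of $K_1$ that you already wrote down closes it directly: for scalar $g\in L^2([0,T],\R)$,
\begin{equation}
\int_0^T\!\!\int_0^T \min(T-t,T-s)\,H(|t-s|)\,g(t)g(s)\,ds\,dt
=\int_0^T\!\Big(\int_0^T\!\!\int_0^T H(|t-s|)\,\mathds{1}_{\{t<u\}}g(t)\,\mathds{1}_{\{s<u\}}g(s)\,ds\,dt\Big)du\;\ge\;0,
\end{equation}
where Fubini is justified by Cauchy--Schwarz (for the exponential or power-law choices, $H(|t-s|)\in L^2([0,T]^2)$) and each inner integral is nonnegative by the scalar convolution criterion applied to the truncated function $\mathds{1}_{\{\cdot<u\}}g$. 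This is in fact the paper's own argument for (v): it writes $\alpha(T-t)=\alpha\int_0^T \mathds{1}_{\{t\le T-r\}}\,dr$ (as in the proof of Lemma \ref{lemma:F+F^* nonnegative def}), reduces to the scalar kernel on $[0,T-r]^2$, and then reintroduces $C$ by the same diagonalization you propose. So your route for (v) is a symmetrized variant of the paper's; simply replace the Schur-product invocation by the display above and the proof is complete.
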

Lemma \ref{lemma:examples} is proved in Section \ref{sec-proof-lem-ker}. 

The investor's objective is the maximization of the following portfolio performance criterion over the time period $[0,T]$, 
\begin{align}\label{eq:Ju}\hspace{-5mm}
J(u)\vcentcolon = \mathbb{E}\left[\int_0^T -u_t^\top\big(P_t + D_t^u +\frac{1}{2}\Lambda u_t\big)dt+(X_T^u)^\top P_T-\frac{\gamma}{2}\int_0^T(X_t^u)^\top\Sigma X_t^udt \right]
\end{align}
over all admissible trading strategies $u \in \mathcal{U}$. The first two terms in the right-hand side of \eqref{eq:Ju} describe the investor’s terminal wealth, in terms of her final cash position resulting from trading the assets in the presence of price impact as prescribed above, as well as her remaining final portfolio’s book value. The last term in \eqref{eq:Ju} represents the portfolio's risk in the sense of \citet{Markowitz:1952aa} (see also Section 1 of \cite{garleanu2016dynamic}), where $\Sigma$ was defined in \eqref{mart} and $\gamma\geq 0$ is a risk aversion parameter. 

\begin{remark}\label{rem:distorted value}
One may consider to implement in \eqref{eq:Ju} the final portfolio's distorted value $(X_T^u)^\top(P_T+D_T^u)$ instead of its final book value $(X_T^u)^\top P_T$. Recall the decomposition $P= M+A$. Hence in this case if $A_t$ and $D_t^u$ are both differentiable with respect to $t$, an application of integration by parts yields that maximizing $J(u)$ is equivalent to maximizing the following cost functional, 
\be
\tilde J(u) :=\mathbb{E}\left [\int_0^T \Big( (X_t^u)^\top(\dot{A}_t + \dot{D}_t^u )-\frac{1}{2}u_t^\top\Lambda u_t-\frac{\gamma}{2}(X_t^u)^\top\Sigma X_t^u\Big)dt \right],
\ee
which is a finite time horizon version of the objective functional in \citet{garleanu2016dynamic}. Note however that in this case $\tilde J(\cdot)$ might not be concave for general Volterra kernels $G$, unlike for the exponential kernel case studied in \cite{garleanu2016dynamic}. 
\end{remark}
\subsection{Solution of the portfolio choice problem}
Before we present our results regarding the solution to the portfolio choice problem we introduce some essential definitions and notation. 

We denote the inner product on $L^2([0,T],\R^N)$ by $\langle\cdot,\cdot\rangle_{L^2}$, i.e.
\be
\langle f,g\rangle_{L^2}:=\int_0^T f(t)^\top g(t)dt, \quad f,g\in L^2([0,T],\R^N), 
\ee
and its induced norm by $ \|\cdot \|_{L^2}$.
For any $G\in L^2([0,T]^2,\R^{N\times N})$ (recall \eqref{l-2-ker-def}) define the linear integral operator $\mathbf{G}$ on $L^2([0,T],\R^N)$ induced by $G$ as
\be
(\mathbf{G}f)(t):=\int_0^T G(t,s)f(s)ds,\quad 0\leq t \leq T, \quad f\in L^2([0,T],\R^N). 
\ee
Then $\mathbf{G}$ is a bounded linear operator from $L^2([0,T],\R^N)$ into itself (see Theorem 9.2.4 and Proposition 9.2.7 (iii) in \cite{gripenberg1990volterra}). Moreover, we denote by $G^*$ the adjoint kernel of $G$ with respect to $\langle\cdot,\cdot\rangle_{L^2}$ given by 
\be
G^*(t,s)\vcentcolon = G(s,t)^\top,\quad (s,t)\in [0,T]^2,
\ee
and by $\mathbf{G}^*$ the induced adjoint bounded linear operator on $L^2([0,T],\R^N)$. 

\begin{remark}\label{rem:pos semidefinite}
Note that if $G\in L^2([0,T]^2,\R^{N\times N})$ then it holds that
\be
\langle f,\mathbf{G}f\rangle_{L^2}=\langle f,\mathbf{G}^*f\rangle_{L^2}=\frac{1}{2}\langle f,(\mathbf{G}+\mathbf{G}^*)f\rangle_{L^2}, \quad  \text{for all } f\in L^2([0,T],\R^N).
\ee
Moreover, recalling \eqref{non-neg}, the following three statements are equivalent:
\begin{enumerate}
\item [\textbf{(i)}] The integral operator $\mathbf{G}$ is nonnegative definite.
\item [\textbf{(ii)}] The integral operator $\mathbf{G}^*$ is nonnegative definite.
\item [\textbf{(iii)}] The integral operator $\mathbf{G}+\mathbf{G}^*$ is nonnegative definite.
\end{enumerate}
\end{remark}

\begin{definition}\label{def:column-wise}
Let $h\in L^2([0,T],\R^{N\times N})$ denote a matrix-valued function in one variable and $\mathbf{A}$ be a linear operator from $L^2([0,T],\R^N)$ into itself. Then we define the column-wise application of $\mathbf{A}$ to $h$ as
\be
\mathbf{A}\diamond h :[0,T]\to \R^{N\times N},\quad 
\big(\mathbf{A}\diamond h\big)(r)\vcentcolon = \big((\mathbf{A}h_{\bullet 1})(r),\dots,(\mathbf{A}h_{\bullet N})(r)\big),
\ee
where 
\be
h_{\bullet i}:[0,T]\to\R^N,\quad h_{\bullet i}(r)\vcentcolon =h(r)e_i,
\ee
is given by the $i$-th column of $h$ for $1\leq i\leq N$. 
\end{definition}
\begin{definition} \label{trunc-ker}
For a Volterra kernel $G\in L^2([0,T]^2,\R^{N\times N})$ and a fixed $t\in[0,T]$ define
\be
G_t(s,r)\vcentcolon=\mathds{1}_{\{r\geq t\}}G(s,r),
\ee
and let $\mathbf{G}_t$ denote the induced integral operator on $L^2([0,T],\R^N)$.
\end{definition}
\textbf{Notation.} 
Recall that $\Lambda$ was defined in \eqref{temp}. We denote by 
\be \label{lam-bar}
\bar{\Lambda}\vcentcolon=\frac{1}{2}(\Lambda+\Lambda^\top)\in\R^{N\times N}
\ee
the symmetric part of $\Lambda$, which is a positive definite matrix again.

Recall that $\Sigma$ and $\gamma$ were introduced in \eqref{eq:Ju}. 
For a Volterra kernel $G$ as before, we introduce the following Volterra kernels $F,K:[0,T]^2\to\R^{N\times N}$:
\be  \label{h-k-ker}
F(t,s)\vcentcolon = \gamma\Sigma\mathds{1}_{\{t>s\}}(T-t),  \quad K(t,s)\vcentcolon = G(t,s)+F(t,s).
\ee
We denote by $\E_t[\, \cdot\, ]$ the conditional expectation with respect to $\mathcal{F}_t$.

Finally, we define the following linear operator,
\be \label{D-def}
\mathbf{D}\vcentcolon=\mathbf{G}+\mathbf{G}^*+\mathbf{F}+\mathbf{F}^*+\bar{\Lambda}\mathbf{I},
\ee
where $\mathbf{I}$ is the identity operator on $L^2([0,T],\R^N)$, 
and the stochastic process
\be \label{g-def}
g_t\vcentcolon =\E_t[P_T-P_t]-\gamma (T-t)\Sigma X_0, \quad 0\leq t \leq T, 
\ee
where $X_0$ is given in \eqref{inv}. 

Recall that the class of admissible propagator matrices $\mathcal{G}$ was defined in \eqref{prop-class}. In the following theorem we derive the unique maximizer of the portfolio's revenue-risk functional \eqref{eq:Ju} in terms of resolvents. {Recall that the resolvent of a Volterra operator $\mathbf{B}$ is given by $\mathbf{R}^B=\mathbf{I}-(\mathbf{I}+\mathbf{B})^{-1}$.}

\begin{theorem}\label{thm:stochastic}
Let $G \in \mathcal G$. Then the unique maximizer $u^* \in \mathcal U$ of the objective functional $J(u)$ in \eqref{eq:Ju} is given by 
\be
u_t^*=\big((\mathbf{I}+\mathbf{B})^{-1}a\big)(t),\quad 0\leq t \leq T,
\ee
where
\be
a_t\vcentcolon=\bar{\Lambda}^{-1}\Big(g_t-\int_t^TK(r,t)^\top \big(\mathbf{D}_t^{-1}\mathds{1}_{\{t\leq\cdot\}}\E_t[g_\cdot]\big)(r)dr\Big),
\ee
\be
B(t,s)\vcentcolon=-\bar{\Lambda}^{-1}\Big(\mathds{1}_{\{s\leq t\}}\int_t^T K(r,t)^\top \big(\mathbf{D}_t^{-1}\diamond\mathds{1}_{\{t\leq \cdot\}}K(\cdot,s)\big)(r)dr-K(t,s)\Big),
\ee
\be
\mathbf{D}_t\vcentcolon=\bar{\Lambda}\mathbf{I}+\mathbf{K}_t+\mathbf{K}_t^*,
\ee
where $K$, $\mathbf{D}$ and $g$ are defined in \eqref{h-k-ker}, \eqref{D-def} and \eqref{g-def} respectively, and $\mathbf{B}$ is the integral operator induced by the kernel $B$.
\end{theorem}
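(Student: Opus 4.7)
The plan is to maximize $J$ via the first-order condition and then solve it explicitly using the stochastic Fredholm machinery of Section~\ref{sec-fred}.

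I would first verify that $J$ is strictly concave and coercive on $\mathcal{U}$. Strict concavity follows from positive definiteness of $\bar{\Lambda}$ on $\R^N$ (hence of the temporary-cost quadratic form on $L^2$), together with nonneg definiteness of the transient-cost operator $\mathbf{G}$ (because $G \in \mathcal{G}$) and of the risk functional induced by the covariance $\Sigma$. This immediately yields existence and uniqueness of a maximizer characterized by the vanishing of the Gateaux derivative in every direction $v \in \mathcal{U}$.

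Next, I would compute this derivative. Rewriting $(X_T^u)^\top P_T - \int_0^T u_t^\top P_t\, dt = X_0^\top P_T + \int_0^T u_t^\top(P_T - P_t)\, dt$ and using progressive measurability of $v$ together with the tower property produces the linear contribution $\E\bigl[\int_0^T v_t^\top \E_t[P_T - P_t]\, dt\bigr]$. Swapping the order of integration in the risk term $\int_0^T (X_t^u)^\top \Sigma X_t^u\, dt$ identifies it, up to a constant and a linear term in $X_0$, with $\frac{1}{2}\langle u, (\mathbf{F} + \mathbf{F}^*)u \rangle_{L^2}$ for $F$ as in \eqref{h-k-ker}. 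The transient and temporary impact terms contribute $\langle v, (\mathbf{G}+\mathbf{G}^*)u\rangle_{L^2}$ and $\langle v, \bar{\Lambda}u\rangle_{L^2}$, respectively. Setting the derivative to zero, noting that $(\mathbf{K}^*u)(t)$ is not $\mathcal{F}_t$-measurable while every other summand is, and applying the tower property inside the outer expectation yield the stochastic Fredholm equation of the second kind
\begin{equation}
\bar{\Lambda} u_t + \int_0^t K(t,s) u_s\, ds + \int_t^T K(s,t)^\top \E_t[u_s]\, ds = g_t, \qquad t \in [0,T],\ \text{a.s.},
\end{equation}
with $K$ and $g$ as defined in \eqref{h-k-ker} and \eqref{g-def}.

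Finally I would invoke the explicit resolvent formula of Section~\ref{sec-fred} to decouple the adapted and non-adapted parts in two stages. For fixed $t$, applying $\E_t$ to the equation at every time $r \in [t,T]$ yields, by the tower property, a \emph{deterministic} Fredholm equation of the second kind on $[t,T]$ for $r \mapsto \E_t[u_r]$, driven by $\E_t[g_\cdot]$ together with the adapted term $\int_0^t K(\cdot, s)u_s\, ds$. This equation is uniquely solvable because $\mathbf{D} = \bar{\Lambda}\mathbf{I} + (\mathbf{K} + \mathbf{K}^*)$ is coercive on $L^2([0,T],\R^N)$, with $\mathbf{K} + \mathbf{K}^*$ nonneg definite since $G$ and $F$ lie in $\mathcal{G}$; this expresses $\E_t[u_\cdot]$ through $\mathbf{D}_t^{-1}$ and, by the linearity in the source, separates into the $g$-dependent piece giving rise to $a_t$ and the $u$-dependent piece giving rise to the kernel $B(t,s)$. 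Substituting back at time $t$ reduces the original equation to an adapted Fredholm equation of the second kind $(\mathbf{I}+\mathbf{B})u = a$, whose solution is $u^*_t = ((\mathbf{I}+\mathbf{B})^{-1}a)(t)$.

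The main obstacle I expect is the rigorous justification of the invertibility of $\mathbf{D}_t$ and of $\mathbf{I} + \mathbf{B}$ on the appropriate $L^2$ spaces, since $\Lambda$ is only required to be positive definite (not symmetric) and $G$ is a general, possibly singular, Volterra kernel, and similarly the careful tracking of measurability and integrability through the two-stage conditional-expectation argument that couples the adapted equation for $u$ with the non-adapted integral $\E_t[(\mathbf{K}^*u)(t)]$. These technicalities should be controlled using the coercive estimate $\langle f, \mathbf{D}f\rangle_{L^2} \geq \lambda \|f\|_{L^2}^2$ coming from positive definiteness of $\bar{\Lambda}$ and $G, F \in \mathcal{G}$, and the square-integrability built into the admissibility class $\mathcal{U}$.
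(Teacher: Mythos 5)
Your proposal is correct and follows essentially the same route as the paper: strict concavity (Lemma \ref{lemma:concavity}), Gâteaux derivative and first-order condition yielding the coupled stochastic Fredholm equation \eqref{fred-foc} (Lemmas \ref{lemma:gateaux}--\ref{lemma:fred-foc}), then conditioning at each fixed $t$ to get a deterministic equation for $r\mapsto\E_t[u_r]$ solved via $\mathbf{D}_t^{-1}$, and substitution back to reduce to $(\mathbf{I}+\mathbf{B})u=a$, exactly as in Proposition \ref{prop:Fredholm}. The only small divergence is in the final technicality you flag: invertibility of $\mathbf{I}+\mathbf{B}$ is not obtained from a coercivity estimate but from the fact that $B$ is a Volterra kernel of type $L^2$, whose resolvent exists by the theory in \cite{gripenberg1990volterra}, while coercivity (via Lemmas \ref{lemma:Kt+Lt* pos semidefinite}--\ref{lemma:inverse operator}) is what handles $\mathbf{D}_t^{-1}$.
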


In the following corollary we derive the maximizer of \eqref{eq:Ju} for the case of a deterministic signal $A$ in \eqref{p-dec}. This assumption considerably simplifies the optimal portfolio choice and is of relevance for practical applications.  

\begin{corollary}\label{thm:deterministic}
Let $G \in \mathcal G$ and assume that $A$ in \eqref{p-dec} is deterministic. Then, the unique maximizer $u^*\in \mathcal U$ of the objective functional $J(u)$ in \eqref{eq:Ju} is given by 
\be
u_t^*=(\mathbf{D}^{-1}g)(t),\quad\quad 0\leq t \leq T,
\ee
where $\mathbf{D}$ is defined in \eqref{D-def} and $g$ in \eqref{g-def} becomes the deterministic function in $L^2([0,T],\R^N)$ given by
\be
g_t=A_T-A_t-\gamma (T-t)\Sigma X_0. \label{eq:gt}
\ee
\end{corollary}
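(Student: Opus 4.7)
The plan is to obtain the corollary as a direct specialization of Theorem \ref{thm:stochastic}. First, since $M$ is a continuous martingale and $A$ is deterministic,
\be
g_t = \E_t[P_T - P_t] - \gamma(T-t)\Sigma X_0 = A_T - A_t - \gamma(T-t)\Sigma X_0
\ee
becomes a deterministic function in $L^2([0,T],\R^N)$. The key observation is that when $g$ is deterministic, the first-order condition derived in the proof of Theorem \ref{thm:stochastic} collapses from a system of stochastic Fredholm equations into a single deterministic Fredholm equation of the second kind on $L^2([0,T],\R^N)$, namely $\mathbf{D} u = g$.

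Next, I would verify that the operator $\mathbf{D}$ defined in \eqref{D-def} is invertible on $L^2([0,T],\R^N)$. Since $G\in\mathcal{G}$, the operator $\mathbf{G}+\mathbf{G}^*$ is nonnegative definite by Remark \ref{rem:pos semidefinite}; an integration-by-parts computation on $\int_0^T (T-t)\,\tfrac{d}{dt}(\phi(t)^\top\Sigma\phi(t))\,dt$ with $\phi(t)=\int_0^t f(s)\,ds$ shows that $\mathbf{F}+\mathbf{F}^*$ is nonnegative definite as well; and $\bar{\Lambda}\mathbf{I}$ is strictly coercive because $\bar{\Lambda}$ is symmetric positive definite. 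Hence $\mathbf{D}$ is strictly positive definite, thus bijective with bounded inverse, and $\hat u := \mathbf{D}^{-1} g$ is a well-defined deterministic element of $L^2([0,T],\R^N) \subset \mathcal{U}$.

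To conclude, I would verify that $\hat u$ satisfies the first-order condition characterizing the unique maximizer of $J$. A direct G\^ateaux derivative computation along an arbitrary perturbation $v\in\mathcal{U}$, using Fubini to rearrange the cross terms coming from $D^u$ and from the risk penalty, leads to the variational identity
\be
\E\Big[\int_0^T v_t^\top\big(g_t - (\mathbf{D} u)(t)\big)\,dt\Big] = 0, \qquad \text{for all } v\in\mathcal{U}.
\ee
For $u = \hat u$ deterministic, this reduces to the pointwise identity $(\mathbf{D}\hat u)(t) = g_t$, which holds by construction. Strict concavity of $J$, inherited from the strict positive definiteness of $\mathbf{D}$, implies that $\hat u$ is the unique maximizer in $\mathcal{U}$, so $u^*_t = (\mathbf{D}^{-1}g)(t)$. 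The only nontrivial step is the Fubini bookkeeping that identifies the adjoint operators $\mathbf{G}^*$ and $\mathbf{F}^*$ inside the cross terms of the FOC; once the reformulation $\mathbf{D}u=g$ is available from the proof of Theorem \ref{thm:stochastic}, the corollary follows with essentially no extra work.
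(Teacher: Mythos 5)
Your proposal is correct and follows essentially the same route as the paper: reduce the first-order condition to the deterministic Fredholm equation $\mathbf{D}u=g$, establish invertibility of $\mathbf{D}$ from the nonnegative definiteness of $\mathbf{G}+\mathbf{G}^*$ and $\mathbf{F}+\mathbf{F}^*$ together with the coercivity of $\bar{\Lambda}\mathbf{I}$, and conclude via concavity. The only cosmetic differences are that you verify nonnegative definiteness of $\mathbf{F}+\mathbf{F}^*$ by integration by parts rather than the paper's indicator-representation in Lemma \ref{lemma:F+F^* nonnegative def}, and you close uniqueness by strict concavity of $J$ directly, whereas the paper routes it through Theorem \ref{thm:stochastic} and the uniqueness statement of Proposition \ref{prop:Fredholm}; both are valid.
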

The proofs of Theorem \ref{thm:stochastic} and Corollary \ref{thm:deterministic} are given in Section \ref{sec-proof-strat}. 

In the following remarks we give additional details regarding the contribution and implementation of our results.

\begin{remark}\label{rem:terminal penalty}
If the investor's objective is to optimally liquidate the portfolio, an additional terminal penalty on the remaining inventory can be added to \eqref{eq:Ju}. Specifically, we define the revenue-risk functional as follows:
\be
 \bar J(u) = \mathbb{E} \bigg [\int_0^T \hspace{-2mm}-u_t^\top(P_t + D_t^u +\frac{1}{2}\Lambda u_t)dt+(X_T^u)^\top P_T 
-\frac{\gamma}{2}\int_0^T(X_t^u)^\top\Sigma X_t^udt-\frac{\varrho}{2} (X_T^u)^\top\Pi X_T^u\bigg],
 \ee
where in the last term of $\bar J(u)$, $\varrho$ is a positive constant and $\Pi\in\R^{N\times N}$ is a symmetric nonnegative definite matrix. 
In this case the result of Theorem \ref{thm:stochastic} applies with a slight modification by setting:
\be
F(t,s)\vcentcolon=\mathds{1}_{\{t> s\}}(\gamma(T-t)\Sigma+\varrho\Pi),
\ee
and  
\be
g_t\vcentcolon=\E_t[P_T-P_t]-(\gamma (T-t)\Sigma+\varrho\Pi) X_0. \label{eq:gtwithvarrho}
\ee
\end{remark}

\begin{remark}\label{rmk:numerical}
In Section \ref{sec-numerics} we provide  a numerical implementation for the optimal strategy which is derived in Theorem \ref{thm:stochastic} and Corollary~\ref{thm:deterministic}. Our implementation includes examples of optimal portfolio choice and of optimal liquidation for various choices of signals and propagator matrices. 
\end{remark} 

\begin{remark}\label{rmk:contribution1}
Theorem \ref{thm:stochastic} extends the results of Proposition 2 in \cite{garleanu2016dynamic}, as it allows to solve the problem for general propagator matrices as in \eqref{prop-class}, which include the exponential decay kernels used in \cite{garleanu2016dynamic}, as well as the power-law kernels from \citep{benzaquen2017dissecting,mastromatteo2017trading} and other specific examples outlined earlier in this section. We also allow for general semimartingale signals, which substantially generalize the integrated Ornstein-Uhlenbeck signals used in \cite{garleanu2016dynamic}. 
Theorem \ref{thm:stochastic} also generalizes the results of \cite{alfonsi2016multivariate} in the following directions: First, we solve the problem in continuous time, which is compatible with the high-frequency trading timescale. We also include stochastic signals and risk factors in the objective functional \eqref{eq:Ju}, which are crucial for portfolio choice problems. This turns the portfolio choice problem from being matrix-valued and deterministic as in \cite{alfonsi2016multivariate} into an operator-valued stochastic control problem. We also allow for a general Volterra propagator matrix $G(t,s)$  instead of the convolution propagator matrix $G(t)$ in \cite{alfonsi2016multivariate}, and provide explicit solutions, in contrast to the first-order condition derived in Theorem 3 of \cite{alfonsi2016multivariate}, which is solely solved for exponential propagator matrices (see Example 2 therein). In particular, in the proof of Theorem \ref{thm:stochastic}, we characterize the first-order condition corresponding to the objective functional \eqref{eq:Ju} in terms of a system of coupled stochastic Fredholm equations of the second kind with both forward and backward components. In Section \ref{sec-fred} we develop a method for solving this system explicitly, which is a central ingredient for the proof of Theorem \ref{thm:stochastic}. 
 \end{remark}

\begin{remark}\label{rem:generalization of class of admissible kernels}
In the single-asset case where $N=1$, Theorem~\ref{thm:stochastic}  generalizes the main results of both \cite{abijaber2022optimal} and \cite{abijaber2023equilibrium} to a larger class of admissible kernels $\mathcal{G}$. Indeed, the proof of Theorem \ref{thm:stochastic} relies heavily on the theory of kernels of type $L^2$ and their resolvents. See in particular the proofs of  Proposition \ref{prop:Fredholm} and Lemma \ref{lemma:integrability}. The method of the proof allows us to extend the class of admissible kernels $\mathcal{G}$ such that only square-integrability of the kernels is needed (see \eqref{prop-class} and \eqref{l-2-ker-def}), instead of the more restrictive assumption that
\be
\sup\limits_{t\leq T}\int_0^T |G(t,s)|^2ds + \sup\limits_{s\leq T}\int_0^T |G(t,s)|^2dt <\infty,
\ee
which was assumed by \citet{abijaber2022optimal} and \citet{abijaber2023equilibrium} who studied the single-asset case. In particular our new approach can be applied to prove Proposition 4.5 and Lemma 7.1 in \cite{abijaber2022optimal} as well as Proposition 5.1 and Lemma 5.5 in \cite{abijaber2023equilibrium} under the more general integrability assumption on the admissible kernels $G$ in $\mathcal{G}$ given by
\be
\int_0^T\int_0^T |G(t,s)|^2dsdt<\infty.
\ee
\end{remark}

\subsection{Results on the absence of price manipulation}\label{subsec:manipulation}

In this section we derive conditions on the kernel $G\in L^2([0,T]^2,\R^{N\times N})$ that prevent price manipulation.  
For this, given a strategy $u \in \mathcal U$, we define the associated costs caused by transient impact as
\begin{align}\label{eq:costuGu}
\mathcal{C}(u) :=\int_0^T\int_0^T u(t)^\top G(t,s) u(s)dsdt. 
\end{align}
Note that the performance functional $J(u)$ in \eqref{eq:Ju} reduces to $-\mathcal{C}(u)$ when setting the risk aversion $\gamma$ and temporary price impact matrix $\Lambda$ therein to $0$, choosing a martingale unaffected price process $P=M$ and optimizing over strategies with a fuel constraint ($X_T =0$). We refer to Section 2.1 of \cite{lehalle2019incorporating} for the derivation. 
\begin{remark} Definition \eqref{eq:costuGu} aligns with the definition of  transient impact costs in Section 2.5 of \cite{alfonsi2016multivariate}, which is itself an extension of definition (2.4) in \cite{GSS} to the multi-asset case. In contrast to \cite{alfonsi2016multivariate,GSS} - where admissible inventories $(X_t)_{t\in [0,T]}$ are taken to be left-continuous, adapted, bounded processes whose components have finite and $\P$-a.s.~bounded total variation - we restrict our attention to the subclass of absolutely continuous inventories as defined in \eqref{inv}, so that $dX_t = u(t)dt$. While the framework of \cite{alfonsi2016multivariate,GSS} allows for more general inventory processes, which in particular can have jumps, the special case of absolutely continuous inventories is more tractable and aids in deriving the explicit solution of the portfolio choice problem (see Theorem \ref{thm:stochastic}).
\end{remark}
A propagator matrix $G$ precludes price manipulation in the sense of \citet{huberman2004price}, if $\E\left[\mathcal{C}(u)\right] \geq 0$ for all strategies $u\in\mathcal{U}$ with $\int_0^Tu_tdt=0$ (so-called round trip strategies). This means, in the absence of signals, that any portfolio starting and terminating with zero inventory $(X_0=X_T=0)$ cannot create any profit, or equivalently, negative transient impact costs.
We observe that the expected transient impact costs $\E\left[\mathcal{C}(u)\right]$ from \eqref{eq:costuGu} are nonnegative for all strategies $u\in\mathcal{U}$ if and only if $G$ is nonnegative definite as in \eqref{non-neg}.
Hence, the condition that $G$ is contained in the class $\mathcal{G}$ from \eqref{prop-class} (and thus nonnegative definite) rules out the possibility of price manipulation in the sense of \cite{huberman2004price}. 
Interestingly enough, the nonnegative definiteness of $G$ is also needed in order to prove that the objective functional \eqref{eq:Ju} is concave (see Lemma \ref{lemma:concavity}), hence it is essential for the proof of Theorem \ref{thm:stochastic}. We refer to Section 2 of \cite{alfonsi2016multivariate} for the study of price manipulation in discrete time. 

Note that the  nonnegative definiteness condition \eqref{non-neg} is not straightforward to verify. For discrete-time formulations and the continuous-time formulation from \mbox{\cite{alfonsi2016multivariate,GSS}} (which permits jumps in the inventory processes), Bochner’s theorem \cite{bochner1932vorlesungen} provides a complete characterization of nonnegative definiteness for convolution kernels $G(t,s): =\mathds{1}_{\{t\geq s\}}H(t-s)$ with $H:[0,T]\to\R^{N\times N}$ via Fourier transforms of Borel measures (see Theorem 1 in \cite{alfonsi2016multivariate} and Proposition 2.6 in \cite{GSS}). However, because we restrict our strategies to be absolutely continuous, we are outside the usual setting of Bochner’s theorem, which relies on the classical notion of positive definite functions. Moreover, in practice, one often prefers simpler explicit conditions on $H$ that ensure its nonnegative definiteness.  

The main result of this section derives sufficient conditions for a large class of convolution kernels to be nonnegative definite (see Theorem \ref{thm:convolution}). For comparison, we recall that in the single-asset case, convolution kernels of the above form are nonnegative definite as in \eqref{non-neg} whenever $H$ is nonnegative, nonincreasing, and convex (see Example 2.7 in \cite{GSS}). 
The following theorem is an extension of this result to the multi-asset case (under the assumption of absolutely continuous inventory processes). Note that we impose only one further assumption, namely symmetry, and recover exactly the conditions from \cite{GSS} when setting $N=1$.

\begin{theorem}\label{thm:convolution}
Let $H:[0,T]\to\R^{N\times N}$ be a convolution kernel so that the associated Volterra kernel $G(t,s)\vcentcolon =\mathds{1}_{\{t\geq s\}}H(t-s)$ is in $L^2([0,T]^2,\R^{N\times N})$. Assume that $H$ satisfies the following assumptions: 
\begin{itemize}
    \item nonincreasing, i.e.~the function $t\mapsto x^\top H(t) x$ is nonincreasing on $(0,T)$ for every $x\in\R^N$,
    \item convex, i.e.~the function $t\mapsto x^\top H(t) x$ is convex on $(0,T)$ for every $x\in\R^N$,
    \item nonnegative, i.e.~the matrix $H(t)$ is nonnegative definite for any $t\in (0,T)$,
    \item symmetric, i.e.~the matrix  $H(t)$ is symmetric for any $t\in (0,T)$.
\end{itemize}
Then $H$ is nonnegative definite and $G$ is in the class of admissible kernels $\mathcal{G}$, i.e.
\be
\int_0^T\int_0^t f(t)^\top H(t-s)f(s)dsdt=\langle f,\mathbf{G}f\rangle_{L^2} \geq 0, \quad \textrm{for all }f\in L^2([0,T],\R^N).
\ee
\end{theorem}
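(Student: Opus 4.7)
My plan is to first exploit the symmetry of $H$ to rewrite the bilinear form in terms of $H(|t-s|)$, then decompose $H$ on $[0,T]$ into three elementary pieces via the standard integral representation of convex nonincreasing nonnegative functions (lifted to matrices by polarization), and finally verify that each piece induces a nonnegative-definite kernel. By swapping $s\leftrightarrow t$ and using $H(t)^\top=H(t)$, one first checks that
\begin{equation}
\int_0^T\!\!\int_0^t f(t)^\top H(t-s)f(s)\,ds\,dt = \tfrac12\int_0^T\!\!\int_0^T f(t)^\top H(|t-s|)f(s)\,ds\,dt,
\end{equation}
so it suffices to prove the symmetric kernel $(t,s)\mapsto H(|t-s|)$ is nonnegative definite on $[0,T]^2$.

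For each $x\in\R^N$, the scalar $h_x(t)\vcentcolon= x^\top H(t)x$ is nonnegative, nonincreasing, and convex on $(0,T]$. Integrating its distributional second derivative twice yields the representation $h_x(t)=h_x(T)+c_x(T-t)+\int_{(0,T]}(r-t)_+\,d\nu_x(r)$ with $c_x\ge 0$ and $\nu_x$ a positive measure. Because $H$ is Löwner-convex (the quadratic form statement is equivalent to Löwner convexity of the symmetric matrix-valued map $H$) and symmetric, polarization lifts this to a matrix-valued representation
\begin{equation}
H(t)=H(T)+B(T-t)+\int_{(0,T]}(r-t)_+\,dM(r),
\end{equation}
where $B=-H'(T^-)\in\R^{N\times N}$ is symmetric nonnegative definite (from Löwner monotonicity) and $M$ is a symmetric matrix-valued measure on $(0,T]$ whose values on Borel sets are symmetric nonnegative definite (from Löwner convexity, i.e.\ the second distributional derivative of $H$).

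The decomposition reduces the problem to three elementary kernels on $[0,T]^2$: the constant kernel $H(T)$, the linear kernel $B(T-|t-s|)$, and the tent kernel $(r-|t-s|)_+ A$ for each $r\in(0,T]$ and $A\succeq 0$. The constant case is immediate since $\int\!\!\int f(t)^\top H(T)f(s)\,ds\,dt=\bigl(\int_0^T f(t)\,dt\bigr)^\top H(T)\bigl(\int_0^T f(t)\,dt\bigr)\ge 0$. For the linear case, use the identity $T-|t-s|=\min(t,s)+\min(T-t,T-s)$, both summands being standard positive-definite (Brownian) covariances, and diagonalize $B=UDU^\top$, reducing to a nonnegative sum of scalar quadratic forms. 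For the tent case, observe that $(r-|t-s|)_+=\int_\R \mathds{1}_{[t-r/2,t+r/2]}(u)\mathds{1}_{[s-r/2,s+r/2]}(u)\,du$, which is a Mercer-type representation and hence positive definite; another diagonalization of $A$ completes this step. Integrating the tent-kernel inequality against $dM$ and summing the three contributions gives the claim.

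The main obstacle is a rigorous justification of the matrix-valued decomposition in the second step: we need to show that the distributional second derivative of $H$ is a well-defined $\R^{N\times N}$-valued Stieltjes-type measure with symmetric nonnegative definite values, even though $H$ is a priori only in $L^2$. This is handled by observing that the fiberwise assumptions (each $h_x$ nonnegative, nonincreasing, convex) force $H$ to coincide almost everywhere with a Löwner-convex, Löwner-nonincreasing matrix-valued function which admits one-sided derivatives and a distributional second derivative of the required type; polarization then gives the matrix measure $M$ from the scalar measures $\nu_x$ in a bilinear fashion. The verification that each $\nu_x=x^\top M(\cdot)x$ reduces (after the polarization identity) to the uniqueness of the scalar second derivative together with symmetry of $H$.
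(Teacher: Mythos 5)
Your argument is correct in substance, but it takes a genuinely different route from the paper. The paper extends $H$ to $[0,\infty)$, invokes the discrete-time result of Alfonsi, Kl\"ock and Schied to obtain nonnegative definiteness of the mirrored kernel on finite grids, upgrades this to the continuous quadratic form via a matrix-valued Mercer expansion (Proposition \ref{prop:pos semidef kernels}), and handles singular kernels by the shift approximation $H_m(t)=\overline{H}(t+m^{-1})$ together with dominated convergence. You instead argue directly through a P\'olya-type integral representation: after symmetrizing to the kernel $H(|t-s|)$, you write $H(t)=H(T)+B(T-t)+\int_{(0,T]}(r-t)_+\,dM(r)$ with $B$ symmetric nonnegative definite and $M$ a nonnegative-definite-matrix-valued measure (obtained entrywise from the second-derivative measures of the difference-of-convex functions $H_{ij}$ via polarization and bilinearity, i.e. $x^\top M(\cdot)x=\nu_x$), and you verify each building block by an explicit Gram representation: the constant kernel trivially, the linear kernel via $T-|t-s|=\min(t,s)+\min(T-t,T-s)$, and the tent kernel via the overlap-of-intervals identity, with a diagonalization of the matrix weight in each case. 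This is self-contained (no appeal to the discrete-time theorem or to Mercer-type machinery) and treats the singularity at $t=0$ inside the same framework; what the paper's route buys in exchange is the auxiliary Proposition \ref{prop:pos semidef kernels} (discrete implies continuous nonnegative definiteness for general, not necessarily convolution, Volterra kernels), which is of independent interest and reusable. Two details you should make explicit to close your sketch: first, $H(T)$ and $B=-H'(T^-)$ must be read as left limits/one-sided derivatives, which exist by monotonicity and convexity of the quadratic forms and are symmetric nonnegative definite; second, the final interchange of the $ds\,dt$ integration with the integration against $dM(r)$ needs absolute integrability even though $M$ may have infinite mass near $r=0$ (e.g.\ for the singular power-law kernel). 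This does hold: the tent integral at lag $u=|t-s|$ is dominated by $x^\top H(u)x$, and $G\in L^2([0,T]^2)$ combined with the monotonicity of the diagonal entries near $T$ gives $H\in L^1([0,T])$, so Young's convolution inequality yields $\int_0^T\int_0^T \|f(t)\|\,\|H(|t-s|)\|\,\|f(s)\|\,ds\,dt<\infty$ and Fubini applies; with that added, your proof is complete.
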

The proof of Theorem \ref{thm:convolution} is given in Section \ref{sec-pf-mani}.
\begin{remark}
The convexity of $H$ implies that the function $t\mapsto x^\top H(t) x$ is continuous on $(0,T)$ for any $x\in\R^N$. Thus, it follows from the symmetry of $H$ that it is continuous on $(0,T)$. Nevertheless, Theorem \ref{thm:convolution} permits kernels $H$ which have a singularity at $t=0$ such as the power-law propagator in Example \ref{ex:propagators}(ii). 
\end{remark}

\begin{remark} 
Theorem \ref{thm:convolution} generalizes the result of Theorem 2 of  \cite{alfonsi2016multivariate} from discrete time grids to a continuous domain, which is of independent interest to the theory of nonnegative definite Volterra kernels. One of the main ingredients of the proof is Proposition \ref{prop:pos semidef kernels}, which states that a matrix-valued Volterra kernel which is nonnegative definite in the discrete sense (see \eqref{eq:positive defnite kernel}) is also nonnegative definite on a continuous domain, i.e.~it satisfies \eqref{non-neg}. Proposition \ref{prop:pos semidef kernels} then allows us to generalize Theorem 2 from \cite{alfonsi2016multivariate}, which is a discrete-time result proved via methods related to Bochner's theorem, to our continuous-time setting, where Bochner's theorem doesn't  apply directly. Thereby, we can establish sufficient conditions under exactly the same assumptions as in the discrete-time framework. 
\end{remark}

The following corollary rules out the possibility of price manipulation in the sense of \cite{huberman2004price} for a specific class of factorized convolution propagator matrices introduced by \citet{benzaquen2017dissecting} and \citet{mastromatteo2017trading}. 
\begin{corollary}\label{cor:product of matrix and real function}
Let $C\in\R^{N\times N}$ be a symmetric nonnegative definite matrix and $\phi\in L^2([0,T],\R)$ be nonnegative, nonincreasing and convex on $(0,T)$. Then the convolution kernel
\be \label{h-cfm}
H:[0,T]\to\R^{N\times N},\quad H(t)\vcentcolon=C\phi(t)
\ee
is nonnegative definite and thus its associated Volterra kernel is in $\mathcal{G}$.
\end{corollary}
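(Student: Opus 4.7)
The plan is to apply Theorem \ref{thm:convolution} directly, by verifying its four hypotheses for the factorized kernel $H(t)=C\phi(t)$ together with the required $L^2$ integrability of the associated Volterra kernel $G(t,s)=\mathds 1_{\{t\ge s\}}H(t-s)$. All checks reduce to elementary consequences of the fact that a scalar multiple of a nonnegative definite matrix by a nonnegative scalar inherits positivity, monotonicity, and convexity properties from its two factors.

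First, I would confirm the square-integrability of $G$. Since $\|H(t)\|=\|C\|\,|\phi(t)|$ in Frobenius norm, Fubini and a change of variables give
\be
\int_0^T\!\!\int_0^T\|G(t,s)\|^2\,ds\,dt=\|C\|^2\int_0^T(T-u)\phi(u)^2\,du\le T\|C\|^2\|\phi\|_{L^2}^2<\infty,
\ee
so $G\in L^2([0,T]^2,\R^{N\times N})$, as required by the hypothesis of Theorem \ref{thm:convolution}.

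Next, I would verify the four structural properties of $H$. For any $x\in\R^N$, set $c_x\vcentcolon= x^\top Cx$; by nonnegative definiteness of $C$, we have $c_x\ge 0$. Then $x^\top H(t)x=c_x\,\phi(t)$, which as a nonnegative scalar multiple of $\phi$ is nonincreasing and convex on $(0,T)$, establishing the nonincreasing and convex conditions on $H$. For the nonnegativity of $H(t)$ as a matrix, note that for each fixed $t\in(0,T)$ we have $\phi(t)\ge 0$, so $H(t)=\phi(t)\,C$ is the product of a nonnegative scalar and a nonnegative definite matrix and is therefore nonnegative definite. Symmetry of $H(t)$ is immediate from symmetry of $C$.

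With all four hypotheses verified, Theorem \ref{thm:convolution} applies and yields that $H$ is nonnegative definite and $G\in\mathcal G$. There is no substantive obstacle here: the corollary is essentially a packaging statement that exploits the separability of $H(t)=C\phi(t)$ to reduce the multi-asset conditions of Theorem \ref{thm:convolution} to the scalar properties of $\phi$ combined with the matrix properties of $C$.
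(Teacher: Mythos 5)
Your proof is correct and follows essentially the same route as the paper: verify the four hypotheses of Theorem \ref{thm:convolution} by reducing them to the scalar properties of $\phi$ and the matrix properties of $C$, plus the $L^2$ bound on the associated Volterra kernel. The only cosmetic difference is that you check square-integrability by a direct Fubini/change-of-variables computation, whereas the paper cites a convolution-integrability result from \cite{gripenberg1990volterra}; both are fine.
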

The proof of Corollary \ref{cor:product of matrix and real function} is postponed to Section \ref{sec-pf-mani}.
\begin{remark}\label{R:crosskernelBouchaud}
From Corollary \ref{cor:product of matrix and real function} it follows that our model accommodates the factorized model introduced in \cite{benzaquen2017dissecting, mastromatteo2017trading}, where empirical evidence for the explanatory relevance of cross-impact for the cross-correlation of US stocks was given. The authors found that as a good and efficient approximation the price impact kernel can be written as $G(t,s)=C\phi(t-s)$ for a symmetric nonnegative definite matrix $C\in\R^{N\times N}$ and a power-law function
\be
\phi(t)=(1+t/t_0)^{-\beta} \mathds{1}_{\{t\geq 0\}},
\ee
where $\beta\in(0,1)$ and $t_0>0$ are estimated model parameters. The main conclusions of their papers are also briefly summarized in Chapter~14.5.3 of \cite{bouchaud_bonart_donier_gould_2018}.
\end{remark}

\section{Numerical Illustrations} \label{sec-numerics}

\subsection{Optimal liquidation: what is the influence of cross-impact?}

\noindent\textbf{Two assets.} We first explore the influence of transient cross-impact for two assets in a liquidation problem with time horizon $T=10$, where Asset 1 has $10$ shares initial inventory and Asset 2 starts from zero inventory, i.e.~$X_0=(10,0)^\top$. Our objective is to elucidate the influence of cross-impact on optimal trading speeds and inventories and its interplay with  trading signals (also called alphas). To isolate these effects, we consider a penalty on terminal inventory using the parameters $(\varrho, \Pi)$ of the form
\begin{align}\label{eq:varrho}
\varrho = {4}, \quad \Pi = \left(\begin{matrix} {1} & {0} \\ {0} & {1} \end{matrix}\right),
\end{align}
and we set $\gamma=0$, removing the risk component in the objective functional $\bar J$ in Remark \ref{rem:terminal penalty} originating from the covariance matrix between the assets.

We consider temporary self-impact but no  temporary cross-impact through the  $2\times 2$-matrix
 \be \label{exmp-lam}
  \Lambda = \left(\begin{matrix} {0.03} & 0 \\ 0 & {0.03} \end{matrix}\right).
  \ee
The  transient price impact is given by the parsimonious propagator from \cite{benzaquen2017dissecting,mastromatteo2017trading} (see Corollary \ref{cor:product of matrix and real function}) of the form
    \begin{align}\label{eq:Aphi}
        G(t,s) = C \phi(t-s) \mathds 1_{\{s<t\}},
    \end{align}
      with a deterministic symmetric  $2\times 2$-matrix $C$ that will be taken to be either diagonal or non-diagonal to study the impact of cross-impact (see \eqref{eq:Adiag}-\eqref{eq:Afull} below), and 
       a scalar convolution kernel $\phi$ which will be either set to $\phi^{\text{zero}}(t) = 0$, exponential  $\phi^{\text{exp}}(t)=e^{-\rho t }$ with $\rho = 0.5$ or fractional  $\phi^{\text{frac}}(t)=t^{-\alpha}$ with $\alpha =0.25$.

We analyze four  figures to illustrate our findings:

\begin{itemize}
\item Figure \ref{fig:withoutsignalwithoutcross}: No trading signal, no cross-impact with 
\begin{align}\label{eq:Adiag}
    C = \left(\begin{matrix} {0.06} & {0} \\ {0} & {0.06} \end{matrix}\right).
\end{align}
\item Figure \ref{fig:withoutsignalwithcross}: No trading signal, with cross-impact with
\begin{align}\label{eq:Afull}
    C = \left(\begin{matrix} {0.06} & {0.05} \\ {0.05} & {0.06} \end{matrix}\right).
\end{align}

\item Figures~\ref{fig:withsignalcross}-\ref{fig:withsignalcross2}:   With `noisy'  trading signals of different decay $\beta$ as follows:
\begin{align}\label{eq:signalP}
    dP_t = I_t dt + dM_t,
\end{align}
    with $M$ a martingale and 
    \begin{align}\label{eq:signalI}
         dI_t = -\beta I_tdt + dW_t, 
    \end{align}
    where $W$ is a Brownian motion and 
    \begin{align}\label{eq:beta}
       I_0 = \left(\begin{matrix} {0.5 } \\ {0.5} \end{matrix}\right) \quad  \text{ and } \quad   \beta  =\left(\begin{matrix} \beta_1 & {0} \\ {0} & \beta_2\end{matrix}\right). 
    \end{align}
We consider the cases without and with cross-impact given in \eqref{eq:Adiag} and \eqref{eq:Afull} (respectively).
\end{itemize}

Our conclusions are summarized as follows:
\begin{enumerate}
\item  In the absence of trading signals (see Figures~\ref{fig:withoutsignalwithoutcross} and \ref{fig:withoutsignalwithcross}), cross-impact induces:
  \begin{itemize}
  \item 
 A `transaction-triggered' round trip in Asset 2 on the bottom panels of Figure~\ref{fig:withoutsignalwithcross} which was absent from Figure~\ref{fig:withoutsignalwithoutcross}. Initiating fast liquidation of Asset 1 at time $t=0^+$  triggers a drop in Asset 2's price due to $C_{12}>0$, prompting a `transaction-triggered' shorting signal on Asset 2. Therefore, the optimal strategy starts by shorting Asset 2. After a while, as the selling speed of Asset 1 slows down, the strategy starts buying Asset 2 at a steady rate. This strategic move aims at increasing the price of Asset 1 undergoing liquidation, using the positive cross-impact term $C_{21}>0$. This steady buying even results in a long position in Asset 2 which is then liquidated quickly near the maturity.  {The strategy is more aggressive for the exponential kernel compared to the fractional one.}
 \item A more aggressive trading in Asset 1 even in the presence of transient impact. Inventories with transient impact align closer to those without transient impact on the top panels of Figure~\ref{fig:withoutsignalwithcross} than on those of Figure~\ref{fig:withoutsignalwithoutcross}.
  \end{itemize}
   \item 
     In the presence of positive trading signals on both assets with different alpha decays (see Figures~\ref{fig:withsignalcross} and \ref{fig:withsignalcross2}): 

     \begin{itemize}
         \item \textbf{Without cross-impact (solid lines):} In both figures, the positive signal for Asset 1 prompts from the start a purchasing of Asset 1 even in the presence of transient impact. The `buy' strategy is more aggressive in the absence of transient impact. Similarly, the positive signal of Asset 2 triggers a `buy round trip' on Asset 2.
         \item \textbf{With cross-impact (dashed lines)}: 
               In Figure~\ref{fig:withsignalcross} with $(\beta_1,\beta_2)=(0.9,0.3)$ in \eqref{eq:beta} we observe that contrary to the no cross-impact case, there is almost no buying of Asset 1 at $t=0^+$, even in the presence of the positive signal for Asset 1. Indeed, since the `buy' signal for Asset 1 decays more rapidly compared to that of Asset 2, recall the mean-reversion coefficients $\beta$ in \eqref{eq:beta},  the optimal strategy starts selling Asset 1 quickly (and even shorts it) in order to leverage the cross-impact that will decrease the price of Asset 2 that has a more persistent positive signal. Hence, this strategy allows profiting from both the persistent signal on Asset 2  and from the cross-impact effect. More of Asset 2 is bought compared with the case without cross-impact.  
               
In Figure~\ref{fig:withsignalcross2} with $(\beta_1,\beta_2)=(0.3,0.9)$ in \eqref{eq:beta} we observe the opposite effect, since now Asset 1 has the more persistent signal.
          \end{itemize}
 \end{enumerate}
 
	\begin{figure}[htb]
		\includegraphics[scale=.6,center]{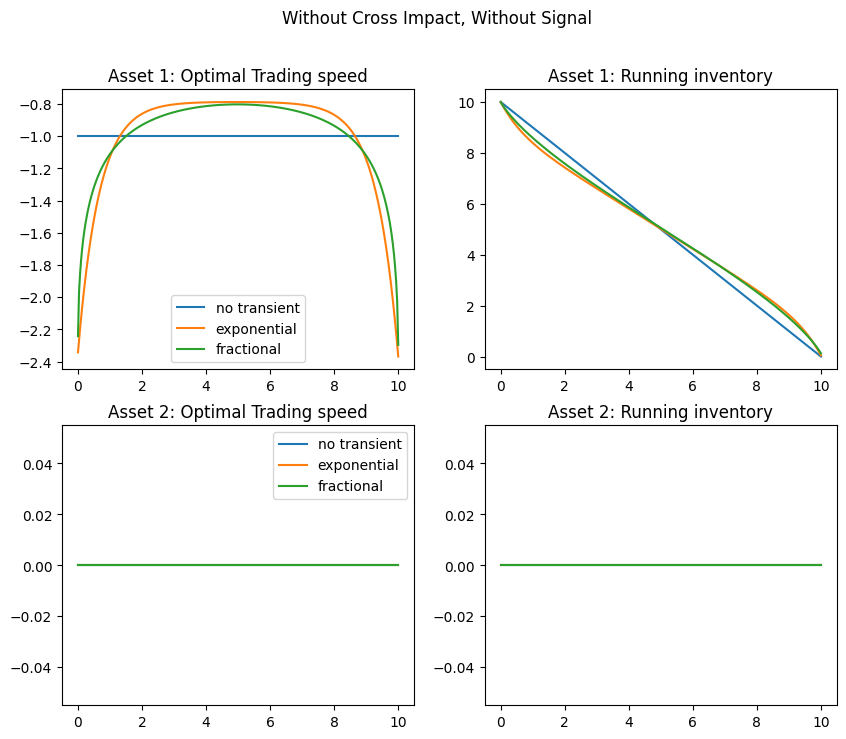}
				\vspace{-1cm}
	\captionof{figure}{Impact of different kernels on the optimal trading speed and inventory of two assets in the absence of a signal \textbf{without cross-impact} \eqref{eq:Adiag} with the three impact kernels $\phi^{\text{zero}}$ (blue), $\phi^{\text{exp}}$  (yellow) and $\phi^{\text{frac}}$ (green). }
\label{fig:withoutsignalwithoutcross}
	\end{figure}

	\begin{figure}[!htb]
		\includegraphics[scale=.63,center]{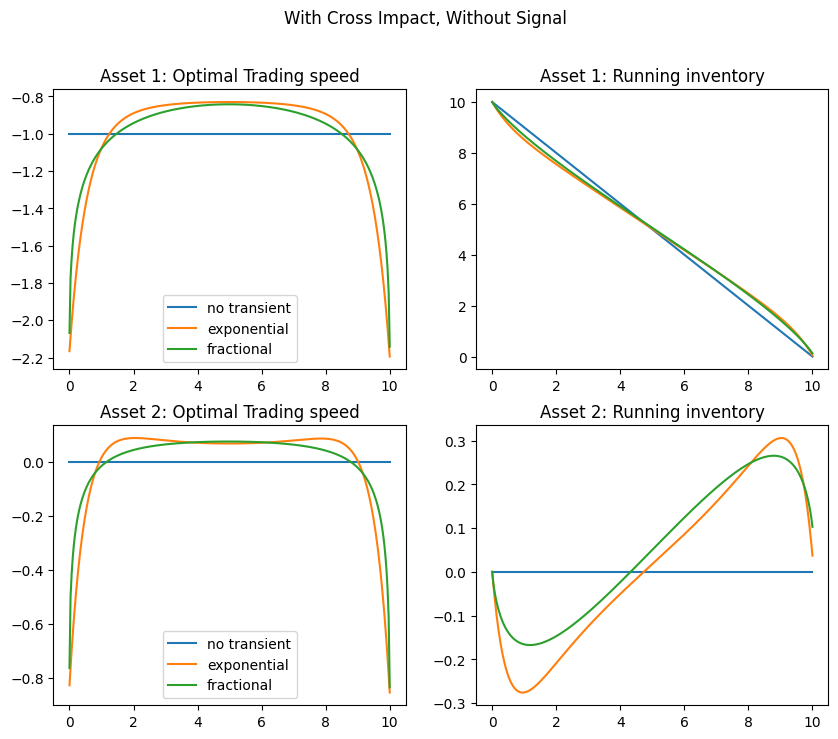}
				\vspace{-1cm}
		\captionof{figure}{Impact of different kernels on the optimal trading speed and inventory of two assets in the absence of a signal \textbf{with cross-impact}  \eqref{eq:Afull} with the three impact kernels $\phi^{\text{zero}}$ (blue), $\phi^{\text{exp}}$  (yellow) and $\phi^{\text{frac}}$ (green). }
\label{fig:withoutsignalwithcross}
	\end{figure}

\begin{figure}[!htb]
		\includegraphics[scale=.6,center]{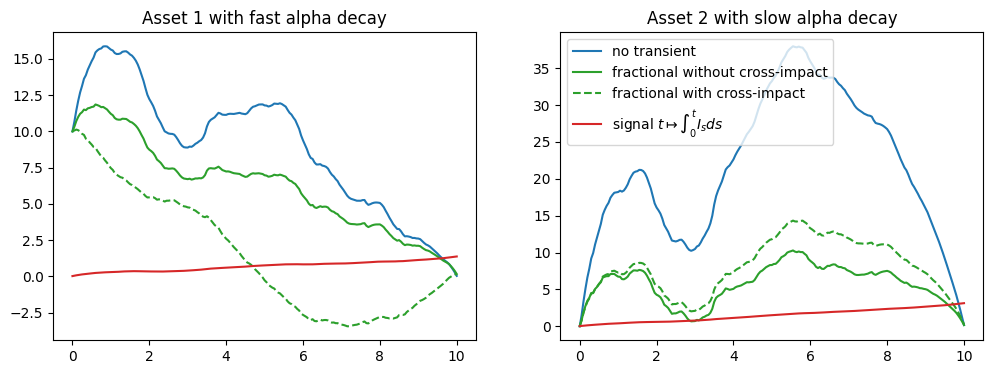}
				\vspace{-1cm}
	\captionof{figure}{Impact of the fractional transient cross-impact  on the optimal inventory of two assets in the presence of buy trading signals with different alpha decay on each asset given by $(\beta_1,\beta_2)=(0.9,0.3)$ in \eqref{eq:beta} \textbf{without (solid) and with (dashed) cross-impact} with the impact kernels $\phi^{\text{zero}}$ (blue) and $\phi^{\text{frac}}$ (green). }
\label{fig:withsignalcross}
	\end{figure}

\begin{figure}[!htb]
		\includegraphics[scale=.6,center]{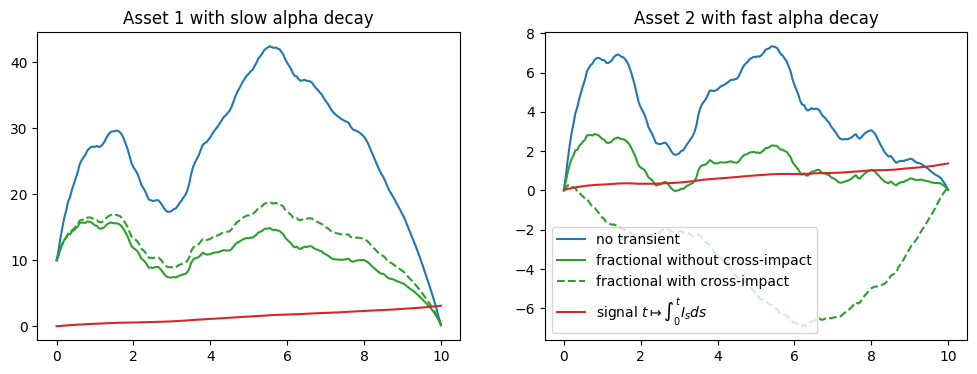}
				\vspace{-1cm}
	\captionof{figure}{Impact of the fractional transient cross-impact  on the optimal inventory of two assets in the presence of buy trading signals with different alpha decay on each asset  given by $(\beta_1,\beta_2)=(0.3,0.9)$ in \eqref{eq:beta} \textbf{without (solid) and with (dashed) cross-impact} with the impact kernels $\phi^{\text{zero}}$ (blue) and $\phi^{\text{frac}}$ (green). }
\label{fig:withsignalcross2} 
	\end{figure}

\textbf{Three assets.} We now extend Figure~\ref{fig:withoutsignalwithcross} to a three-asset setting, keeping the same time horizon and (unless stated otherwise) the same parameter values as above. We consider initial inventory $\smash{X_0 = (10,0,0)^\top}$, and
\begin{align}
\varrho=4,\quad \Pi = \begin{pmatrix}
1 & 0 & 0 \\
0 & 1 & 0 \\
0 & 0 & 1
\end{pmatrix},
\quad \Lambda = \begin{pmatrix}
0.03 & 0 & 0 \\
0 & 0.03 & 0 \\
0 & 0 & 0.03
\end{pmatrix}.
\end{align}
The transient impact again takes the form \eqref{eq:Aphi} with a symmetric $3\times 3$-matrix $C$. To highlight higher-dimensional interaction effects, we choose a `chain cross-impact' configuration given by
\begin{align}\label{eq:Achain-3d}
C = \begin{pmatrix}
0.06 & 0.05 & 0 \\
0.05 & 0.06 & 0.05 \\
0 & 0.05 & 0.06
\end{pmatrix}.
\end{align}
That is, Asset~1 and Asset~2 have cross-impact, as do Asset~2 and Asset~3, while there is no cross-impact between Asset~1 and Asset~3.

Figure~\ref{fig:3d} illustrates the resulting optimal trading speeds and inventories for the different kernels. We observe a `chain effect': Although Asset~3 does not directly interact with Asset~1, a small-magnitude round trip in Asset~3 can contribute to mitigating the cost of the round trip in Asset~2 through the nonzero entries $C_{23}=C_{32}$, which in turn affects the liquidation of Asset~1 through the nonzero entries $C_{12}=C_{21}$. This example illustrates how indirect interactions between assets can influence optimal trading strategies in higher dimensions.

\begin{figure}[!htb]
  \includegraphics[scale=.6,center]{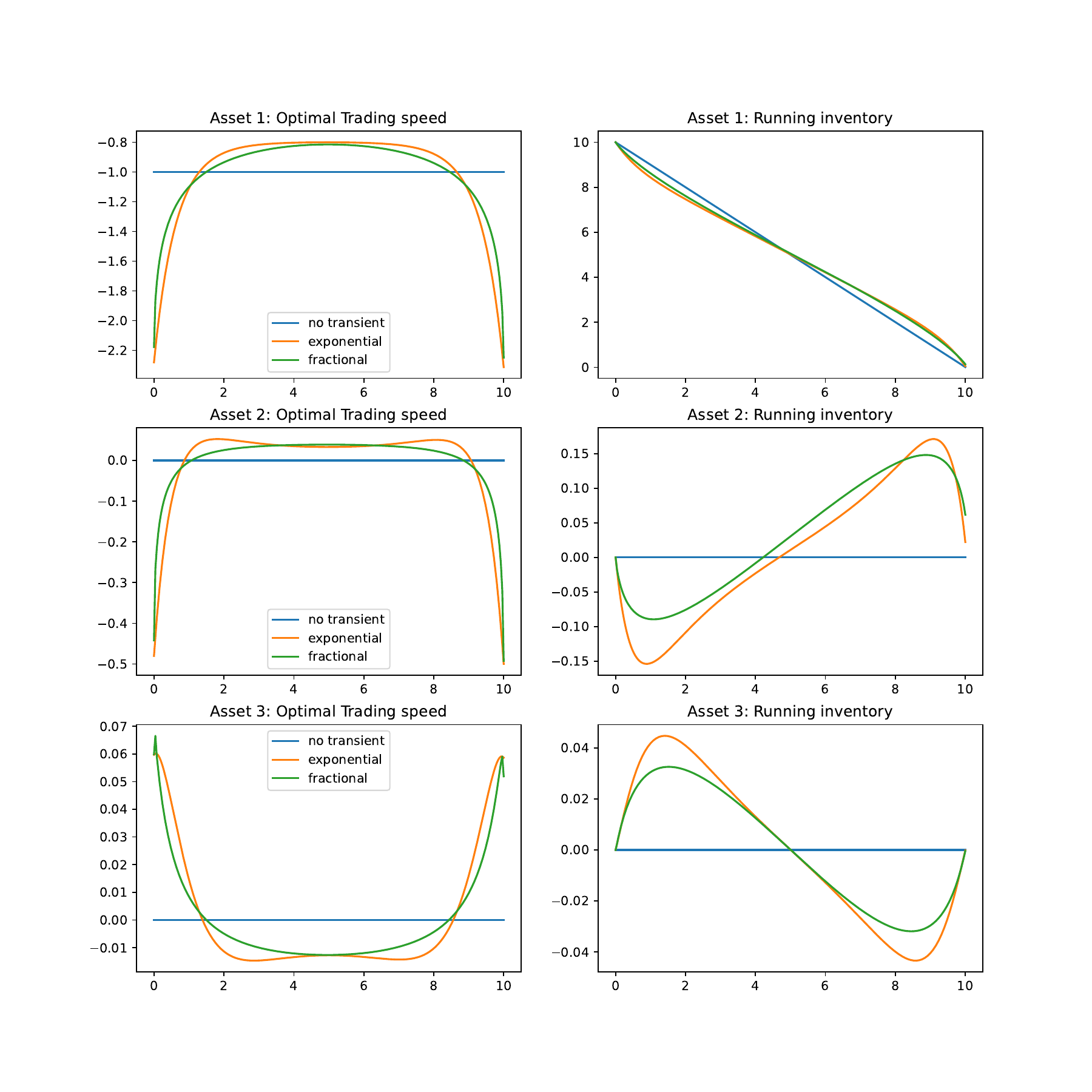}
  \vspace{-1cm}
  \captionof{figure}{Impact of different kernels on the optimal trading speed and inventory of three assets in the absence of a signal \textbf{with chain cross-impact} \eqref{eq:Achain-3d} with the three impact kernels $\phi^{\text{zero}}$ (blue), $\phi^{\text{exp}}$  (yellow) and $\phi^{\text{frac}}$ (green).}
  \label{fig:3d}
\end{figure}

\subsection{Optimal liquidation: what is the influence of correlation?}
In this section, we explore the influence of asset correlation on the optimal liquidation strategy. We consider the same setting as in the previous section, with the following modifications: The cross-impact is set to zero, and the initial position consists of two assets to be liquidated, i.e.~$X_0 = (10, 10)^\top$. 
To differentiate the liquidity of the two assets, we adjust the temporary self-impact matrix to
\begin{equation} 
\Lambda = \begin{pmatrix} 0.03 & 0 \\ 0 & 0.1 \end{pmatrix},
\end{equation}
so that Asset~1 is significantly more liquid than Asset~2.  

We introduce a risk aversion parameter $\gamma = 0.3$, and examine two configurations of the asset correlation matrix:
\begin{itemize}
    \item First row of Figure~\ref{fig:unwind2}: Uncorrelated case with
\be\label{eq:sigma-uncorr}
\Sigma = 0.2\begin{pmatrix} 1 & 0 \\ 0 & 1 \end{pmatrix}.
\ee
\item  Second row of Figure~\ref{fig:unwind2}: Fully correlated case with
\be\label{eq:sigma-corr}
\Sigma =0.2 \begin{pmatrix} 1 & 1 \\ 1 & 1\end{pmatrix}.
\ee
\end{itemize} 
Figure~\ref{fig:unwind2} displays the inventory trajectories in both configurations. In the uncorrelated case (top row), we observe that the introduction of risk aversion leads to a more aggressive initial liquidation, particularly for the more liquid Asset~1. In the fully correlated case (bottom row), the optimal strategy liquidates Asset~1 even more rapidly, temporarily taking a short position in the liquid asset to reduce overall portfolio risk, while unwinding Asset~2 more gradually. This behavior is consistent with the cross-asset execution effect described in~\cite*{almgren2001optimal}.

\begin{figure}[!htb]
	\includegraphics[scale=.6,center]{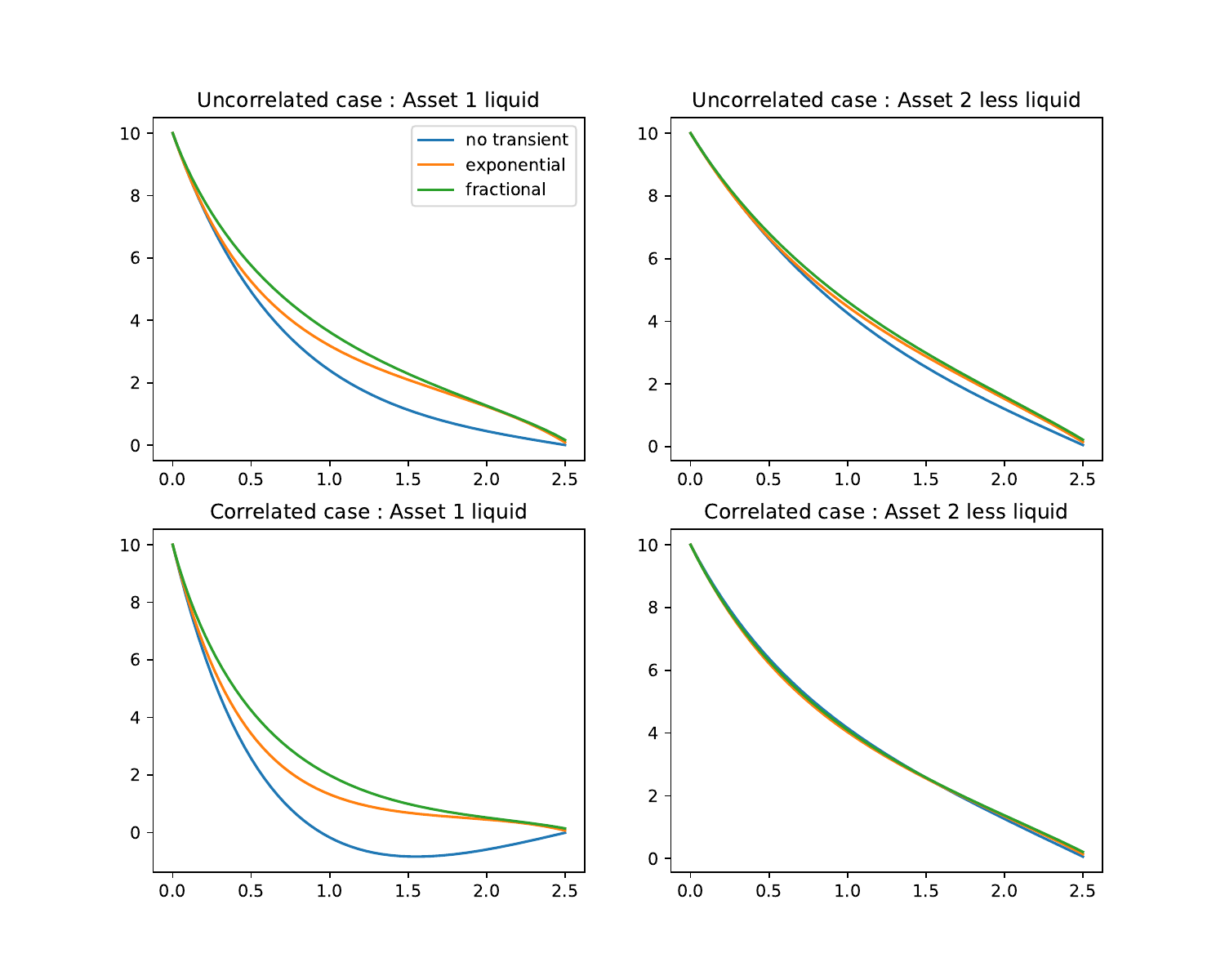}
				\vspace{-1cm}
	\captionof{figure}{Impact of the correlation on the running inventory:  uncorrelated case \eqref{eq:sigma-uncorr} (top row) and fully correlated case \eqref{eq:sigma-corr} (bottom row) with the impact kernels $\phi^{\text{zero}}$ (blue) and $\phi^{\text{frac}}$ (green). }
\label{fig:unwind2} 
\end{figure}

\subsection{Optimal trading with frictions}

In this section, we illustrate the influence of transient impact on a portfolio choice example with two independent assets. 
Here we set  $\varrho =0$ as in the objective functional \eqref{eq:Ju} (i.e. no penalty on terminal inventory) and we set the risk aversion coefficient to $\gamma=5$. We look at the  time evolution of holdings in each asset in comparison with the Markowitz portfolio, i.e. the zero temporary and transient price impact case.
This experiment has already appeared in  \cite{garleanu2016dynamic} for the exponential kernel.

We set $T=10$, $X_0=(7.5,-7.5)^\top$ with
two independent assets with  a $2\times 2$-covariance matrix $\Sigma$ given by   $$  \Sigma = \left(\begin{matrix} 0.04& 0 \\ 0 &0.05  \end{matrix}\right) , $$
and a stochastic  trading signal of the form \eqref{eq:signalP}-\eqref{eq:signalI} with 
     $$ I_0 = \left(\begin{matrix} {0.01 } \\ {-0.01} \end{matrix}\right) \quad  \text{ and } \quad   \beta  =\left(\begin{matrix} {0.05} & {0} \\ {0} & 0.3 \end{matrix}\right).$$
The transient impact is given by \eqref{eq:Aphi}, where we do not consider cross-impact, i.e. $C$ is given by \eqref{eq:Adiag} and $\phi$ is chosen as in the previous section. 

We  make precise the notion of the Markowitz portfolio in this context.  
\begin{remark}
    In the absence of any market frictions, i.e.~$\Lambda =G=0$,
for an unaffected price of the form
$$ dP_t = I_t dt + dM_t,$$
with $M$ a martingale, 
the optimal position is given by the celebrated Markowitz portfolio 
\begin{align}\label{eq:markowitz}
X_t^{\text{Markowitz}} = \frac {\Sigma^{-1} I_t} {\gamma}.
\end{align}
To see this it suffices to set $\Lambda = G = 0$ in the functional $J$ in \eqref{eq:Ju} and apply an integration by parts to get:
$$ \mathbb E\left[  P_T^\top X_T - \int_0^T P_t^\top  u_t dt \right] = X_0^\top  P_0 + \mathbb E\left[ \int_0^T I_t^\top X_t  dt\right], $$
which would then yield the functional
\begin{align}
X_0^\top P_0 + \mathbb E\left[ \int_0^T \left( I_t^\top  X_t -\frac{\gamma}{2} X_t^\top\Sigma X_t \right) dt\right]
\end{align}
and the optimality of the Markowitz portfolio \eqref{eq:markowitz}. \end{remark}

The optimal positions in Assets 1 and 2 without cross-impact are illustrated in  Figure \ref{fig:markowitz} for different values of the temporary impact parameters $\Lambda$.   The Markowitz portfolio is just a reference portfolio that describes the weights of the optimal portfolio derived from the optimal trade-off between risk $\Sigma$ and expected excess return $I_t$ scaled by $\gamma$. We make the following observations:
\begin{itemize}
    \item 
Similarly to  \citet{garleanu2016dynamic}, we observe that if one chooses any initial starting portfolio $X_0$, all optimal positions in the presence of frictions try to follow and get closer to the Markowitz portfolio in a smooth way. 
\item The `speed of convergence' towards the Markowitz portfolio seems to depend on the type of the frictions, being faster for the no-transient kernel than for the exponential kernel, and slowest for the more persistent fractional case. 
\end{itemize}

\begin{figure}[!ht]
\includegraphics[scale=.7,center]{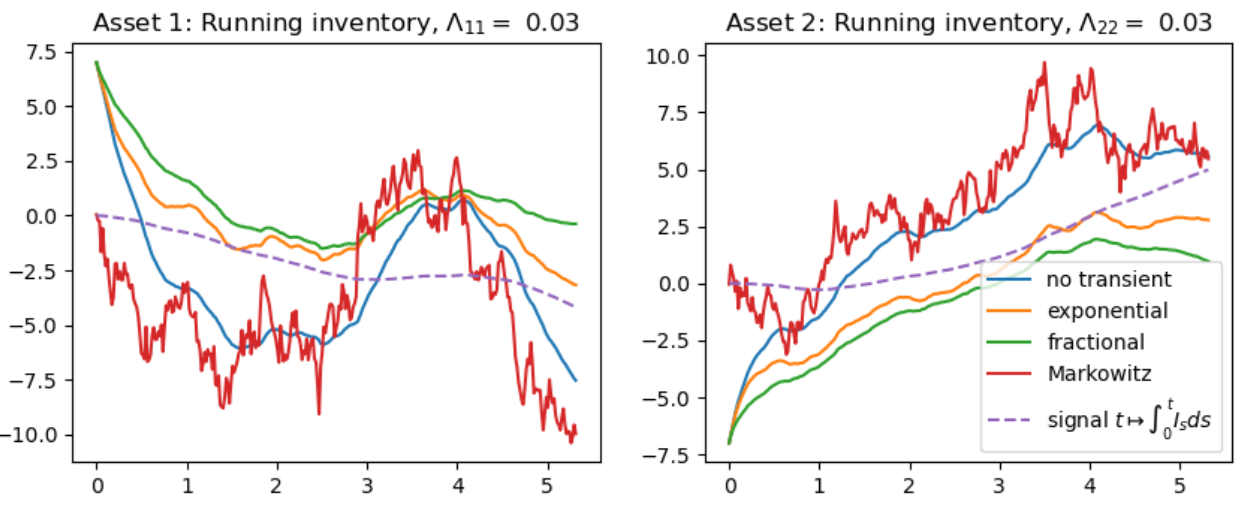}\\
\includegraphics[scale=.7,center]{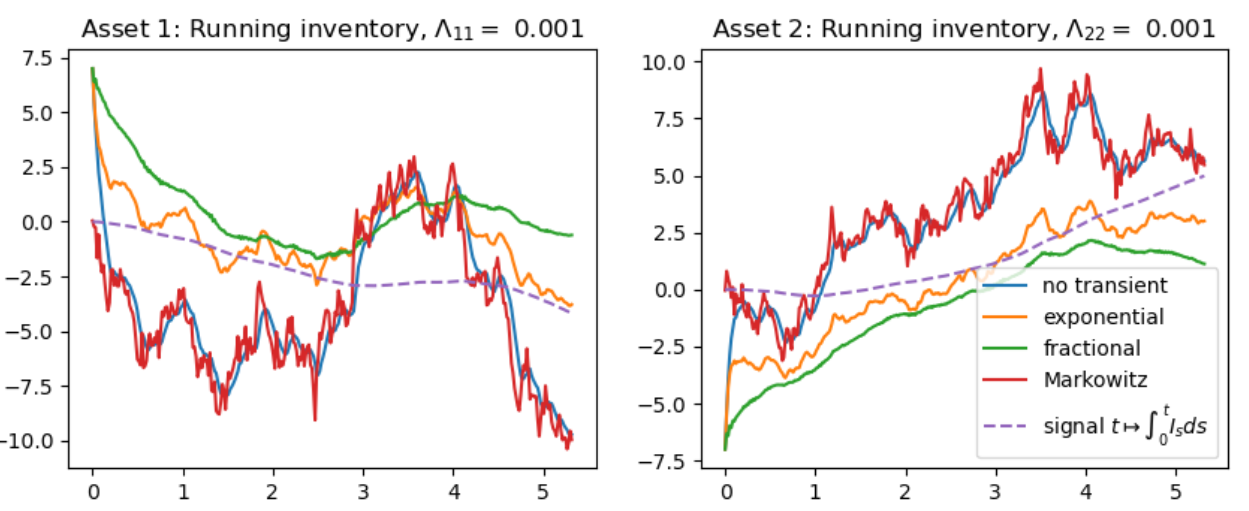}\\
				\vspace{-0.5cm}
	\captionof{figure}{Optimal trading without cross-impact with two independent assets and stochastic signals with the three impact kernels $\phi^{\text{zero}}$ (blue), $\phi^{\text{exp}}$  (yellow) and $\phi^{\text{frac}}$ (green),  and the Markowitz portfolio (red) in the absence of any friction as in \eqref{eq:markowitz} with a stochastic signal (purple). }
\label{fig:markowitz}
	\end{figure}

\begin{remark}
In Figure~\ref{fig:markowitz}, the smoothness of the optimal positions $X_t$ is mainly driven by market frictions, in particular the temporary impact $\Lambda$ in \eqref{exmp-lam}. Larger $\Lambda_{ii}$ increases the instantaneous cost of fast trading in Asset $i$ and therefore suppresses large trading speeds $u_t^i$, yielding smoother inventory paths. On the other hand, when $\Lambda_{ii}$ is small, the strategy can adjust much more rapidly towards the Markowitz portfolio. Furthermore,  the choice of the kernel $\phi$ matters most when $\Lambda_{ii}$ is small: Kernels with singularities at $0$ (e.g., fractional) impose stronger intertemporal dynamics of execution costs and further smooth the adjustment compared to non-singular kernels. By contrast, the risk aversion parameter $\gamma$ primarily controls the amplitude of the positions through the frictionless benchmark \eqref{eq:markowitz}, so that larger $\gamma$ reduces variability, but does not on its own penalize large trading speeds. 
\end{remark}


\subsection{Numerical scheme}\label{S:numericalscheme}

A major advantage of the operator formulas appearing in Theorem~\ref{thm:stochastic} is their ease of implementation. One possibility is the use of the so-called Nyström method to discretize the operators, and the method described in Section 5.1 of \cite{abijaber2022optimal} for the single-asset case can be readily adapted to our multi-asset setting. 

For completeness, we briefly describe the numerical implementation for deterministic signals as in Corollary~\ref{thm:deterministic}.

Fix  $N$ assets, $n\in \mathbb N$ and a partition $0=t_0<t_1<t_2<\ldots<t_n=T$ of $[0,T]$ as well as a propagator matrix in the form~\eqref{eq:Aphi}. We set $\Delta t \vcentcolon= T/n$.

\textbf{Step 1.} Specify the signal $P=\int_0^{\boldsymbol{\cdot}} I_s ds + M $ for a deterministic function $I$  and compute the $N(n+1)$-vector,
$$g_n:= (g(t_0)^\top, g(t_1)^\top,\ldots, g(t_n)^\top)^\top,$$
using the expression of $g$ in \eqref{eq:gtwithvarrho}. (Note that $g(t_i)\in \mathbb R^N$.) Second, specify the scalar convolution kernel $\phi:[0,T]\to \mathbb R$ that appears in \eqref{eq:Aphi}  and set $F_1(t,s)=\mathds{1}_{\{t> s\}}(T-t)$ and $F_2(t,s) = \mathds{1}_{\{t> s\}}$. Define the following lower and upper triangular $(n+1)\times (n+1)$-matrices $L_1,L_2,L$ and $U_1,U_2,U$ where the nonzero elements are given by:
\bq
L^{kj}_1 &=&  \int_{t_j}^{t_{j+1}}   F_1(t_k,s) ds =   (T-t_k) \Delta t , \quad k =0, \ldots, n, \quad j=0,\ldots, (k-1), \\
U^{kj}_1 &=&  \int_{t_j}^{t_{j+1}} F_1(s,t_k) ds { \approx } (T-t_j) \Delta t , \quad k=0, \ldots, n, \quad j=k, \ldots, (n-1), \\ 
L^{kj}_2 &=& \int_{t_j}^{t_{j+1}}   F_2(t_k,s) ds =  \Delta t, \quad k =0, \ldots, n, \quad j=0,\ldots, (k-1),\\
U^{kj}_2 &=& \int_{t_j}^{t_{j+1}} F_2(s,t_k) ds =   \Delta t , \quad k=0, \ldots, n, \quad j=k, \ldots, (n-1), \\
L^{kj} &=&\int_{t_j}^{t_{j+1}}   \phi(t_k-s) ds, \quad k =0, \ldots, n, \quad j=0,\ldots, (k-1),\\ 
U^{kj} &=&  \int_{t_j}^{t_{j+1}} \phi(s-t_k) ds, \quad k=0, \ldots, n, \quad j=k, \ldots, (n-1). 
\eq

For instance, for the three particular kernels in the previous subsection, $L$ and $U$ admit explicit expressions collected in Table \ref{T:kernels}.

\begin{table}[h!]
\centering
\resizebox{\textwidth}{!}{
\begin{tabular}{c c c c }
\hline\hline
&  & $L^{kj}$  & $U^{kj}$ \\ 
& $\phi(t)$ & for $0\leq j \leq k-1$ & for $k\leq j \leq n-1$ \\
\hline \hline \\[0.5ex]
No-transient		& 0 & 0 & 0\\ \\
Exponential	& $c{\rm e}^{-\rho t}$ & $  c \displaystyle\frac{e^{\rho \Delta t}- 1}{\rho} e^{-\rho(k-j)\Delta t}$ & {$ c\displaystyle\frac{1- e^{-\rho \Delta t}}{\rho}e^{-\rho(j-k)\Delta t}$}\\ \\
Fractional		& $c\,{t^{-\alpha}}$ & $ \frac{c (\Delta t)^{1-\alpha}}{1-\alpha}((k-j)^{1-\alpha} - (k-j-1)^{1-\alpha}  )$ & $\frac{c (\Delta t)^{1-\alpha}}{1-\alpha}((j+1-k)^{1-\alpha} - (j-k)^{1-\alpha}  )$\\ \\
\hline
\end{tabular}
}
\caption{Kernels $\phi$ and the corresponding explicit nonzero elements of the matrices $L$ and $U$.}
\label{T:kernels}
\end{table}

\textbf{Step 2.} Construct the $N(n+1)\times N(n+1)$-matrix $D_n$ using
$$ D_n = I_{(n+1)\times (n+1)} \otimes \bar \Lambda + L \otimes C + U \otimes C  + \gamma L_1 \otimes \Sigma + \gamma U_1 \otimes \Sigma  + \varrho L_2 \otimes \Pi + \varrho U_2\otimes \Pi, $$
where $\otimes$ is the Kronecker product. 

\textbf{Step 3.}   Recover the $N(n+1)$-vector for the optimal control path
$$ u_n = (D_n)^{-1} g_n $$
and note that the first $N$ components are the first trading values for all assets, the next $N$ components are the second trading values for all assets etc.
  
 \section{Systems of Stochastic Fredholm Equations} \label{sec-fred}
In this section we derive solutions to a class of coupled stochastic Fredholm equations of the second kind with both forward and backward components, which arise from the first-order condition in the proof of Theorem \ref{thm:stochastic} in Section \ref{sec-proof-strat}. The main result of this section is given in the following proposition. Recall that the class of admissible propagator matrices $\mathcal{G}$ was defined in \eqref{prop-class} and that the $\diamond$ operation was defined in Definition \ref{def:column-wise}. We also recall for any kernel $G\in \mathcal G$, the kernel $G_t$ is the truncation of $G$ from Definition \ref{trunc-ker} and that $\mathbf{G}_t$ is the operator induced by the kernel $G_t$.  

\begin{prop}\label{prop:Fredholm}
Let $K,L$ be Volterra kernels in $\mathcal{G}$, let $\bar{\Lambda}\in\R^{N\times N}$ be a symmetric positive definite matrix, and let $f=(f_t)_{0\leq t\leq T}$ be an $N$-dimensional progressively measurable process with $\smash{\int_0^T\E[\|f_t\|^2]dt<\infty}$. Then the following coupled system of stochastic Fredholm equations
\be \label{stoch-fred}
\bar{\Lambda}u_t=f_t-\int_0^t K(t,r)u_rdr-\int_t^T L(r,t)^\top\E_t[u_r]dr,\quad t\in[0,T],
\ee
admits a unique progressively measurable solution $u^*=(u^*_t)_{0\leq t\leq T}$ in $\mathcal{U}$ given by 
\be
u_t^*=\big((\mathbf{I} + \mathbf{B})^{-1}a\big)(t),\quad t\in[0,T],
\ee
where
\be
a_t=\bar{\Lambda}^{-1}\Big(f_t-\int_t^TL(r,t)^\top \big(\mathbf{D}_t^{-1}\mathds{1}_{\{t\leq\cdot\}}\E_t[f_\cdot]\big)(r)dr\Big),
\ee
\be
B(t,s)=-\bar{\Lambda}^{-1}\Big(\mathds{1}_{\{s\leq t\}}\int_t^T L(r,t)^\top \big(\mathbf{D}_t^{-1}\diamond\mathds{1}_{\{t\leq \cdot\}}K(\cdot,s)\big)(r)dr-K(t,s)\Big),
\ee
\be
\mathbf{D}_t=\bar{\Lambda}\mathbf{I}+\mathbf{K}_t+\mathbf{L}_t^*. 
\ee
Here $\mathbf{I}$ is the identity operator on $L^2([0,T],\R^N)$ and $\mathbf{B}$ is the integral operator induced by the kernel $B$.
\end{prop}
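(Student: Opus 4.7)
The plan is to decouple the forward-backward structure by conditioning the equation on $\mathcal F_t$, solving a deterministic Fredholm equation on $[t,T]$ for $\mathbb E_t[u_\cdot]$ using the operator $\mathbf D_t$, and then plugging the solution back into the original equation to obtain a pure Volterra equation in $u$, which we invert with the resolvent of a kernel $B$ of type $L^2$.

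\emph{Step 1: Invertibility of $\mathbf D_t$.} I would first establish that $\mathbf D_t$ is a bounded invertible operator on $L^2([0,T],\R^N)$. The key observation is that for any $f\in L^2([0,T],\R^N)$, writing $h(r)=\mathds 1_{\{r\geq t\}}f(r)$ and using the Volterra property of $K$ (so that $K(s,r)=0$ for $s<r$, hence $K(s,r)=0$ when $s<t\leq r$), one gets
\[
\langle f,\mathbf K_t f\rangle_{L^2}=\int_0^T\!\!\int_t^T f(s)^\top K(s,r)f(r)\,dr\,ds=\langle h,\mathbf K h\rangle_{L^2}\geq 0,
\]
since $K\in\mathcal G$ is nonnegative definite. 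The same argument applied to $L$ yields $\langle f,\mathbf L_t^* f\rangle_{L^2}=\langle h,\mathbf L h\rangle_{L^2}\geq 0$. Combined with positive definiteness of $\bar\Lambda$, this gives the coercivity bound $\langle f,\mathbf D_t f\rangle_{L^2}\geq \lambda_{\min}(\bar\Lambda)\,\|f\|_{L^2}^2$, so by Lax--Milgram $\mathbf D_t$ is invertible with $\|\mathbf D_t^{-1}\|\leq \lambda_{\min}(\bar\Lambda)^{-1}$.

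\emph{Step 2: Conditioning on $\mathcal F_t$.} Fix $t\in[0,T]$ and apply $\mathbb E_t[\cdot]$ to \eqref{stoch-fred} at a later time $s\geq t$. Splitting $\int_0^s=\int_0^t+\int_t^s$ and using that $u_r$ is $\mathcal F_t$-measurable for $r\leq t$, and that $\mathbb E_t[\mathbb E_s[u_r]]=\mathbb E_t[u_r]$ for $r\geq s\geq t$, I obtain
\[
\bar\Lambda\,\mathbb E_t[u_s]+\int_t^s K(s,r)\mathbb E_t[u_r]\,dr+\int_s^T L(r,s)^\top\mathbb E_t[u_r]\,dr=\mathbb E_t[f_s]-\int_0^t K(s,r)u_r\,dr.
\]
Introducing $v(r)\vcentcolon=\mathds 1_{\{r\geq t\}}\mathbb E_t[u_r]$ and using the Volterra property of $K$ and $L$ to identify the two inner integrals with $(\mathbf K_t v)(s)$ and $(\mathbf L_t^* v)(s)$ respectively, this rewrites compactly as
\[
(\mathbf D_t v)(s)=\mathds 1_{\{s\geq t\}}\Big(\mathbb E_t[f_s]-\int_0^t K(s,r)u_r\,dr\Big),\qquad s\in[0,T],
\]
the right-hand side being well-defined in $L^2([0,T],\R^N)$ thanks to the $L^2$-assumptions on $f$ and $K$.

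\emph{Step 3: Inversion and substitution.} Apply $\mathbf D_t^{-1}$ to solve for $v$, using linearity and the $\diamond$ notation to pull $u_r$ out of the column-wise action of $\mathbf D_t^{-1}$ on the matrix kernel $\mathds 1_{\{\cdot\geq t\}}K(\cdot,r)$. Substituting the resulting expression for $\mathbb E_t[u_r]=v(r)$ into the last term of \eqref{stoch-fred} and swapping the order of integration via Fubini (justified by Cauchy--Schwarz, the $L^2$-bound on $\mathbf D_t^{-1}$, and the assumptions $K,L\in L^2$), the equation \eqref{stoch-fred} becomes, for each $t$,
\[
\bar\Lambda u_t=\bar\Lambda a_t-\int_0^t\bar\Lambda B(t,s)\,u_s\,ds,
\]
with $a_t$ and $B(t,s)$ exactly as in the statement. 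A direct check gives $B(t,s)=\bar\Lambda^{-1}K(t,s)=0$ for $s>t$ since $K$ is Volterra, so $B$ is itself a Volterra kernel.

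\emph{Step 4: Solving the Volterra equation.} It remains to show $a\in\mathcal U$ and $B\in L^2([0,T]^2,\R^{N\times N})$; this is the main technical task and, I expect, the central obstacle. The needed estimates follow from (i) the uniform $L^2$-operator bound on $\mathbf D_t^{-1}$ from Step~1, (ii) the $L^2$-integrability of $K$ and $L$, and (iii) the assumption $\mathbb E\int_0^T\|f_t\|^2\,dt<\infty$; this part would parallel Lemma~\ref{lemma:integrability} mentioned in Remark~\ref{rem:generalization of class of admissible kernels}. Progressive measurability of $t\mapsto a_t$ follows from that of $f$ and the deterministic dependence of $\mathbf D_t^{-1}$ and the other kernels on $t$. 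Once $a\in\mathcal U$ and $B\in L^2([0,T]^2,\R^{N\times N})$ are established, the Volterra equation $(\mathbf I+\mathbf B)u=a$ admits a unique solution $u^*=(\mathbf I+\mathbf B)^{-1}a\in\mathcal U$, since any $L^2$-Volterra kernel has an $L^2$-Volterra resolvent (Theorem~9.3.6 in Gripenberg--Londen--Staffans). Uniqueness of the solution to \eqref{stoch-fred} then follows by retracing Steps~2--3: any other solution would satisfy the same $(\mathbf I+\mathbf B)u=a$ and hence coincide with $u^*$. Existence is obtained by verifying directly that $u^*$ solves \eqref{stoch-fred}, which amounts to reversing the algebraic manipulations above.
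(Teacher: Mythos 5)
Your proposal follows essentially the same route as the paper's proof: conditioning on $\mathcal F_t$ to obtain an equation for $\mathds 1_{\{\cdot\geq t\}}\E_t[u_\cdot]$, inverting $\mathbf D_t$ via the nonnegative definiteness of $\mathbf K_t+\mathbf L_t^*$ together with the coercivity of $\bar\Lambda\mathbf I$, substituting back to reach the equation $(\mathbf I+\mathbf B)u=a$, and invoking the $L^2$-resolvent theory of Gripenberg--Londen--Staffans; the deferred integrability estimates in your Step 4 are exactly the content of the paper's Lemma \ref{lemma:integrability}, and the ingredients you list for them (the uniform bound $\|\mathbf D_t^{-1}\|_{\text{op}}\leq\lambda_{\min}^{-1}$, square-integrability of $K,L$, and the $L^2$ assumption on $f$) are the ones the paper uses. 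The argument is correct as outlined.
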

\begin{remark}\label{rem:composition of positive operators}
Proposition \ref{prop:Fredholm} is a generalization of Proposition 5.1 in \cite{abijaber2023equilibrium} from the class of one-dimensional stochastic Fredholm equations to $N$--dimensional coupled systems. Note that in the one-dimensional case $\bar{\Lambda}$ is just a positive scalar, so that the integral operators $\bar{\Lambda}^{-1}\mathbf{K}_t$ and $\bar{\Lambda}^{-1}\mathbf{L}_t^*$ are nonnegative definite and therefore the operator $\mathbf{I}+\bar{\Lambda}^{-1}(\mathbf{K}_t+\mathbf{L}_t^*)$ invertible, which is an important ingredient in the derivation of the solution. However, since the composition of two general nonnegative definite linear operators on a Hilbert space is not necessarily nonnegative definite, even if they are self-adjoint, this argument does not go through in the $N$--dimensional case. Thus, the positive definite operator $\mathbf{D}_t=\bar{\Lambda}\mathbf{I}+\mathbf{K}_t+\mathbf{L}_t^*$ is inverted as part of the solution, and $\bar{\Lambda}^{-1}$ appears as part of the terms $a$ and $B$.
\end{remark}
\noindent As a preparation for the proof, we introduce the concept of  kernels of type $L^2$  following the terminology of Definition 9.2.2 in \cite{gripenberg1990volterra}.  
\begin{definition}\label{def:type L2}
A Borel-measurable function $G:[0,T]^2\to\R^{N\times N}$ is called a kernel of type $L^2$ if it satisfies
\be
\big\|G\big\|_{L^2}\vcentcolon= \sup\limits_{\substack{f:\|f\|_{L^2}\leq 1\\g:\|g\|_{L^2}\leq 1}}\int_0^T\int_0^T\big\|g(t)G(t,s)f(s)\big\|dsdt<\infty,
\ee
where the supremum is taken over all real-valued functions $f,g\in L^2([0,T],\R)$ with norm bounded by $1$ and $\|\cdot\|$ denotes the Frobenius norm (following the convention introduced after \eqref{l-2-ker-def}). 
\end{definition}
Every kernel in $L^2([0,T]^2,\R^{N\times N})$ is also a kernel of type $L^2$ (see Proposition 9.2.7 (iii) in \cite{gripenberg1990volterra}). However, the converse does not hold in general. 
The following auxiliary lemmas are essential ingredients for the proof of Proposition \ref{prop:Fredholm}.
\begin{lemma}\label{lemma:Kt+Lt* pos semidefinite}
For $K,L\in\mathcal{G}$ and any fixed $t\in[0,T]$, the integral operator $\mathbf{K}_t+\mathbf{L}_t^*$ is nonnegative definite.
\end{lemma}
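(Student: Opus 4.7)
The plan is to show that each of $\mathbf{K}_t$ and $\mathbf{L}_t^*$ is individually nonnegative definite, so that their sum is as well. Since $L \in \mathcal{G} \subset L^2([0,T]^2, \R^{N\times N})$, so is the truncation $L_t$, and Remark~\ref{rem:pos semidefinite} then gives
\begin{equation*}
\langle f, \mathbf{L}_t^* f\rangle_{L^2} = \langle f, \mathbf{L}_t f\rangle_{L^2}, \qquad f\in L^2([0,T],\R^N).
\end{equation*}
Hence the whole lemma reduces to the single claim: \emph{for every $G\in\mathcal{G}$ and every $t\in[0,T]$, the truncated operator $\mathbf{G}_t$ is nonnegative definite on $L^2([0,T],\R^N)$.} Applying this to $G=K$ and $G=L$ then yields the result.

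To establish the claim, the idea is to express $\langle f,\mathbf{G}_t f\rangle_{L^2}$ as a quadratic form of $\mathbf{G}$ evaluated at a suitable cutoff of $f$. Concretely, I would introduce
\begin{equation*}
g(r) := \mathds{1}_{\{r\geq t\}} f(r), \qquad r\in[0,T],
\end{equation*}
which lies in $L^2([0,T],\R^N)$, and show that $\langle f, \mathbf{G}_t f\rangle_{L^2} = \langle g, \mathbf{G} g\rangle_{L^2}$. The key observation is that the integrand $f(s)^\top G_t(s,r) f(r)$ is supported on the triangle $\{(s,r):t\leq r\leq s\leq T\}$: the indicator $\mathds{1}_{\{r\geq t\}}$ in the definition of $G_t$ forces $r\geq t$, while the Volterra property $G(s,r)=0$ for $s<r$ forces $s\geq r$. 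On this triangle one has $f(s)=g(s)$, $f(r)=g(r)$ and $G_t(s,r)=G(s,r)$, and conversely the integrand $g(s)^\top G(s,r) g(r)$ is supported on exactly the same triangle. A direct Fubini argument then gives the desired equality. Since $G\in\mathcal{G}$ is nonnegative definite in the sense of \eqref{non-neg}, the right-hand side is $\geq 0$, which proves the claim.

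There is no serious obstacle here; the only point of care is the bookkeeping between the two different sources of the indicator — one from the truncation to obtain $G_t$, and one from the Volterra structure of $G$ — and checking that the cutoff $g$ produces a quadratic form that agrees with $\langle f, \mathbf{G}_t f \rangle_{L^2}$ up to a set of measure zero. Once this reduction is in place, the nonnegative definiteness of $\mathbf{K}_t+\mathbf{L}_t^*$ is immediate by summing and combining with the adjoint identity above.
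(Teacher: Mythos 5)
Your proposal is correct and follows essentially the same route as the paper: the paper likewise uses the Volterra property to insert the indicators and obtains $\langle f,(\mathbf{K}_t+\mathbf{L}_t^*)f\rangle_{L^2}=\langle f_t,(\mathbf{K}+\mathbf{L}^*)f_t\rangle_{L^2}\geq 0$ with $f_t(s)=\mathds{1}_{\{s\geq t\}}f(s)$, which is exactly your cutoff identity, only carried out for the sum at once rather than term by term. Your termwise version (reducing $\mathbf{L}_t^*$ to $\mathbf{L}_t$ via the adjoint identity of Remark \ref{rem:pos semidefinite}) is an immaterial repackaging of the same argument.
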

\begin{proof}
Let $f\in L^2([0,T],\R^N)$. From Definition \ref{trunc-ker} we have, 
\bn
&&\langle f,(\mathbf{K}_t+\mathbf{L}_t^*)f\rangle_{L^2}\\
&&=\int_0^T\int_0^T f(s)^\top\Big(\mathds{1}_{\{r\geq t\}}K(s,r)+\mathds{1}_{\{s\geq t\}}L(r,s)^\top\Big)f(r)drds \\
&&=\int_0^T\int_0^T f(s)^\top\Big(\mathds{1}_{\{r\geq t\}}\mathds{1}_{\{s\geq r\}}\mathds{1}_{\{s\geq t\}}K(s,r)+\mathds{1}_{\{s\geq t\}}\mathds{1}_{\{r\geq s\}}\mathds{1}_{\{r\geq t\}}L(r,s)^\top\Big)f(r)drds \\
&&=\int_0^T\int_0^T \mathds{1}_{\{s\geq t\}}f(s)^\top\Big(K(s,r)+L(r,s)^\top\Big)\mathds{1}_{\{r\geq t\}}f(r)drds\\
&&=\langle f_t,(\mathbf{K}+\mathbf{L}^*)f_t\rangle_{L^2}\geq 0,
\en
where $f_t(s)\vcentcolon =\mathds{1}_{\{s\geq t\}}f(s)$ and we have used the fact that $K$ and $L$ are Volterra kernels in $\mathcal{G}$ (see \eqref{prop-class}).
\end{proof}

\begin{lemma}\label{lemma:lambda positive} 
Let $\bar{\Lambda}\in\R^{N\times N}$ be a symmetric positive definite matrix. Then, 
the linear operator $\bar{\Lambda}\mathbf{I}$ on $L^2([0,T],\R^N)$ is self-adjoint and satisfies 
\be
\langle \bar{\Lambda}\mathbf{I}f,f\rangle_{L^2}=
\langle \bar{\Lambda}f,f\rangle_{L^2}\geq \lambda_{\text{min}}\langle f,f\rangle_{L^2}, \quad \textrm{for all } f\in L^2([0,T],\R^N)
\ee
where $\lambda_{\text{min}}>0$ denotes the smallest eigenvalue of $\bar{\Lambda}$. In particular, $\bar{\Lambda}\mathbf{I}$ is positive definite.
\end{lemma}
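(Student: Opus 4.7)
The proof is essentially a direct pointwise computation followed by integration, so the plan is short.

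First, I would establish self-adjointness. For any $f,g \in L^2([0,T],\mathbb{R}^N)$, I compute
\begin{align}
\langle \bar{\Lambda}\mathbf{I}f, g\rangle_{L^2} = \int_0^T (\bar{\Lambda} f(t))^\top g(t)\, dt = \int_0^T f(t)^\top \bar{\Lambda}^\top g(t)\, dt = \int_0^T f(t)^\top \bar{\Lambda} g(t)\, dt,
\end{align}
where the last equality uses the symmetry $\bar{\Lambda}^\top = \bar{\Lambda}$. This equals $\langle f, \bar{\Lambda}\mathbf{I}g\rangle_{L^2}$, which yields the self-adjointness.

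Next, for the quantitative lower bound, I would invoke the standard fact that a symmetric matrix admits an orthonormal eigenbasis, so the Rayleigh quotient inequality gives $x^\top \bar{\Lambda} x \geq \lambda_{\min}\|x\|^2$ pointwise for every $x \in \mathbb{R}^N$, where $\lambda_{\min}>0$ since $\bar{\Lambda}$ is positive definite. Applying this with $x = f(t)$ for each $t$ and integrating yields
\begin{align}
\langle \bar{\Lambda}\mathbf{I}f, f\rangle_{L^2} = \int_0^T f(t)^\top \bar{\Lambda} f(t)\, dt \geq \lambda_{\min} \int_0^T \|f(t)\|^2\, dt = \lambda_{\min} \langle f, f\rangle_{L^2}.
\end{align}
Positive definiteness of $\bar{\Lambda}\mathbf{I}$ as an operator follows since $\lambda_{\min}>0$ implies $\langle \bar{\Lambda}\mathbf{I}f,f\rangle_{L^2} > 0$ for any nonzero $f \in L^2([0,T],\mathbb{R}^N)$.

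There is no real obstacle here; the only detail to be careful about is distinguishing the operator inequality (which is in the Hilbert space $L^2([0,T],\mathbb{R}^N)$) from the underlying matrix inequality on $\mathbb{R}^N$, and ensuring measurability of $t \mapsto f(t)^\top \bar{\Lambda} f(t)$ so that the integration step is legitimate, both of which are immediate.
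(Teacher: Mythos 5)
Your proposal is correct and follows essentially the same route as the paper, which also reduces the claim to the pointwise matrix inequality $x^\top\bar{\Lambda}x\geq\lambda_{\min}\|x\|^2$ (spelled out there via the eigendecomposition $\bar{\Lambda}=Q\Delta Q^\top$) and then integrates over $[0,T]$. Your additional explicit check of self-adjointness, which the paper leaves implicit, is a harmless and correct supplement.
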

\begin{proof}
Since $\bar{\Lambda}$ is symmetric positive definite, it can be decomposed as
\be
\bar{\Lambda}=Q\Delta Q^\top,
\ee
where $Q$ is an orthogonal matrix whose columns are orthonormal eigenvectors of $\bar{\Lambda}$ and $\Delta$ 
is a diagonal matrix whose entries $(\lambda_i)_{i=1,...,N}$ are the positive eigenvalues of $\bar{\Lambda}$. Let $x\in\R^N$ and define $y=Q^\top x$, which in particular implies that $\|y\|=\|x\|$ as $Q$ is orthogonal. We get, 
\be \label{s-eq} 
x^\top\bar{\Lambda}x=x^\top Q\Delta Q^\top x 
=\|\Delta^{\frac{1}{2}}Q^\top x\|^2
=\sum_{i=1}^N \lambda_iy_i^2
\geq \sum_{i=1}^N \lambda_{\text{min}}y_i^2
=\lambda_{\text{min}}\| x\|^2.
\ee
Let $f\in L^2([0,T],\R^N)$. Then from \eqref{s-eq} it follows that
\be
\langle f, \bar{\Lambda}f\rangle_{L^2}
=\int_0^Tf(t)^\top\bar{\Lambda}f(t)
\geq \int_0^T \lambda_{\text{min}}\|f(t)\|^2 dt
=\lambda_{\text{min}}\langle f,f\rangle_{L^2}.
\ee
\end{proof}
 
\textbf{Notation.} Let $E:V\to W$ be a linear operator between two normed real vector spaces $(V,\|\cdot\|_V)$ and $(W,\|\cdot\|_W)$. Then the operator norm of $E$ is denoted as,
\be \label{eq:operator norm}
\|E\|_{\text{op}}\vcentcolon = \sup\big\{\|E(x)\|_W : x\in V\text{ with }\|x\|_V\leq 1\big\}.
\ee
\begin{lemma} \label{lemma:inverse operator}
Let $E$ be a bounded linear operator from a real Hilbert space $V$
into itself. Suppose that there exists a constant $c>0$ such that $\langle Ex,x\rangle \geq c \langle x,x\rangle$ for all $x\in  V$. Then $E$ is invertible and $\|E^{-1}\|_{\text{op}}\leq c^{-1}$.
\end{lemma}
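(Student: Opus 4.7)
The plan is to establish injectivity, closed range and dense range of $E$, conclude invertibility, and then read off the operator norm bound from a coercivity-type estimate. This is essentially the standard Lax--Milgram argument adapted to our coercive but not necessarily symmetric setting.

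First I would derive the key ``bounded below'' estimate
\[
\|Ex\|\,\|x\| \;\geq\; \langle Ex,x\rangle \;\geq\; c\,\|x\|^2,
\]
where the first inequality is Cauchy--Schwarz. For $x\neq 0$ this yields $\|Ex\|\geq c\|x\|$, and trivially for $x=0$ as well. In particular $\ker(E)=\{0\}$, so $E$ is injective. Moreover, if $(Ex_n)$ is a Cauchy sequence in $V$, the estimate $\|E(x_n-x_m)\|\geq c\|x_n-x_m\|$ shows that $(x_n)$ is Cauchy; hence $x_n\to x$ and by continuity $Ex_n\to Ex$, so $\mathrm{Range}(E)$ is closed in $V$.

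Next I would show $\mathrm{Range}(E)$ is dense. Since $V$ is a \emph{real} Hilbert space, $\langle Ex,x\rangle = \langle x,E^*x\rangle = \langle E^*x,x\rangle$ for all $x\in V$, so the coercivity estimate transfers verbatim to $E^*$: $\langle E^*x,x\rangle\geq c\|x\|^2$. By the same argument as above, $E^*$ is injective. Using the orthogonal decomposition $V=\overline{\mathrm{Range}(E)}\oplus \ker(E^*)$, the kernel $\ker(E^*)=\{0\}$ forces $\overline{\mathrm{Range}(E)}=V$. Combined with the closed range property, $\mathrm{Range}(E)=V$.

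Thus $E$ is a bijection, and since it is bounded below its inverse $E^{-1}:V\to V$ is well defined (and automatically bounded by the open mapping theorem, though we will get the bound directly). For any $y\in V$, write $y=Ex$ with $x=E^{-1}y$; then the bounded-below estimate gives
\[
\|y\|=\|Ex\|\geq c\|x\|=c\|E^{-1}y\|,
\]
so $\|E^{-1}y\|\leq c^{-1}\|y\|$ for all $y\in V$, which is exactly $\|E^{-1}\|_{\mathrm{op}}\leq c^{-1}$. The main subtlety in this outline is the density step, since $E$ need not be self-adjoint; this is resolved by transferring coercivity from $E$ to $E^*$ using the real-inner-product identity $\langle Ex,x\rangle=\langle E^*x,x\rangle$.
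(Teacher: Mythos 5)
Your proof is correct and complete. Note that the paper does not actually prove this lemma; it simply cites Torchinsky, \emph{Problems in real and functional analysis}, Chapter 10, Problem 165, so your write-up supplies the argument the reference is standing in for. What you give is the standard coercivity (Lax--Milgram-type) route: Cauchy--Schwarz yields the bounded-below estimate $\|Ex\|\geq c\|x\|$, which gives injectivity and closed range; the key non-symmetric subtlety is handled correctly by observing that in a \emph{real} Hilbert space $\langle E^*x,x\rangle=\langle x,Ex\rangle=\langle Ex,x\rangle\geq c\|x\|^2$, so $\ker(E^*)=\{0\}$ and the range is dense via $V=\overline{\mathrm{Range}(E)}\oplus\ker(E^*)$; the bound $\|E^{-1}\|_{\text{op}}\leq c^{-1}$ then falls out of the same estimate. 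Every step checks, and this self-contained version is arguably preferable to the bare citation, since the lemma is used in the paper to invert the operators $\mathbf{D}_t$ and $\mathbf{D}$, which are coercive but not self-adjoint --- exactly the setting your adjoint-transfer step is designed for.
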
 
\begin{proof}
    See \cite{torchinsky2015problems}, Chapter 10, Problem 165.
\end{proof}

Now we are ready to prove Proposition \ref{prop:Fredholm}.
\begin{proof}[Proof of Proposition \ref{prop:Fredholm}] For every $0\leq t \leq T$ define the auxiliary process
\be
m_t(s) =\mathds{1}_{\{t\leq s\}}\E_t[u_s],\quad 0\leq s \leq T. 
\ee
Taking the conditional expectation $\E_t[\cdot]$ on both sides of \eqref{stoch-fred}, then multiplying by $\mathds{1}_{\{t\leq s\}}$ and using the tower property we get
\be
\begin{aligned}\label{m_t-equation}
\bar{\Lambda}m_t(s)&=\mathds{1}_{\{t\leq s\}}\E_t[f_s]-\mathds{1}_{\{t\leq s\}}\int_0^t K(s,r)u_rdr \\
&\quad -\mathds{1}_{\{t\leq s\}}\int_t^s K(s,r)\E_t[u_r]dr-\mathds{1}_{\{t\leq s\}}\int_s^T L(r,s)^\top\E_t[u_r]dr \\ 
&=f_t^u(s)-\int_t^s K_t(s,r)m_t(r)dr-\int_s^T L_t(r,s)^\top m_t(r)dr \\ 
&=f_t^u(s)-\big((\mathbf{K}_t+\mathbf{L}_t^*)m_t\big)(s), \quad \textrm{for all } 0\leq s\leq T, 
\end{aligned}
\ee
where
\be \label{f-term}
f_t^u(s)\vcentcolon = \mathds{1}_{\{t\leq s\}}\E_t[f_s]-\mathds{1}_{\{t\leq s\}}\int_0^t K(s,r)u_rdr.
\ee
The bounded linear operator $\mathbf{K}_t+\mathbf{L}_t^*$ is nonnegative definite by Lemma \ref{lemma:Kt+Lt* pos semidefinite}. Thus, due to  Lemma \ref{lemma:lambda positive} and Lemma \ref{lemma:inverse operator}, the operator $\mathbf{D}_t=\bar{\Lambda}\mathbf{I}+\mathbf{K}_t+\mathbf{L}_t^*$ is invertible with, 
\be \label{d-inv-bnd} 
\|\mathbf{D}_t^{-1}\|_{\text{op}}\leq\lambda_{\text{min}}^{-1}, \quad \textrm{for all } 0\leq t\leq T, 
\ee
where $\|\cdot\|_{\text{op}}$ denotes the operator norm as introduced in \eqref{eq:operator norm} and $\lambda_{\text{min}}>0$ denotes the smallest eigenvalue of $\bar{\Lambda}$. It follows from \eqref{m_t-equation} that  
\be \label{m-expr}
m_t(s)=(\mathbf{D}_t^{-1}f_t^u)(s), \quad \textrm{for all } 0 \leq s \leq T.
\ee
By plugging in \eqref{m-expr} into \eqref{stoch-fred}  we get 
\be \label{r-1} 
\bar{\Lambda}u_t=f_t-\int_0^t K(t,r)u_rdr-\int_t^T L(r,t)^\top(\mathbf{D}_t^{-1}f_t^u)(r)dr.
\ee
Using \eqref{f-term} we rewrite the third term on the right-hand side of \eqref{r-1} as follows: 
\be \label{gg-1}
\begin{aligned} 
&\int_t^T L(r,t)^\top(\mathbf{D}_t^{-1}f_t^u)(r)dr \\  
&=\int_t^TL(r,t)^\top \big(\mathbf{D}_t^{-1}\mathds{1}_{\{t\leq\cdot\}}\E_t[f_\cdot]\big)(r)dr 
-\int_t^TL(r,t)^\top \big(\mathbf{D}_t^{-1}\mathds{1}_{\{t\leq\cdot\}}\int_0^t K(\cdot,s)u_sds\big)(r)dr \\ 
&=\int_t^TL(r,t)^\top \big(\mathbf{D}_t^{-1}\mathds{1}_{\{t\leq\cdot\}}\E_t[f_\cdot]\big)(r)dr \\
&\quad -\int_0^t\int_t^T L(r,t)^\top \big(\mathbf{D}_t^{-1}\diamond\mathds{1}_{\{t\leq \cdot\}}K(\cdot,s)\big)(r)dru_sds,
\end{aligned} 
\ee
where $\mathbf{D}_t^{-1}\diamond\mathds{1}_{\{t\leq \cdot\}}K(\cdot,s)$ denotes the column-wise application of the operator $\mathbf{D}_t^{-1}$ to the function $\mathds{1}_{\{t\leq r\}}K(r,s)$ for fixed $s\in[0,T]$ (see Definition \ref{def:column-wise}). The above equality holds because of Fubini's theorem and the fact that the operator $\mathbf{D}_t^{-1}$, multiplication by $\mathds{1}_{\{t\leq r\}}K(r,s)$, and integration are all linear operations. From \eqref{r-1} and \eqref{gg-1} it follows that  
\be \label{gg-2} 
u_t=a_t-\int_0^T B(t,s)u_sds, \quad 0\leq t\leq T,
\ee
where $a(\cdot)$ and $B(\cdot,\cdot)$ are given in the statement of Proposition \ref{prop:Fredholm}. 

We introduce the following auxiliary lemma, which will be proved at the end of this section.   
\begin{lemma}\label{lemma:integrability}
Let $a,B$ be defined as in Proposition \ref{prop:Fredholm} and let $u$ be a solution of \eqref{gg-2}. Then the following hold: 
\begin{enumerate}
    \item [\textbf{(i)}] $\E\big[\int_0^T\|a_t\|^2dt\big]<\infty$.
    \item  [\textbf{(ii)}] $\int_0^T\int_0^T \|B(t,s)\|^2dsdt<\infty$.
    \item  [\textbf{(iii)}] $\E\big[\int_0^T\|u_t\|^2dt\big]<\infty$.
\end{enumerate}
\end{lemma}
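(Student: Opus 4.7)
The plan is to prove the three integrability statements in order, with most of the work being operator-norm estimates combined with Cauchy--Schwarz, Jensen, and Fubini. Throughout, the key quantitative input is the uniform bound $\|\mathbf{D}_t^{-1}\|_{\text{op}} \leq \lambda_{\min}^{-1}$ from \eqref{d-inv-bnd}, together with $K,L\in L^2([0,T]^2,\R^{N\times N})$ inherited from $\mathcal{G}$ via \eqref{l-2-ker-def}.

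For (i), I would split $\|a_t\|^2 \leq 2\|\bar\Lambda^{-1}\|^2(\|f_t\|^2 + J_t^2)$, where $J_t$ is the norm of the integral term. A Cauchy--Schwarz estimate on $J_t$ in the $r$-variable gives
\[
J_t \;\leq\; \Big(\int_t^T \|L(r,t)\|^2\,dr\Big)^{1/2} \Big\|\mathbf{D}_t^{-1}\mathds{1}_{\{t\leq\cdot\}}\E_t[f_\cdot]\Big\|_{L^2}.
\]
The second factor is bounded by $\lambda_{\min}^{-1}\|\mathds{1}_{\{t\leq\cdot\}}\E_t[f_\cdot]\|_{L^2}$ via the operator-norm bound, and conditional Jensen yields $\|\E_t[f_s]\|^2 \leq \E_t[\|f_s\|^2]$. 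Taking expectations, applying the tower property and Fubini, and integrating in $t$ reduces the problem to $\int_0^T\!\int_0^T\|L(r,t)\|^2\,dr\,dt < \infty$ and $\int_0^T\E[\|f_t\|^2]\,dt < \infty$, both of which hold by assumption.

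For (ii) the structure is parallel. Write $\|B(t,s)\|^2 \leq 2\|\bar\Lambda^{-1}\|^2(\|K(t,s)\|^2 + \tilde J(t,s)^2)$, where $\tilde J(t,s)$ is the norm of the integral term. The $K(t,s)$ contribution is integrable because $K\in L^2([0,T]^2)$. For $\tilde J(t,s)$, I apply the bound column by column according to Definition \ref{def:column-wise}: for each $i$, Cauchy--Schwarz in $r$ combined with $\|\mathbf{D}_t^{-1}\|_{\text{op}} \leq \lambda_{\min}^{-1}$ yields
\[
\tilde J(t,s) \;\lesssim\; \Big(\int_t^T \|L(r,t)\|^2\,dr\Big)^{1/2}\Big(\int_t^T \|K(r,s)\|^2\,dr\Big)^{1/2}.
\]
Squaring and integrating over $(t,s)\in[0,T]^2$, Fubini separates the two factors and both are finite since $K,L\in L^2([0,T]^2,\R^{N\times N})$.

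The main obstacle is (iii), because from $u=a-\mathbf{B}u$ a naive $L^2$-estimate only yields $\|u\|_{L^2}\leq \|a\|_{L^2}+\|\mathbf{B}\|_{\text{op}}\|u\|_{L^2}$, which is useless without a smallness assumption. The plan is to exploit (ii): since $B\in L^2([0,T]^2,\R^{N\times N})$, the operator $\mathbf{B}$ is Hilbert--Schmidt, hence compact, on $L^2([0,T],\R^N)$. Therefore $\mathbf{I}+\mathbf{B}$ is Fredholm of index zero, and the uniqueness part of Proposition \ref{prop:Fredholm} (established separately by returning to equation \eqref{stoch-fred} and using Lemma \ref{lemma:Kt+Lt* pos semidefinite} together with Lemma \ref{lemma:lambda positive} pathwise) forces $\mathbf{I}+\mathbf{B}$ to be injective, hence invertible with bounded inverse. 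Consequently $u=(\mathbf{I}+\mathbf{B})^{-1}a$ pathwise, and
\[
\E\Big[\int_0^T\|u_t\|^2\,dt\Big] \;\leq\; \|(\mathbf{I}+\mathbf{B})^{-1}\|_{\text{op}}^2\,\E\Big[\int_0^T\|a_t\|^2\,dt\Big] \;<\; \infty
\]
by (i). Progressive measurability of $u$ is inherited from that of $a$ since the operators act only in the time variable while preserving $\mathcal{F}_t$-adaptedness of the integrands (as $K,L,\mathbf{D}_t$ are deterministic).
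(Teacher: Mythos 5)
Your parts (i) and (ii) are correct and follow essentially the same route as the paper: Cauchy--Schwarz/H\"older in the $r$-variable, the uniform bound $\|\mathbf{D}_t^{-1}\|_{\text{op}}\leq\lambda_{\min}^{-1}$ from \eqref{d-inv-bnd} (applied column-wise via Definition \ref{def:column-wise} in (ii)), conditional Jensen and the tower property, and Fubini, reducing everything to the square-integrability of $f$, $K$ and $L$.

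Part (iii), however, contains a genuine gap in the way you obtain bounded invertibility of $\mathbf{I}+\mathbf{B}$. The Fredholm-alternative step is fine ($\mathbf{B}$ is Hilbert--Schmidt by (ii)), but your injectivity argument does not go through: uniqueness of solutions to the \emph{stochastic} equation \eqref{stoch-fred} in $\mathcal U$ does not transfer to injectivity of the \emph{deterministic} operator $\mathbf{I}+\mathbf{B}$ on $L^2([0,T],\R^N)$. The implication established in the proof of Proposition \ref{prop:Fredholm} only goes one way: every solution of \eqref{stoch-fred} solves \eqref{gg-2}; a nonzero $h$ with $(\mathbf{I}+\mathbf{B})h=0$ added to a solution of \eqref{gg-2} need not produce a second solution of \eqref{stoch-fred} (the passage back requires the representation \eqref{m-expr} of the conditional expectations, which is not automatic), so no contradiction with uniqueness is obtained. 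Worse, at the point where Lemma \ref{lemma:integrability} is invoked, existence of a solution to \eqref{stoch-fred} has not yet been proved --- the lemma is an ingredient of that proof --- so the argument is circular; and the parenthetical claim that uniqueness for \eqref{stoch-fred} follows ``pathwise'' from Lemmas \ref{lemma:Kt+Lt* pos semidefinite} and \ref{lemma:lambda positive} is also inaccurate, since the term $\int_t^T L(r,t)^\top\E_t[u_r]dr$ forces an energy estimate in expectation (tower property), not a pathwise one. The missing ingredient is the Volterra structure of $B$: since $K$ is a Volterra kernel and the integral term in $B(t,s)$ carries the factor $\mathds{1}_{\{s\leq t\}}$, $B$ is a Volterra kernel of type $L^2$ (Definition \ref{def:type L2}), and such kernels always admit a resolvent $R^B$ of type $L^2$ (Corollary 9.3.16 of \cite{gripenberg1990volterra}), i.e.~$\mathbf{I}+\mathbf{B}$ is boundedly invertible with $(\mathbf{I}+\mathbf{B})^{-1}=\mathbf{I}-\mathbf{R}^B$ --- no compactness or uniqueness-transfer argument is needed. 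This is the paper's route: it then writes $u_t=a_t-\int_0^T R^B(t,s)a_s ds$ pathwise (Theorem 9.3.6 of \cite{gripenberg1990volterra}) and concludes (iii) from (i) together with the $L^2$-kernel inequality $\|\mathbf{R}^B a\|_{L^2}\leq\|R^B\|_{L^2}\|a\|_{L^2}$ (Theorem 9.2.4 of \cite{gripenberg1990volterra}), which is exactly the estimate your last display would give once the bounded inverse is legitimately available.
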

Note that by Corollary 9.3.16 of \cite{gripenberg1990volterra} and Lemma \ref{lemma:integrability}(ii), the deterministic Volterra kernel $B$ has a resolvent $R^B:[0,T]^2\to\R^{N\times N}$ of type $L^2$ (recall Definition \ref{def:type L2}), which is the kernel corresponding to this operator,  
\be
\mathbf{R}^B=\mathbf{I}-(\mathbf{I}+\mathbf{B})^{-1}.
\ee
Therefore, by Theorem 9.3.6 of \cite{gripenberg1990volterra} the integral equation \eqref{gg-2} admits a unique solution $u^*(\omega)$ for any fixed $\omega \in \Omega$ for which
$ a(\omega)\in L^2([0,T],\R^N)$, i.e. for $\P$-a.e. $\omega$ by Lemma \ref{lemma:integrability}(i). This solution $u^*$ is given by a variation of constants formula in terms of the resolvent $R^{B}$, namely
\be \label{gf1} 
u_t^*=a_t -\int_0^T R^B(t,s)a_sds,\quad \quad 0\leq t\leq T.
\ee

It follows that any solution $u^*\in\mathcal{U}$ to the coupled system of stochastic Fredholm equations must be of the form
\be
u_t^*=\big((\mathbf{I} +\mathbf{B})^{-1}a\big)(t), \quad 0\leq t \leq T, \ \P-\rm{a.s.}, \label{eq:solution}
\ee
which yields the uniqueness of $u^*$ up to modifications. For the existence, let $u^*$ be defined as in \eqref{eq:solution}. It follows that $u^*$ satisfies \eqref{gf1} and thus \eqref{gg-2}. Therefore, it also satisfies \eqref{r-1} and consequently \eqref{stoch-fred} due to \eqref{m-expr}, as desired. Thanks to Lemma \ref{lemma:integrability}(iii) it is ensured that $u^*\in\mathcal{U}$. 
\end{proof}

 \begin{proof}[Proof of Lemma \ref{lemma:integrability}] (i) Recall that 
\be
a_t=\bar{\Lambda}^{-1}\Big(f_t-\int_t^TL(r,t)^\top \big(\mathbf{D}_t^{-1}\mathds{1}_{\{t\leq\cdot\}}\E_t[f_\cdot]\big)(r)dr\Big).
\ee
Since $ C_1:= {\E\big[\int_0^T\|f_t\|^2dt\big]<\infty}$ by the hypothesis of the lemma, it suffices to show that 
\be \label{gg-66} 
\E\left[ \int_0^T \left\|\int_t^TL(r,t)^\top \big(\mathbf{D}_t^{-1}\mathds{1}_{\{t\leq\cdot\}}\E_t[f_\cdot]\big)(r)dr\right\|^2dt\right]<\infty.
\ee
To see this, note  that from \eqref{d-inv-bnd}, from the conditional Jensen inequality and the tower property we get
 \be \label{gg6} 
 \begin{aligned} 
\E\Big[ \int_0^T\big\|\big(\mathbf{D}_t^{-1}\mathds{1}_{\{t\leq\cdot\}}\E_t[f_\cdot]\big)(r)\big\|^2 dr\Big] 
&\leq\lambda_{\text{min}}^{-2}\E\Big[ \int_0^T\big\|\mathds{1}_{\{t\leq r\}}\E_t[f_r]\big\|^2 dr\Big] \\ 
&\leq \lambda_{\text{min}}^{-2}\int_0^T\E\Big[ \E_t\big[\|f_r\|^2\big]\Big] dr\\
& \leq \lambda_{\text{min}}^{-2} C_1,  \quad \textrm{for all } 0\leq t\leq T. 
\end{aligned} 
\ee
Moreover, recall that $L\in\mathcal{G}$ by assumption and thus 
$$C_2 :=\int_0^T\int_0^T \|L(r,t)\|^2drdt <\infty,$$ see \eqref{prop-class}.
The submultiplicativity of the Frobenius norm,  Hölder's inequality and \eqref{gg6}  yield
\be
 \begin{aligned} 
&\E\left[ \int_0^T \left\|\int_t^TL(r,t)^\top \big(\mathbf{D}_t^{-1}\mathds{1}_{\{t\leq\cdot\}}\E_t[f_\cdot]\big)(r)dr\right\|^2dt\right] \\
&\leq \E\left[ \int_0^T\Big(\int_0^T \big\|L(r,t)\big\| \big\|\big(\mathbf{D}_t^{-1}\mathds{1}_{\{t\leq\cdot\}}\E_t[f_\cdot]\big)(r)\big\|dr\Big)^2
dt\right] \\
&=  \int_0^T\int_0^T \big\|L(r,t)\big\|^2dr \hspace{1mm}\E\left[\int_0^T \big\|\big(\mathbf{D}_t^{-1}\mathds{1}_{\{t\leq\cdot\}}\E_t[f_\cdot]\big)(r)\big\|^2dr\right]
dt \\ 
&\leq C_2C_1, 
\end{aligned} 
\ee
which verifies \eqref{gg-66}.

(ii) Recall that 
\be \label{b-def} 
B(t,s)= -\bar{\Lambda}^{-1}\Big(\mathds{1}_{\{s\leq t\}}\int_t^T L(r,t)^\top \big(\mathbf{D}_t^{-1}\diamond\mathds{1}_{\{t\leq \cdot\}}K(\cdot,s)\big)(r)dr-K(t,s)\Big).
\ee
Since $K\in\mathcal{G}$, it holds that $C_3 := \int_0^T\int_0^T \|K(r,s)\|^2dsdr <\infty$. Moreover, the submultiplicativity of the Frobenius norm, Hölder's inequality and \eqref{d-inv-bnd} imply that
\be
\begin{aligned} 
&\int_0^T\int_0^T \left\|\mathds{1}_{\{s\leq t\}}\int_t^T L(r,t)^\top \big(\mathbf{D}_t^{-1}\diamond\mathds{1}_{\{t\leq \cdot\}}K(\cdot,s)\big)(r)dr\right\|^2 dsdt \\ 
&\leq\int_0^T\int_0^T\Big(\int_0^T \big\|L(r,t)\big\| \big\|\big(\mathbf{D}_t^{-1}\diamond\mathds{1}_{\{t\leq \cdot\}}K(\cdot,s)\big)(r)\big\|dr\Big)^2 dsdt \\
&\leq\int_0^T\int_0^T\int_0^T \big\|L(r,t)\big\|^2dr \int_0^T\big\|\big(\mathbf{D}_t^{-1}\diamond\mathds{1}_{\{t\leq \cdot\}}K(\cdot,s)\big)(r)\big\|^2dr ds dt \\ 
&\leq\int_0^T\int_0^T\int_0^T \big\|L(r,t)\big\|^2dr \lambda_{\text{min}}^{-2}\int_0^T\big\|K(r,s)\big\|^2dr ds dt \\
&\leq \lambda_{\text{min}}^{-2}C_2 C_3.
\end{aligned} 
\ee
Together with \eqref{b-def} we get the result. 

(iii)  From (i) it holds that $ C_4:=\E[\int_0^T\|a_t\|^2dt]<\infty$.  Furthermore, since $R^B$ is of type $L^2$, we have that $C_5:=\|R^B\|^2_{L^2} <\infty$. Now it follows from an $L^p$-inequality for kernels (see \cite{gripenberg1990volterra}, Theorem 9.2.4) that
\bq
\E\left[\int_0^T\left\|\int_0^T R^B(t,s)a_sds\right\|^2dt\right] &=&\E\left [\left\|\int_0^T R^B(t,s)a_sds \right\|_{L^2}^2 \right] \\
&\leq & \E\Big[\big\|R^B\big\|^2_{L^2}\big\|a\big\|^2_{L^2}\Big] \\
&=&C_5C_4. 
\eq
Together with \eqref{admis} and \eqref{gf1} it follows that $u^*\in\mathcal{U}$. 
\end{proof}

\section{Proofs of Theorem \ref{thm:stochastic} and Corollary \ref{thm:deterministic}} \label{sec-proof-strat}

In order to prove Theorem \ref{thm:stochastic} and Corollary \ref{thm:deterministic}, we introduce the following essential lemmas. 
We first prove that the objective functional $J(u)$ in \eqref{eq:Ju} is strictly concave in $u\in\mathcal{U}$. 
\begin{lemma}\label{lemma:concavity}
For any fixed $G\in\mathcal{G}$ the map $u\mapsto J(u)$ is strictly concave in $u\in\mathcal{U}$. 
\end{lemma}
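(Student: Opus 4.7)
The plan is to decompose the functional as $J(u)=L(u)-Q(u,u)$, where $L$ is affine in $u$ and $Q$ is a symmetric bilinear form whose associated quadratic form is strictly positive on $\mathcal{U}\setminus\{0\}$. Strict concavity then follows from the standard polarisation identity
\be
J(\lambda u+(1-\lambda)v)-\lambda J(u)-(1-\lambda)J(v)=\lambda(1-\lambda)\,Q(u-v,u-v),\q \lambda\in(0,1),
\ee
for any $u,v\in\mathcal{U}$.

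First I would identify the pieces of $J$ that are quadratic in $u$. Using $X_t^u=X_0+\int_0^t u_s ds$, the term $(X_T^u)^\top P_T$ is affine in $u$, as is $-u_t^\top P_t$ under the expectation. The genuinely quadratic contributions come from (i) the transient impact term $-u_t^\top D_t^u$, which after integration equals $-\langle u,\mathbf{G}u\rangle_{L^2}$ pathwise since $G$ is of Volterra type, (ii) the temporary impact term $-\tfrac{1}{2}u_t^\top\Lambda u_t$, which by the identity $u^\top\Lambda u = u^\top\bar{\Lambda}u$ contributes $-\tfrac{1}{2}\int_0^T u_t^\top\bar{\Lambda}u_t\,dt$, and (iii) the risk term $-\tfrac{\gamma}{2}(X_t^u)^\top\Sigma X_t^u$, whose $u$-quadratic part is $-\tfrac{\gamma}{2}\int_0^T\bigl(\int_0^t u_s ds\bigr)^\top\Sigma\bigl(\int_0^t u_s ds\bigr) dt$. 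Collecting the three gives
\be
Q(u,u)=\mathbb{E}\Big[\langle u,\mathbf{G}u\rangle_{L^2}+\tfrac{1}{2}\langle u,\bar{\Lambda}u\rangle_{L^2}+\tfrac{\gamma}{2}\int_0^T \!\Big(\int_0^t u_s ds\Big)^\top\!\Sigma\!\Big(\int_0^t u_s ds\Big) dt\Big].
\ee

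Next I would argue that each of the three summands defining $Q(u,u)$ is nonnegative. The first summand is nonnegative by applying the nonnegative definiteness of $G\in\mathcal{G}$ pathwise: for $\P$-a.e.\ $\omega$, $u_{\boldsymbol{\cdot}}(\omega)\in L^2([0,T],\R^N)$ (since $u\in\mathcal{U}$), so $\langle u(\omega),\mathbf{G}u(\omega)\rangle_{L^2}\geq 0$ by \eqref{non-neg}, and the expectation preserves this. The third summand is nonnegative since $\Sigma$ is nonnegative definite and $\gamma\geq 0$. For the second summand, Lemma~\ref{lemma:lambda positive} gives
\be
\tfrac{1}{2}\mathbb{E}\bigl[\langle u,\bar{\Lambda}u\rangle_{L^2}\bigr]\geq \tfrac{\lambda_{\min}}{2}\,\mathbb{E}\Big[\int_0^T\|u_t\|^2 dt\Big],
\ee
where $\lambda_{\min}>0$ is the smallest eigenvalue of $\bar{\Lambda}$. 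Hence $Q(u,u)\geq \tfrac{\lambda_{\min}}{2}\mathbb{E}[\int_0^T\|u_t\|^2 dt]$, which is strictly positive whenever $u\neq 0$ in $\mathcal{U}$ (viewed as an element of $L^2(\Omega\times[0,T],\R^N)$).

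Finally, I would apply the polarisation identity to $w=u-v$ to conclude that $J(\lambda u+(1-\lambda)v)-\lambda J(u)-(1-\lambda)J(v) =\lambda(1-\lambda)Q(w,w)>0$ for $u\neq v$. I expect no real obstacle: the only subtle point is the passage from the deterministic definition of nonnegative definiteness in \eqref{non-neg} to the stochastic integrand, which is handled pathwise using the admissibility condition $\mathbb{E}[\int_0^T\|u_t\|^2 dt]<\infty$ guaranteeing $u_{\boldsymbol{\cdot}}(\omega)\in L^2([0,T],\R^N)$ almost surely, together with an application of (stochastic) Fubini to justify the rewriting of the transient-impact integral as $\langle u,\mathbf{G}u\rangle_{L^2}$.
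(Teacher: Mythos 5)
Your proposal is correct, but it reaches the conclusion by a more direct route than the paper. The paper isolates the temporary-impact term and gets strictness from the strict convexity of $x\mapsto x^\top\Lambda x$, while the transient-impact and risk terms are handled separately following Proposition 2.9 of \cite{GSS}: one introduces the cross functionals $\wt C_i(u,v)$, uses $C_i(u-v)\le 0$ (which is exactly the nonnegative definiteness of $G$ applied pathwise and of $\Sigma$) to obtain midpoint concavity, and then upgrades to concavity via continuity of $\lambda\mapsto C_i(\lambda u+(1-\lambda)v)$. You instead collect all three quadratic contributions into a single symmetrized bilinear form $Q$ and apply the polarisation identity
\be
J(\lambda u+(1-\lambda)v)-\lambda J(u)-(1-\lambda)J(v)=\lambda(1-\lambda)\,Q(u-v,u-v)\ \ge\ \lambda(1-\lambda)\,\tfrac{\lambda_{\min}}{2}\,\E\Big[\int_0^T\|u_t-v_t\|^2dt\Big],
\ee
using Lemma~\ref{lemma:lambda positive} for the coercive $\bar\Lambda$-part. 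The positivity inputs are identical (pathwise application of \eqref{non-neg} for $G$, nonnegative definiteness of $\Sigma$, positive definiteness of $\Lambda$ through $\bar\Lambda$), but your argument dispenses with the midpoint-plus-continuity step and in fact yields the stronger conclusion of strong concavity with an explicit modulus $\lambda_{\min}/2$ in $L^2(\Omega\times[0,T],\R^N)$, whereas the paper only needs (and only states) strict concavity. Two small remarks: to invoke the polarisation identity you should define $Q(u,v)$ as the symmetrization $\frac12\E[\langle u,(\mathbf{G}+\mathbf{G}^*)v\rangle_{L^2}]+\dots$, which is what you implicitly do and is harmless since the quadratic form is unchanged (cf.\ Remark~\ref{rem:pos semidefinite}); and no stochastic Fubini is actually needed to write $-\int_0^T u_t^\top D_t^u dt=-\langle u,\mathbf{G}u\rangle_{L^2}$, since this is a pathwise identity using only the Volterra property of $G$, with finiteness of all expectations guaranteed by $G\in L^2([0,T]^2,\R^{N\times N})$ and $u\in\mathcal U$. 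Finally, strictness is of course understood for $u\neq v$ as elements of $L^2(\Omega\times[0,T],\R^N)$, i.e.\ up to $d\P\otimes dt$-null modifications, consistent with the paper's notion of uniqueness.
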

\begin{proof}
We prove the result by showing that each of the components of $J(u)$ is concave and at least one of them is strictly concave. We first notice that the term
\be
-\int_0^T u_t^\top P_tdt+ (X_T^u)^\top P_T,
\ee
is affine in $u$.

Recall that $\Lambda$ is a positive definite matrix, hence the quadratic form $x\mapsto x^\top\Lambda x$ is strictly convex in $x\in\R^N$, which implies that
\be
-\int_0^T u^\top_t\Lambda u_tdt
\ee
is strictly concave in $u$ due to the linearity and monotonicity of the integral. Thus by \eqref{eq:Ju}, $J(u)$ is strictly concave if the term
\be \label{c-12} 
-\int_0^Tu_t^\top D_t^udt-\frac{\gamma}{2}\int_0^T(X_t^u)^\top\Sigma X_t^udt =:C_1(u) + C_2(u), 
\ee
is concave in $u$. Without loss of generality, we assume that $X_0=0$, as terms in $C_2(\cdot)$ involving $X_0$ are affine in $u$.  
In order to show the concavity of \eqref{c-12}, we adopt the approach from Gatheral et al. (see  Proposition 2.9 in \cite{GSS}).  
Since $G\in\mathcal{G}$ and $\Sigma$ is nonnegative definite, it follows from \eqref{dist} and \eqref{non-neg} that $C_1(u),C_2(u)\leq 0$ for all $u\in\mathcal{U}$. Next, define the cross functionals
\be \label{c23} 
\wt C_1(u,v):= -\int_0^T  u_t^\top D_t^vdt\quad\text{and}\quad
\wt C_2(u,v):= -\frac{\gamma}{2}\int_0^T (X_t^u)^\top \Sigma X^v_tdt, \quad u,v\in\mathcal{U}. 
\ee
From \eqref{c-12} and \eqref{c23} it follows that 
\be
C_i(u-v)=C_i(u)+C_i(v)- \wt C_i(u,v)- \wt C_i(v,u), \quad i=1,2
\ee
and since $C_i(u-v)\leq 0$, $ i=1,2$ we get that
\be \label{c34} 
C_i\left(\frac{1}{2}u+\frac{1}{2}v\right)
=\frac{1}{4}C_i(u)+\frac{1}{4}C_i(v)+\frac{1}{4}\wt C_i(u,v)+\frac{1}{4}\wt C_i(v,u)
\geq \frac{1}{2}C_i(u)+\frac{1}{2}C_i(v).
\ee
The concavity of $C_i$ for $i=1,2$, follows from \eqref{c34} and the fact that the map
\be
\lambda \mapsto C_i(\lambda u+(1-\lambda)v)=\lambda^2C_i(u)+(1-\lambda)^2C_i(v)+\lambda(1-\lambda)(\wt C_i(u,v)+\wt C_i(v,u))
\ee
is continuous in $\lambda$. We therefore obtain the concavity of the left-hand side of \eqref{c-12} and hence of $J(\cdot)$. 
\end{proof}
From Lemma \ref{lemma:concavity} it follows that the objective functional $J$ in \eqref{eq:Ju} admits a unique maximizer characterized by the critical point $u\in\mathcal{U}$ at which its G\^ateaux derivative
\be
\langle J'(u),v\rangle\vcentcolon =\lim\limits_{\varepsilon\to 0}\frac{J(u+\varepsilon v)-J(u)}{\varepsilon}
\ee
vanishes for any $v \in\mathcal{U}$ (see Propositions 1.2 and 2.1 in Chapter 2 of \cite{ekeland1999convex}). In the following lemma we calculate $\langle J'(u),v\rangle$. 
\begin{lemma}\label{lemma:gateaux}
For any $u,v\in\mathcal{U}$, the G\^ateaux derivative $\langle J'(u),v\rangle$ of $J$ is given by
\be\label{eq:gateaux}
\E\left[ \int_0^Tv_t^\top\Big(-P_t-\big((\mathbf{G}+\mathbf{G}^*)u\big)(t)-\bar{\Lambda}u_t+P_T-\gamma\int_t^T \Sigma X_s^uds  \Big)dt\right].
\ee
\end{lemma}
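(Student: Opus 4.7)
}

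The plan is to expand $J(u+\varepsilon v)$ as a polynomial in $\varepsilon$, isolate the coefficient of $\varepsilon$, and rewrite the cross terms via Fubini so that $v_t$ appears as a test function. The key observation is that the inventory and price distortion are affine in the control: writing $Y_t \vcentcolon= \int_0^t v_s\,ds$, we have $X_t^{u+\varepsilon v} = X_t^u + \varepsilon Y_t$ and $D_t^{u+\varepsilon v} = D_t^u + \varepsilon D_t^v$. Substituting into \eqref{eq:Ju} gives $J(u+\varepsilon v) = J(u) + \varepsilon L(u,v) + \varepsilon^2 Q(v)$, where $Q(v)$ collects the purely quadratic-in-$v$ terms (arising from $-v_t^\top D_t^v$, $-\tfrac12 v_t^\top \Lambda v_t$ and the risk term in $Y_t^\top \Sigma Y_t$). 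Since the expansion is finite, taking $\varepsilon\to 0$ yields $\langle J'(u),v\rangle = L(u,v)$ without any delicate limit passage; the only analytic point to check is that each term is integrable in $(\omega,t)$, which follows from $u,v\in\mathcal U$, from $G\in L^2([0,T]^2,\R^{N\times N})$, and from the assumption $\E[\int_0^T\|A_t\|^2 dt]<\infty$ together with the martingale property of $M$.

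Next I would list the linear-in-$\varepsilon$ contributions one by one. The term $-u_t^\top P_t$ gives $-\int_0^T v_t^\top P_t\,dt$. The impact term $-u_t^\top D_t^u$ splits into $-\int_0^T v_t^\top D_t^u\,dt - \int_0^T u_t^\top D_t^v\,dt$; the first piece is already in the desired form since $D_t^u = (\mathbf G u)(t)$ by the Volterra property (the indicator $\mathds 1_{\{s\le t\}}$ is built into $G$), and the second piece is rewritten using the adjoint identity
\begin{equation}
\int_0^T u_t^\top (\mathbf G v)(t)\,dt = \langle u,\mathbf G v\rangle_{L^2} = \langle \mathbf G^* u,v\rangle_{L^2} = \int_0^T v_t^\top (\mathbf G^* u)(t)\,dt,
\end{equation}
so that the two impact contributions combine into $-\int_0^T v_t^\top ((\mathbf G+\mathbf G^*)u)(t)\,dt$. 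The temporary-impact term $-\tfrac12 u_t^\top \Lambda u_t$ contributes $-\tfrac12(v_t^\top \Lambda u_t + u_t^\top \Lambda v_t) = -v_t^\top \bar\Lambda u_t$ by the definition \eqref{lam-bar}. The book-value term $(X_T^u)^\top P_T$ contributes $Y_T^\top P_T = \int_0^T v_t^\top P_T\,dt$.

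The risk term requires a Fubini step. Its linear-in-$\varepsilon$ contribution is
\begin{equation}
-\gamma \int_0^T (X_t^u)^\top \Sigma Y_t\,dt = -\gamma \int_0^T \int_0^t (X_t^u)^\top \Sigma v_s\,ds\,dt = -\gamma \int_0^T v_s^\top \Big(\int_s^T \Sigma X_t^u\,dt\Big)\,ds,
\end{equation}
where I used the symmetry of $\Sigma$ and swapped the order of integration. Collecting all contributions inside a single $\int_0^T v_t^\top(\,\cdot\,)\,dt$ and taking expectation yields exactly \eqref{eq:gateaux}.

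The main (minor) obstacle is bookkeeping: making sure that the adjoint identity and the Fubini swap in the risk term are applied correctly under the expectation, and that all cross terms produced by the expansion of the quadratic-in-$u$ pieces are paired properly so as to reconstruct the symmetric operator $\mathbf G+\mathbf G^*$ and the symmetrized temporary-impact matrix $\bar\Lambda$. Integrability required to exchange expectation, time integrals and the order of integration is immediate from $u,v\in\mathcal U$, $G\in L^2$, $\E[\int_0^T\|A_t\|^2dt]<\infty$ and the $L^2$-boundedness of the martingale $M$.
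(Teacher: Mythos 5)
Your proposal is correct and follows essentially the same route as the paper: expand $J(u+\varepsilon v)$ exactly as a quadratic in $\varepsilon$, read off the linear coefficient, combine the impact cross terms via the adjoint identity into $(\mathbf G+\mathbf G^*)u$, symmetrize $\Lambda$ into $\bar\Lambda$, and handle the risk term with the symmetry of $\Sigma$ plus a Fubini swap. No gaps worth noting beyond the routine integrability bookkeeping you already mention.
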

\begin{proof}
Let $u,v\in\mathcal{U}$ and let $\varepsilon>0$. Recall that the objective functional is given by
\be
J(u) = \mathbb{E}\bigg{[}\int_0^T -u_t^\top(P_t + D_t^u +\frac{1}{2}\Lambda u_t)dt+(X_T^u)^\top P_T-\frac{\gamma}{2}\int_0^T(X_t^u)^\top\Sigma X_t^udt\bigg{]}.
\ee
A direct computation using \eqref{inv} and \eqref{dist} yields
\bq
&&J(u+\varepsilon v)-J(u) \\
&&=\varepsilon\cdot\E \bigg[\int_0^T\Big(-v_t^\top(P_t + D_t^u +\frac{1}{2}\Lambda u_t)-u_t^\top( D_t^v +\frac{1}{2}\Lambda v_t) \Big)dt \\ 
&& \qquad\qquad+\int_0^T v_t^\top P_Tdt-\frac{\gamma}{2}\int_0^T \Big(\int_0^t v_s^\top ds \Big)\Sigma X_t^u -(X_t^u)^\top \Sigma \int_0^t v_s ds dt\bigg]+O(\varepsilon^2) \\ 
&&=\varepsilon\cdot \E\bigg{[} \int_0^Tv_t^\top\bigg(-P_t-\big((\mathbf{G}+\mathbf{G}^*)u\big)(t)-\frac{1}{2}\big(\Lambda+\Lambda^\top\big)u_t+P_T \\
&&\qquad\qquad -\gamma\int_t^T \Sigma X_s^uds  \bigg)dt\bigg{]} +O(\varepsilon^2),
\eq
where we used the symmetry of $\Sigma$ and applied Fubini's theorem in the last equality. Recalling \eqref{lam-bar}, dividing by $\varepsilon$ and taking the limit as $\varepsilon \dr 0$ yields the result.
\end{proof}
Next, we derive from Lemma \ref{lemma:gateaux} a system of stochastic Fredholm equations which is satisfied by the unique maximizer of the objective functional $J(u)$ in \eqref{eq:Ju}.
\begin{lemma}\label{lemma:fred-foc}
For fixed $G\in\mathcal{G}$, the unique maximizer of the objective functional $J(u)$ in \eqref{eq:Ju} satisfies the coupled system of stochastic Fredholm equations of the second kind given by
\be \label{fred-foc}
\bar{\Lambda}u_t=g_t-\int_0^t K(t,s)u_sds-\int_t^T K(s,t)^\top\E_t[u_s]ds,\quad 0\leq t\leq T, 
\ee
where $\bar{\Lambda}$, $K$ and $g$ are defined in \eqref{lam-bar}, \eqref{h-k-ker} and \eqref{g-def}, respectively. 
\end{lemma}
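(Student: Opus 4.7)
The plan is to characterize the unique maximizer of $J$ via its first order condition and then massage the resulting expression into the Fredholm form displayed in \eqref{fred-foc}. Existence and uniqueness of the maximizer will follow from strict concavity (Lemma~\ref{lemma:concavity}) together with a standard convex-analysis argument: $u^*\in\mathcal{U}$ is the unique maximizer if and only if the G\^ateaux derivative $\langle J'(u^*),v\rangle$ vanishes for every $v\in\mathcal{U}$.

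Starting from the explicit expression of $\langle J'(u),v\rangle$ given in Lemma~\ref{lemma:gateaux}, I would exploit the arbitrariness of $v\in\mathcal{U}$. Since $v$ ranges over all square-integrable progressively measurable processes, I can test against $v_t=\mathbf{1}_A(\omega)\mathbf{1}_{[t_0,t_1]}(t)e_i$ with $A\in\mathcal{F}_{t_0}$, and conclude by the fundamental lemma of the calculus of variations that the $\mathcal{F}_t$-conditional expectation of the bracket in \eqref{eq:gateaux} vanishes $d\mathbb{P}\otimes dt$-a.e. Using that $u_t$, $P_t$ and $X_0$ are $\mathcal{F}_t$-measurable, this yields
\begin{equation}
\bar{\Lambda}u_t
=\mathbb{E}_t[P_T-P_t]-\mathbb{E}_t\big[\big((\mathbf{G}+\mathbf{G}^*)u\big)(t)\big]
-\gamma\,\mathbb{E}_t\!\left[\int_t^T\Sigma X_s^u\,ds\right].
\end{equation}

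Next I would rewrite the risk term in operator form. Using $X_s^u=X_0+\int_0^s u_r\,dr$ and Fubini, I split the time integral at $t$:
\begin{equation}
\gamma\,\mathbb{E}_t\!\left[\int_t^T\Sigma X_s^u\,ds\right]
=\gamma(T-t)\Sigma X_0+\int_0^t\gamma(T-t)\Sigma u_r\,dr
+\int_t^T\gamma(T-r)\Sigma\,\mathbb{E}_t[u_r]\,dr.
\end{equation}
Comparing this with the definition of $F$ in \eqref{h-k-ker}, the last two summands are exactly $\mathbb{E}_t\big[\big((\mathbf{F}+\mathbf{F}^*)u\big)(t)\big]$, since $(\mathbf{F}u)(t)=\int_0^t\gamma(T-t)\Sigma u_r\,dr$ is $\mathcal{F}_t$-measurable and $(\mathbf{F}^*u)(t)=\int_t^T\gamma(T-r)\Sigma u_r\,dr$ by symmetry of $\Sigma$. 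Combining this with the Volterra structure of $\mathbf{G}$ (so that $(\mathbf{G}u)(t)=\int_0^t G(t,s)u_s\,ds$ and $(\mathbf{G}^*u)(t)=\int_t^T G(s,t)^\top u_s\,ds$) and the identifications $g_t=\mathbb{E}_t[P_T-P_t]-\gamma(T-t)\Sigma X_0$ and $K=G+F$, the first order condition collapses directly to \eqref{fred-foc}.

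The routine ingredients are the Fubini interchange and the explicit splitting of the kernel operators into forward and backward parts; the main subtlety is the passage from $\langle J'(u^*),v\rangle=0$ for all $v\in\mathcal{U}$ to the pointwise equation. Here one must carefully verify that all terms inside the bracket in \eqref{eq:gateaux} are in $L^2(\Omega\times[0,T])$, so that the bracket itself belongs to the space over which $v$ ranges and the standard density argument applies. Integrability of $\int_0^T(\mathbf{G}+\mathbf{G}^*)u$ follows from boundedness of the operators $\mathbf{G},\mathbf{G}^*$ on $L^2([0,T],\mathbb{R}^N)$ guaranteed by $G\in\mathcal{G}$, while the risk term is controlled by Cauchy–Schwarz once one uses $u\in\mathcal{U}$ and the $L^2$ hypothesis on $A$ in \eqref{p-dec}. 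With these verifications the lemma is established.
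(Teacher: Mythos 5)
Your proposal is correct and follows essentially the same route as the paper: characterize the unique maximizer via the vanishing G\^ateaux derivative of Lemma \ref{lemma:gateaux} (using strict concavity), rewrite the risk term by Fubini as $\gamma\Sigma X_0(T-t)+\big((\mathbf{F}+\mathbf{F}^*)u\big)(t)$, and then identify $g$ and $K=G+F$ to obtain \eqref{fred-foc}. The only (cosmetic) difference is how you localize the integral condition: you test against indicators $\mathds{1}_A\mathds{1}_{[t_0,t_1]}e_i$ with $A\in\mathcal{F}_{t_0}$, whereas the paper conditions on $\mathcal{F}_t$ and uses the tower property (equivalently, tests with a progressively measurable version of $\E_t[\,\cdot\,]$ of the bracket), both being standard and at the same level of rigor.
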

\begin{proof}
The unique maximizer of $J(u)$ is characterized by the critical point $u\in\mathcal{U}$ at which the G\^ateaux derivative $\langle J'(u),v\rangle$ is equal to 0 for any $v \in\mathcal{U}$. It follows from Lemma \ref{lemma:gateaux} that this is equivalent to
\be\label{eq:gateaux=0}
\E\left[ \int_0^Tv_t^\top\Big(-P_t-\big((\mathbf{G}+\mathbf{G}^*)u\big)(t)-\bar{\Lambda}u_t+P_T-\gamma\int_t^T \Sigma X_s^uds  \Big)dt\right]=0,\quad\forall v\in\mathcal{U}.
\ee
Next, by an application of Fubini's theorem and recalling that $F$ was defined in \eqref{h-k-ker},
\be
\begin{aligned}
\label{eq:inventory}
&\gamma\int_t^T \Sigma X_s^uds=\gamma\Sigma X_0(T-t)+ \gamma\Sigma \int_t^T\int_0^su_rdrds\\
&=\gamma\Sigma X_0(T-t)+\gamma\Sigma\int_0^T\big(T-(r\vee t)\big)u_rdr\\
&=\gamma\Sigma X_0(T-t)+\big((\mathbf{F}+\mathbf{F}^*)u\big)(t),
\end{aligned}
\ee
where $r\vee t\vcentcolon =\max\{r,t\}$. Plugging \eqref{eq:inventory} into \eqref{eq:gateaux=0}, conditioning on $\mathcal{F}_t$ and using the tower property of conditional expectation we get the following first-order condition,
\be
\begin{aligned}\label{eq:foc}
&-P_t-\big((\mathbf{G}+\mathbf{G}^*)(\E_tu)\big)(t)-\bar{\Lambda}u_t+\E_t[P_T]-\gamma\Sigma X_0(T-t)-\big((\mathbf{F}+\mathbf{F}^*)(\E_tu)\big)(t) \\
&=0, \quad d\mathbb{P}\otimes dt \text{-a.e. on }\Omega\times [0,T].
\end{aligned}
\ee
Recalling the definition of $K$ in \eqref{h-k-ker} and $g$ in \eqref{g-def}, equation \eqref{eq:foc} simplifies to
\be
\bar{\Lambda}u_t=g_t-\big((\mathbf{K}+\mathbf{K^*})(\E_tu)\big)(t), \quad d\mathbb{P}\otimes dt \text{-a.e. on }\Omega\times [0,T],
\ee
which is equivalent to \eqref{fred-foc}, since $K$ is a Volterra kernel.
\end{proof}
In order to prove Theorem \ref{thm:stochastic} and Corollary \ref{thm:deterministic}, the following lemma is needed, which in combination with Remark \ref{rem:pos semidefinite} shows that the Volterra kernel $F$ defined in $\eqref{h-k-ker}$ is in $\mathcal{G}$.
\begin{lemma}\label{lemma:F+F^* nonnegative def}
The following inequality holds, 
\be
\int_0^T\int_0^Tf(t)^\top\gamma\Sigma \big(T-(s\vee t)\big)f(s)dsdt\geq 0, \quad \textrm{for all } f\in L^2([0,T],\R^N). 
\ee
\end{lemma}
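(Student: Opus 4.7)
The plan is to express the kernel $\gamma\Sigma(T-(s\vee t))$ as a mixture against a nonnegative measure in such a way that the double integral reduces to an integral of nonnegative quadratic forms in $\Sigma$.

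First, I would observe the key identity
\be
T - (s \vee t) = \int_0^T \mathds{1}_{\{s \leq u\}} \mathds{1}_{\{t \leq u\}} \, du, \quad s, t \in [0,T],
\ee
which is immediate from $T - \max(s,t) = \int_{\max(s,t)}^T du$. This is essentially the covariance-kernel representation of Brownian motion (up to time reversal), and it is the core of the argument.

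Next, I would plug this identity into the double integral and invoke Fubini's theorem. For any $f \in L^2([0,T],\R^N)$,
\be
\int_0^T \int_0^T f(t)^\top \gamma \Sigma (T-(s\vee t)) f(s) \, ds \, dt = \gamma \int_0^T \int_0^T \int_0^T \mathds{1}_{\{s\leq u\}} \mathds{1}_{\{t\leq u\}} f(t)^\top \Sigma f(s) \, du \, ds \, dt,
\ee
and after swapping the order of integration this becomes
\be
\gamma \int_0^T \left( \int_0^u f(t) \, dt \right)^\top \Sigma \left( \int_0^u f(s) \, ds \right) du.
\ee
Fubini is justified because $f \in L^2([0,T],\R^N)$ implies $f \in L^1([0,T],\R^N)$ and $\Sigma$ is a constant matrix, so the triple integral is absolutely integrable.

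Finally, since $\Sigma$ is symmetric nonnegative definite and $\gamma \geq 0$, each integrand $g(u)^\top \Sigma g(u)$ with $g(u) := \int_0^u f(r)\,dr$ is nonnegative, and the integral over $u \in [0,T]$ is therefore nonnegative. There is no real obstacle here; the only subtle point is spotting the representation of $T-(s\vee t)$ as a product of indicators so that Fubini can turn the kernel into an obvious sum of squares.
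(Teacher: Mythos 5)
Your proof is correct and follows essentially the same route as the paper: the paper writes $T-(s\vee t)=\int_0^T \mathds{1}_{\{s\leq T-r\}}\mathds{1}_{\{t\leq T-r\}}dr$, which is your indicator representation after the substitution $u=T-r$, and then applies Fubini to obtain $\gamma\int_0^T\big(\int_0^{T-r}f\big)^\top\Sigma\big(\int_0^{T-r}f\big)dr\geq 0$. No gaps; your justification of Fubini via $f\in L^1$ matches what the paper leaves implicit.
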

\begin{proof}
First, observe that for all $(s,t)\in [0,T]^2$ we have
\be
T-(s\vee t)=(T-s)\wedge (T-t)
=\int_0^T \mathds{1}_{\{r\leq T-s\}}\mathds{1}_{\{r\leq T-t\}}dr
=\int_0^T \mathds{1}_{\{s\leq T-r\}}\mathds{1}_{\{t\leq T-r\}}dr.
\ee
Thus for any $f\in L^2([0,T],\R^N)$ we get that
\bq
&&\int_0^T\int_0^Tf(t)^\top\gamma\Sigma \big(T-(s\vee t)\big)f(s)dsdt \\
&&=\gamma\int_0^T\int_0^{T-r}\hspace{-3mm}\int_0^{T-r}\hspace{-3mm}f(t)^\top\Sigma f(s)dtdsdr \\
&&=\gamma \int_0^T \Big(\int_0^{T-r}\hspace{-3mm}f(t)dt\Big)^\top\Sigma \Big(\int_0^{T-r}\hspace{-3mm}f(t)dt\Big)dr \\
&&\geq 0,
\eq
since $\Sigma$ is nonnegative definite and $\gamma\geq0$.
\end{proof}

Now we are ready to prove Theorem \ref{thm:stochastic} and Corollary \ref{thm:deterministic}. 

\begin{proof}[Proof of Theorem \ref{thm:stochastic}] 
Let $G\in\mathcal{G}$. Recall that $F$, $K$ and $g$ are defined in \eqref{lam-bar}, \eqref{h-k-ker} and \eqref{g-def}, respectively. It follows from Lemma \ref{lemma:F+F^* nonnegative def} and Remark \ref{rem:pos semidefinite} that $F\in\mathcal{G}$ and thus $K\in\mathcal{G}$ as well. Moreover, $g$ is progressively measurable and satisfies
\be\int_0^T\E\big[\|g_t\|^2\big]dt<\infty,\ee 
because $P$ is progressively measurable and satisfies $\smash{\int_0^T\E[\|P_t\|^2]dt<\infty}$ by assumption.

Now by Lemmas \ref{lemma:concavity} and \ref{lemma:fred-foc}, the objective functional $J(u)$ in \eqref{eq:Ju} admits a unique maximizer in $\mathcal{U}$, and this maximizer satisfies \eqref{fred-foc}. An application of Proposition \ref{prop:Fredholm} with $f=g$ and $L=K$ yields that \eqref{fred-foc} admits a unique progressively measurable solution $u^*\in\mathcal{U}$ given by 
\be
u_t^*=\big((\mathbf{I}+\mathbf{B})^{-1}a\big)(t),\quad 0\leq t\leq T, \label{eq:solution u^*}
\ee
with $a$ and $\mathbf{B}$ defined as in Theorem \ref{thm:stochastic}. It follows that $u^*$ in \eqref{eq:solution u^*} must be the unique maximizer of $J(u)$, which finishes the proof.
\end{proof}
\begin{proof}[Proof of Corollary \ref{thm:deterministic}]
Let $G \in \mathcal G$ and assume that $A$ in \eqref{p-dec} is deterministic. Then, $g$ in \eqref{g-def} is a deterministic function in $L^2([0,T],\R^N)$ given by
\be
g_t=A_T-A_t-\gamma (T-t)\Sigma X_0,\quad 0\leq t\leq T.
\ee
It follows that the unique maximizer $u^*$ of the objective functional $J(u)$ from Theorem \ref{thm:stochastic}, which satisfies \eqref{fred-foc}, is deterministic and satisfies the simplified equation
\be\label{eq:deterministic equation}
\bar{\Lambda}u_t = g_t-\big((\mathbf{K}+\mathbf{K}^*)u\big)(t),\quad 0\leq t\leq T.
\ee
Now \eqref{eq:deterministic equation} admits the deterministic solution 
\be\label{eq:deterministic solution}
u_t^*=\big(\mathbf{D}^{-1}g\big)(t),\quad 0\leq t \leq T,
\ee
where the bounded linear operator 
\be
\mathbf{D}=(\mathbf{K}+\mathbf{K}^*+\bar{\Lambda}\mathbf{I})=(\mathbf{G}+\mathbf{G}^*+\mathbf{F}+\mathbf{F}^*+\bar{\Lambda}\mathbf{I})
\ee
defined in \eqref{h-k-ker} and \eqref{D-def}
is invertible because of the following: Since $G\in\mathcal{G}$ by assumption, $F\in \mathcal{G}$ due to Lemma \ref{lemma:F+F^* nonnegative def} and Remark \ref{rem:pos semidefinite}, and there exists a real number $c>0$ such that $\bar{\Lambda}\mathbf{I}$ satisfies 
\be
\langle \bar{\Lambda}\mathbf{I}f,f\rangle_{L^2}\geq c\langle f,f\rangle_{L^2}, \quad \textrm{for all } f\in L^2([0,T],\R^N)
\ee
due to  Lemma \ref{lemma:lambda positive}. It follows that the operator $\mathbf{D}$ is positive definite with
\be
\langle \mathbf{D}f,f\rangle_{L^2}\geq c\langle f,f\rangle_{L^2}, \quad \textrm{for all } f\in L^2([0,T],\R^N). 
\ee
Therefore, Lemma \ref{lemma:inverse operator} implies that $\mathbf{D}$ is invertible.

It follows that the deterministic $u^*$ in \eqref{eq:deterministic solution} solves \eqref{fred-foc} in Lemma \ref{lemma:fred-foc}. This solution is unique due to Proposition \ref{prop:Fredholm}. Thus, $u^*$ in \eqref{eq:deterministic solution} is the unique maximizer of $J(u)$.
\end{proof}

 \section{Proofs of Theorem \ref{thm:convolution} and Corollary \ref{cor:product of matrix and real function}} \label{sec-pf-mani}
This section is dedicated to the proofs of Theorem \ref{thm:convolution} and Corollary \ref{cor:product of matrix and real function}, which concern convolution kernels. However, one of the main ingredients for these proofs is Proposition \ref{prop:pos semidef kernels}, which gives a new sufficient condition for the nonnegative definite property for a more general class of Volterra kernels. Since this result could be of independent interest for the theory of Volterra equations we assume in the first part of this section that the kernel $G$ is a Volterra kernel.

\subsection{Volterra kernels}
\begin{definition}\label{def:mirrored kernel}
Let $G\in L^2([0,T]^2,\R^{N\times N})$ be a  Volterra kernel. Then the associated mirrored kernel $\smash{\widetilde{G}\in L^2([0,T]^2,\R^{N\times N})}$ is defined as
\be
\widetilde{G}(t,s)\vcentcolon =
\begin{cases}
    G(t,s)\hspace{29.9mm} \text{for }t>s,\\
    \frac{1}{2}(G(t,t)+G(t,t)^\top)\hspace{6.4mm} \text{for }t=s,\\
    G(s,t)^\top\hspace{27.3mm} \text{for }t<s.\\
\end{cases}
\ee
\end{definition}
\begin{remark}
It holds for any Volterra kernel $G\in L^2([0,T]^2,\R^{N\times N})$ that
\be
\widetilde{G}(t,s)=G(t,s)+G(s,t)^\top\quad dt\otimes ds\text{-a.e. on }[0,T]^2,
\ee
and therefore that 
$\langle f, \mathbf{\widetilde{G}}f\rangle_{L^2}=\langle f, (\mathbf{G}+\mathbf{G}^*)f\rangle_{L^2}$ for every $f\in L^2([0,T],\R^N)$. This observation will be crucial for the proof of Proposition \ref{prop:pos semidef kernels}.
\end{remark}
The following proposition provides a sufficient condition \eqref{eq:positive defnite kernel} for the nonnegative definite property of Volterra kernels. Note that \eqref{eq:positive defnite kernel} below is similar to Definition 2 in \cite{alfonsi2016multivariate}, which was used in order to define nonnegative definite convolution kernels.

\begin{prop}\label{prop:pos semidef kernels}
Let $G\in L^2([0,T]^2,\R^{N\times N})$ be a matrix-valued Volterra kernel such that its mirrored kernel $\smash{\widetilde{G}}$ is continuous 
 and satisfies
\be \label{eq:positive defnite kernel}
\sum_{k,l=1}^n x_k^\top\widetilde{G}(t_k,t_l)x_l\geq 0, 
\ee
for any $n\in\N$, $t_1,\dots, t_n\in [0,T]$ and $x_1,\dots,x_n\in\R^N$. Then for any function $f\in L^2([0,T],\R^N)$ the following holds,
\be
\langle f, \mathbf{G}f\rangle_{L^2}
=\frac{1}{2}\langle f, (\mathbf{G}+\mathbf{G}^*)f\rangle_{L^2}
=\frac{1}{2}\langle f, \mathbf{\widetilde{G}}f\rangle_{L^2} \geq 0.
\ee
\end{prop}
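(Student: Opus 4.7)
The plan is to first invoke the identity highlighted in the remark preceding the proposition, namely $\langle f, \mathbf{G} f\rangle_{L^2} = \tfrac{1}{2}\langle f, \mathbf{\widetilde{G}} f\rangle_{L^2}$, which reduces the claim to the nonnegativity of the symmetric quadratic form $Q(f) := \langle f, \mathbf{\widetilde{G}} f\rangle_{L^2}$ on $L^2([0,T],\R^N)$. Since $\widetilde G\in L^2([0,T]^2,\R^{N\times N})$, the induced operator $\mathbf{\widetilde{G}}$ is bounded on $L^2$, so $Q$ is continuous, and it suffices to verify $Q(g_m)\geq -\varepsilon_m$ for a sequence $g_m\to f$ in $L^2$ with $\varepsilon_m\to 0$.

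Fix $f\in L^2([0,T],\R^N)$ and the uniform partition $\tau_k^{(m)} := kT/m$, $k=0,\ldots,m$, with cells $I_k^{(m)} := [\tau_{k-1}^{(m)},\tau_k^{(m)})$. I would introduce the piecewise-constant approximation
\begin{equation*}
g_m := \sum_{k=1}^m \bar f_k\,\mathds{1}_{I_k^{(m)}},\qquad \bar f_k := \frac{m}{T}\int_{I_k^{(m)}} f(s)\,ds,
\end{equation*}
namely the $L^2$-projection of $f$ onto step functions on this partition. Standard arguments give $g_m\to f$ in $L^2$, hence $Q(g_m)\to Q(f)$ by continuity of $Q$. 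Computing directly,
\begin{equation*}
Q(g_m) = \sum_{k,l=1}^m \bar f_k^{\top} M_{kl}^{(m)}\,\bar f_l, \quad M_{kl}^{(m)} := \int_{I_k^{(m)}}\int_{I_l^{(m)}} \widetilde G(t,s)\,ds\,dt.
\end{equation*}

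The core step is to exploit the continuity of $\widetilde G$ on the compact set $[0,T]^2$: letting $\omega$ denote its (Frobenius-norm) modulus of uniform continuity, one has the pointwise bound $\|M_{kl}^{(m)} - \widetilde G(\tau_{k-1}^{(m)},\tau_{l-1}^{(m)})(T/m)^2\|\leq \omega(T/m)(T/m)^2$, uniformly in $k,l$. Setting the rescaled test vectors $x_k^{(m)} := \bar f_k\sqrt{T/m}\in\R^N$, the sampled main term becomes
\begin{equation*}
\sum_{k,l=1}^m \bar f_k^{\top}\,\widetilde G(\tau_{k-1}^{(m)},\tau_{l-1}^{(m)})\,\bar f_l\,(T/m)^2 = \sum_{k,l=1}^m (x_k^{(m)})^{\top}\,\widetilde G(\tau_{k-1}^{(m)},\tau_{l-1}^{(m)})\,x_l^{(m)},
\end{equation*}
which is nonnegative thanks to the hypothesis \eqref{eq:positive defnite kernel} applied to the finite family $\{(\tau_k^{(m)}, x_k^{(m)})\}$.

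The remaining task is controlling the error. Using Jensen's inequality, $\sum_k\|\bar f_k\|^2(T/m)=\|g_m\|_{L^2}^2\leq\|f\|_{L^2}^2$, together with Cauchy-Schwarz $\bigl(\sum_k\|\bar f_k\|\bigr)^2\leq m\sum_k\|\bar f_k\|^2$, the total error is bounded by
\begin{equation*}
\omega(T/m)(T/m)^2\Bigl(\sum_{k=1}^m\|\bar f_k\|\Bigr)^2 \leq \omega(T/m)\cdot T\cdot\|f\|_{L^2}^2 \;\longrightarrow\; 0.
\end{equation*}
Hence $Q(g_m)\geq -\omega(T/m)T\|f\|_{L^2}^2$, and passing to the limit $m\to\infty$ yields $Q(f)\geq 0$ as desired. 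The main obstacle is precisely this quantitative error control: a naive Riemann-sum comparison would produce a cumulative mass growing like $m^2$, killing the $(T/m)^2$ scale; the fix is the rescaling $x_k^{(m)}=\bar f_k\sqrt{T/m}$ together with the $L^2$-boundedness of the projections $g_m$, which is exactly what makes the modulus of continuity $\omega(T/m)$ drive the error to $0$.
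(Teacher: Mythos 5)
Your proof is correct, but it follows a genuinely different and more elementary route than the paper. The paper first transfers the discrete condition \eqref{eq:positive defnite kernel} to complex vectors (Lemma \ref{lemma:equivalence of real and complex}), then invokes a matrix-valued version of Mercer's theorem \cite{devito2013extension} to expand $\widetilde{G}$ in continuous eigenfunctions with positive eigenvalues, and finally justifies an interchange of summation and integration via Fubini (with the counting measure) to write $\langle f,\mathbf{\widetilde{G}}f\rangle_{L^2}$ as a sum of manifestly nonnegative terms involving the all-ones matrix. You instead argue by direct discretization: project $f$ onto piecewise constants $g_m$, compare the block quadratic form $\sum_{k,l}\bar f_k^\top M_{kl}^{(m)}\bar f_l$ with the sampled form $\sum_{k,l}\bar f_k^\top\widetilde{G}(\tau_{k-1}^{(m)},\tau_{l-1}^{(m)})\bar f_l\,(T/m)^2$ using the uniform continuity of $\widetilde{G}$ on the compact square, apply \eqref{eq:positive defnite kernel} to the sampled form, and control the error by $\omega(T/m)\,T\,\|f\|_{L^2}^2\to 0$; continuity of the bounded quadratic form then transfers the lower bound from $g_m$ to $f$, and the reduction $\langle f,\mathbf{G}f\rangle_{L^2}=\tfrac12\langle f,\mathbf{\widetilde{G}}f\rangle_{L^2}$ is exactly the remark preceding the proposition. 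This avoids both the complexification lemma and Mercer's theorem, using nothing beyond uniform continuity, which both arguments already require; the paper's spectral route yields the eigen-expansion of $\widetilde{G}$ as a by-product but is not needed for the statement itself. One cosmetic slip: with $x_k^{(m)}=\bar f_k\sqrt{T/m}$ your displayed identity is off by a factor $T/m$ (it would be exact with $x_k^{(m)}=\bar f_k\,T/m$); since \eqref{eq:positive defnite kernel} holds for arbitrary vectors and $(T/m)^2$ is a positive scalar, the scaling is immaterial, so the nonnegativity of the sampled term --- and hence your error estimate and conclusion --- stands.
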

\noindent In order to prove Proposition \ref{prop:pos semidef kernels} we will need the following auxiliary lemma. 
\begin{lemma}\label{lemma:equivalence of real and complex}
Let $G :[0,T]\times [0,T]\to \R^{N\times N}$ be a map satisfying  $G(t,s)=G(s,t)^\top$ for all $t,s\in [0,T]$. Then it holds that
\be  \label{eq:positive definite kernel real}
\sum_{k,l=1}^n x_k^\top G(t_k,t_l)x_l\geq 0 \ \text{for any }n\in\N,\  t_1,\dots, t_n\in [0,T],\  x_1,\dots,x_n\in\R^N  \
\ee
 if and only if 
\be \label{eq:positive definite kernel complex}
\sum_{k,l=1}^n \overline{z_k}^\top G(t_k,t_l)z_l\geq 0  \ \text{for any }n\in\N,\  t_1,\dots, t_n\in [0,T],\  z_1,\dots,z_n\in\C^N.   
\ee
\end{lemma}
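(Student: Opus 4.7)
The plan is to prove the two implications separately, with the reverse direction being essentially immediate and the forward direction reducing to a short algebraic computation that exploits the symmetry hypothesis $G(t,s)=G(s,t)^\top$.

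For the implication \eqref{eq:positive definite kernel complex} $\Rightarrow$ \eqref{eq:positive definite kernel real}, I would simply specialize to real vectors $z_k = x_k \in \R^N \subset \C^N$; then $\overline{z_k} = z_k$ and the complex quadratic form reduces exactly to the real one, which is therefore nonnegative.

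The main content is the forward direction \eqref{eq:positive definite kernel real} $\Rightarrow$ \eqref{eq:positive definite kernel complex}. I would write each complex vector as $z_k = x_k + i y_k$ with $x_k, y_k \in \R^N$, so that $\overline{z_k}^\top = x_k^\top - i y_k^\top$, and expand the complex quadratic form as
\begin{align*}
\sum_{k,l=1}^n \overline{z_k}^\top G(t_k,t_l) z_l
&= \sum_{k,l} \bigl(x_k^\top G(t_k,t_l) x_l + y_k^\top G(t_k,t_l) y_l\bigr) \\
&\quad + i \sum_{k,l} \bigl(x_k^\top G(t_k,t_l) y_l - y_k^\top G(t_k,t_l) x_l\bigr).
\end{align*}
The real part is the sum of two real quadratic forms of the type \eqref{eq:positive definite kernel real} (one in the $x_k$'s, one in the $y_k$'s), hence nonnegative by hypothesis. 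It then remains to show that the imaginary part vanishes, and this is the only step where the symmetry assumption is used: since $y_k^\top G(t_k,t_l) x_l$ is a scalar, it equals its own transpose $x_l^\top G(t_k,t_l)^\top y_k = x_l^\top G(t_l,t_k) y_k$ by the hypothesis $G(t,s)=G(s,t)^\top$, and swapping the summation indices $k \leftrightarrow l$ then shows that the two sums in the imaginary part coincide and cancel.

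I do not anticipate a genuine obstacle here: the argument is a standard polarization-type manipulation, and the symmetry condition $G(t,s)=G(s,t)^\top$ is exactly what is needed to kill the imaginary part. The only thing to be mindful of is the careful bookkeeping between $\top$ and complex conjugation so as not to accidentally introduce an extra conjugation on the matrix entries of $G$, which are real.
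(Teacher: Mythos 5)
Your proposal is correct and follows essentially the same route as the paper: expand $z_k$ into real and imaginary parts, bound the diagonal (real) terms using the real-vector hypothesis, and cancel the cross terms by transposing the scalar and invoking $G(t_k,t_l)^\top = G(t_l,t_k)$ together with a relabeling of indices. The reverse implication is handled identically (restriction to real vectors), so there is nothing substantive to add.
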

\begin{proof}
Let $n\in\N$, $t_1,\dots, t_n\in [0,T]$, $z_1,\dots,z_n\in\C^N$, and assume \eqref{eq:positive definite kernel real}. Using \eqref{eq:positive definite kernel real} we get  
\bq
&&\sum_{k,l=1}^n \overline{z_k}^\top G(t_k,t_l)z_l \\
&&=\sum_{k,l=1}^n \big(\operatorname{Re}(z_k)-i\operatorname{Im}(z_k)\big)^\top  G(t_k,t_l)\big(\operatorname{Re}(z_l)+i\operatorname{Im}(z_l)\big)  \\
&&
\geq -i\sum_{k,l=1}^n \operatorname{Im}(z_k)^\top  G(t_k,t_l)\operatorname{Re}(z_l) 
+ i\sum_{k,l=1}^n \operatorname{Re}(z_k)^\top  G(t_k,t_l)\operatorname{Im}(z_l)  \\
&&=-i\sum_{k,l=1}^n \operatorname{Im}(z_k)^\top  G(t_k,t_l)\operatorname{Re}(z_l) 
+ i\sum_{k,l=1}^n \operatorname{Im}(z_l)^\top  G(t_k,t_l)^\top\operatorname{Re}(z_k) \\
&&=0,  
\eq
since $G(t_k,t_l)^\top=G(t_l,t_k)$ by assumption this proves \eqref{eq:positive definite kernel complex}. The reverse direction is trivial.
\end{proof}
\begin{proof}[Proof of Proposition \ref{prop:pos semidef kernels}]
 The main idea of the proof is to apply a generalized version of Mercer's theorem for matrix-valued kernels (see \cite{devito2013extension}). In order to apply this result, first notice that $[0,T]$ is a separable metric space with respect to the Euclidean distance and that the Lebesgue measure defined on $\mathcal{B}([0,T])$ is finite. Next, by assumption, the mirrored kernel $\smash{\widetilde{G}}:[0,T]^2\to\R^{N\times N}$ associated with $G$ is continuous and satisfies condition \eqref{eq:positive defnite kernel}. Therefore by Lemma \ref{lemma:equivalence of real and complex} it also satisfies \eqref{eq:positive definite kernel complex}. Moreover, since $[0,T]^2$ is compact and $\smash{\widetilde{G}}$ is continuous, it holds that 
\be \label{tr-bnd}
\int_0^T \operatorname{Tr} \widetilde{G}(t,t)dt\leq TN \sup_{
1\leq i \leq N}\sup\limits_{0\leq t\leq T}   \widetilde{G}_{ii}(t,t)   <\infty,
\ee
where Tr denotes the trace of a matrix in $\R^{N\times N}$. Note that $G$ must be nonnegative on the diagonal, this can be seen for example from the expansion \eqref{g-t-exp} used later.   
Next, define the linear integral operator
\be
\mathbf{\widetilde{G}}_{\mathbb{C}}:L^2([0,T],\mathbb{C}^N)\to L^2([0,T],\mathbb{C}^N), \quad (\mathbf{\widetilde{G}}_{\mathbb{C}}f)(t)= \int_0^T \widetilde{G}(t,s)f(s)ds,
\ee
which is well-defined and bounded (see Theorem 9.2.4 in \cite{gripenberg1990volterra}), and denote by $\text{ker}(\smash{\mathbf{\widetilde{G}}_{\mathbb{C}}})$ its kernel.
Since \eqref{eq:positive definite kernel complex} and \eqref{tr-bnd} hold, we can apply a generalized version of Mercer's theorem for matrix-valued kernels (see Theorem 4.1 in  \cite{devito2013extension}), 
which, in combination with the auxiliary Theorem A.1 in \cite{devito2013extension}, implies that there exists a countable orthonormal basis $\{\varphi_m\}_{m\in I}$ of $\text{ker}(\smash{\mathbf{\widetilde{G}}_{\mathbb{C}}})^\perp\subset L^2([0,T],\mathbb{C}^N)$ of continuous eigenfunctions of $\smash{\mathbf{\widetilde{G}}_{\mathbb{C}}}$ with a corresponding family $\{\sigma_m\}_{m\in I}\subset(0,\infty)$ of positive eigenvalues such that
\be \label{g-t-exp}
\widetilde{G}_{ij}(t,s)=\sum_{m\in I}\sigma_m \overline{\varphi_m^i(t)}\varphi_m^j(s),\quad \textrm{for all } (s,t)\in [0,T]^2, \ i,j\in\{1,\dots,N\},
\ee
where the series converges uniformly on $[0,T]^2$. Let $f\in L^2([0,T],\R^N)$. From \eqref{g-t-exp} it follows that 
\be \label{t1}
\begin{aligned}
\langle f, \mathbf{\widetilde{G}}f\rangle_{L^2} 
&=\int_0^T\int_0^T \sum_{i,j=1}^N f^i(t) \Big(\sum_{m\in I}\sigma_m \overline{\varphi_m^i(t)}\varphi_m^j(s)\Big) f^j(s)dsdt \\ 
&=\int_0^T\int_0^T \sum_{m\in I}\sigma_m \sum_{i,j=1}^N f^i(t) \overline{\varphi_m^i(t)}\varphi_m^j(s) f^j(s)dsdt \\
&=\int_0^T\int_0^T \int_{I} \sigma_m \sum_{i,j=1}^N f^i(t) \overline{\varphi_m^i(t)}\varphi_m^j(s) f^j(s)d\mu (m)dsdt,
\end{aligned} 
\ee
where $\mu$ is the counting measure on the countable index set of the eigenfunctions $I$. In particular, $\mu$ is $\sigma$-finite. 

Next, we would like to switch the order of integration on the right-hand side of \eqref{t1} using Fubini's theorem. In order to do that, we show that the following integral is finite, by using Young's inequality, \eqref{tr-bnd} and \eqref{g-t-exp}, 
\bn
&& \int_0^T\int_0^T \int_{I} \big|\sigma_m \sum_{i,j=1}^N f^i(t) \overline{\varphi_m^i(t)}\varphi_m^j(s) f^j(s)\big| d\mu (m)dsdt \\
&&
\leq \int_0^T\int_0^T \sum_{i,j=1}^N \big|f^i(t) \big|   \int_{I} \sigma_m \big(\big|\varphi_m^i(t)\big|^2+\big|\varphi_m^j(s) \big|^2\big) d\mu (m)\big|f^j(s)\big|dsdt \\
&&= \int_0^T\int_0^T \sum_{i,j=1}^N \big|f^i(t) \big| \big(\widetilde{G}_{ii}(t,t)+\widetilde{G}_{jj}(s,s)\big)  \big|f^j(s)\big|dsdt \\ 
&&
\leq  2\Big(\sup_{1\leq i \leq N} \sup_{0\leq t\leq T}\wt G_{ii}(t,t)\Big) \int_0^T\int_0^T \sum_{i,j=1}^N \big|f^i(t) \big| \big|f^j(s)\big|dsdt \\ 
&&<\infty.
\en
Let $\mathbf{1}_N$ denote the $N\times N$ matrix of ones in all entries, which is nonnegative definite as its eigenvalues are $N$ with multiplicity $1$ and $0$ with multiplicity $N-1$. 
Moreover, for every $m\in I$ we define 
\be
g_m:[0,T]\to\mathbb{C}^N,\quad g_m(t)\vcentcolon
=\begin{pmatrix} \varphi_m^1(t) f^1(t) \\ \vdots \\ \varphi_m^N(t) f^N(t) \end{pmatrix} \;,
\ee
which is the Hadamard product of $\varphi_m$ and $f$.
Then \eqref{t1} and Fubini's theorem imply 
\be
\begin{aligned}
\langle f, \mathbf{\widetilde{G}}f\rangle_{L^2}
&=\int_{I}^{\ } \int_0^T\int_0^T  \sigma_m \sum_{i,j=1}^N f^i(t) \overline{\varphi_m^i(t)}\varphi_m^j(s) f^j(s)dsdtd\mu (m) \\
&=\int_{I}^{\ }\int_0^T\int_0^T \sigma_m\ \overline{g_m(t)}^\top \mathbf{1}_N \ g_m(s) dsdtd\mu (m) \\ 
&=\sum_{m\in I} \sigma_m \Big(\overline{\int_0^T  g_m(t)dt}\Big)^\top  \mathbf{1}_N\  \Big(\int_0^T g_m(s) ds\Big)\geq 0.
\end{aligned}
\ee
This finishes the proof.
\end{proof}
\begin{remark}
In the proof of Proposition \ref{prop:pos semidef kernels}, we use Fubini's theorem and the counting measure $\mu$ to justify interchanging integration and summation, since the uniform convergence of 
\be
\widetilde{G}_{ij}(t,s)=\sum_{m\in I}\sigma_m \overline{\varphi_m^i(t)}\varphi_m^j(s),
\ee
on $[0,T]^2$  does not imply the uniform convergence of
\be
\sum_{m\in I}\sigma_m f^i(t) \overline{\varphi_m^i(t)}\varphi_m^j(s) f^j(s)
\ee
on $[0,T]^2$ for any $f\in L^2([0,T],\R^N)$ as $f$ may be unbounded.
\end{remark}

\subsection{Convolution kernels}
For the remainder of this section we restrict our discussion to convolution kernels of the form $G(t,s)\vcentcolon =\mathds{1}_{\{t\geq s\}}H(t-s)$ as in Theorem \ref{thm:convolution}. 
\begin{proof}[Proof of Theorem \ref{thm:convolution}] Without loss of generality, we can assume that 
\be
H(T)=\lim_{t\uparrow T}H(t)\in\R^{N\times N},
\ee
where the limit always exists, since $H$ is continuous, nonincreasing, nonnegative and symmetric on $(0,T)$ by assumption.

\textbf{Case 1.} We assume that $H$ is bounded. Then we can also assume without loss of generality that 
\be
H(0)=\lim_{t \dr 0}H(t)\in\R^{N\times N},
\ee
where the limit exists by a similar argument.
Now we continuously extend $H$ to $[0,\infty)$ by defining
\be \label{ext}
\overline{H}(t)\vcentcolon 
=\begin{cases}
    H(t)\hspace{10mm}\text{for }t<T,\\
    H(T)\hspace{8.5mm}\text{for }t\geq T.
\end{cases}
\ee
Then $\overline{H}$ is continuous, symmetric, nonnegative, nonincreasing and convex on $[0,\infty)$. Thus, it follows from Theorem 2 in \cite{alfonsi2016multivariate} that the mirrored kernel associated with the Volterra kernel $\overline{G}(t,s)\vcentcolon =\mathds{1}_{\{t\geq s\}}\overline{H}(t-s)$ on $[0,\infty)^2$ satisfies \eqref{eq:positive definite kernel complex}. Therefore, the continuous mirrored kernel $\smash{\wt{G}}$ associated with $G$ satisfies \eqref{eq:positive definite kernel complex} as well. This allows us to apply Proposition \ref{prop:pos semidef kernels} to conclude that for every $f\in L^2([0,T],\R^N)$,
\be
\int_0^T\int_0^t f(t)^\top H(t-s)f(s)dsdt=\langle f,\mathbf{G}f\rangle_{L^2}
=\frac{1}{2}\langle f, \mathbf{\widetilde{G}}f\rangle_{L^2}\geq 0.
\ee

\textbf{Case 2.} We consider the case where $H$ is unbounded. Then it follows from the assumptions of the theorem that $H$ must have a singularity at $t=0$, i.e. the limit of $H(t)$ as $t$ decreases to $0$ does not exist in $\R^{N\times N}$. Let $\overline{H}$ denote the extension of $H$ to $[0,\infty)$ as in \eqref{ext}. Then it follows that $\overline{H}$ is  nonincreasing, continuous, convex, nonnegative and symmetric on $(0,\infty)$. Define the sequence of convolution kernels $(H_m)_{m\geq 1}$  by
\be \label{h-m}
H_m:[0,T]\to\R^{N\times N},\quad H_m(t)\vcentcolon=\overline{H}(t+m^{-1}),\quad\textrm{for }m\geq 1,
\ee
and notice that it converges pointwise to $H$ on $(0,T]$, as $\overline{H}$ is continuous on $(0,\infty)$. Furthermore, for every $m\geq 1$, $H_m$ is nonincreasing, convex, nonnegative, symmetric, continuous and thus bounded on $[0,T]$, so that the proof of Case 1 applies to them. 

In order to proceed we notice that if $C=(C_{ij})_{i,j\in\{1,\dots,N\}}$ is symmetric nonnegative definite matrix, then all of its diagonal entries $C_{ii}$ are nonnegative and we have 
\be
|C_{ij}|\leq \frac{1}{2}\big(C_{ii}+C_{jj}\big)\leq \max\limits_{1\leq i \leq N}C_{ii},
\ee
where we have used the fact that $x^\top Cx\geq 0$ for $x=e_i\pm e_j$ and the symmetry of $C$. This implies the second inequality of the following, 
\be \label{bndC}
\big(\max\limits_{1\leq i\leq N}C_{ii}\big)^2\leq\|C\|^2\leq N^2\big(\max\limits_{1\leq i\leq N}C_{ii}\big)^2,
\ee
where the first inequality follows from the definition of the Frobenius norm. 

Let $f\in L^2([0,T],\R^N)$. Using, in order of appearance below, the submultiplicativity of the Frobenius norm, Definition \eqref{h-m} and Young's inequality, the right-hand side of \eqref{bndC}, the facts that the diagonal set $\Delta:=\{(t,s)\in[0,T]^2:t=s\}$ has $\mathbb{R}^2$-Lebesgue measure zero and that the diagonal entries of $\overline{H}(t)$ are nonincreasing on $(0,\infty)$, it holds for all $m\geq 1$, 
\be
\begin{aligned} \label{fe1}
&\int_0^T\int_0^T  \big| f(t)^\top \mathds{1}_{\{t\geq s\}}H_m(t-s)f(s)\big|dsdt \\ 
&\leq \int_0^T\int_0^T \mathds{1}_{\{t\geq s\}} \big\|f(t)\big\| \big\|H_m(t-s)\big\| \big\|f(s)\big\|dsdt \\ 
&\leq \int_0^T\int_0^T\mathds{1}_{\{t\geq s\}} \Big( \big\|\overline{H}(t-s+m^{-1})\big\|^2+ \big\|f(t)\big\|^2 \big\|f(s)\big\|^2 \Big)dsdt \\ 
&\leq N^2\int_0^T\int_0^T\mathds{1}_{\{t\geq s\}} \Big( \big( \max\limits_{1\leq i \leq N} \overline{H}_{ii}(t-s+m^{-1})\big)^2+ \big\|f(t)\big\|^2 \big\|f(s)\big\|^2 \Big)dsdt \\ 
&\leq N^2\int_0^T\int_0^T\mathds{1}_{\{t> s\}} \Big(\big( \max\limits_{1\leq i \leq N} \overline{H}_{ii}(t-s)\big)^2+ \big\|f(t)\big\|^2 \big\|f(s)\big\|^2\Big)dsdt \\ 
&\leq N^2\int_0^T\int_0^T\mathds{1}_{\{t> s\}}\Big( \big\| \overline{H}(t-s)\big\|^2+ \big\|f(t)\big\|^2 \big\|f(s)\big\|^2 \Big)dsdt \\
&\leq N^2\int_0^T\int_0^T  \big\|G(t,s)\big\|^2dsdt +N^2 \big\|f\big\|_{L^2}^4 \\
&<\infty,
\end{aligned}
\ee
where we have used the left-hand side of \eqref{bndC}, the fact that $G(t,s)=\mathds{1}_{\{t\geq s\}}H(t-s)=\mathds{1}_{\{t\geq s\}}\overline{H}(t-s)$ for $(s,t)\in[0,T]^2$, and finally that $G$ is in $L^2([0,T]^2,\R^{N\times N})$ in the last three inequalities. 

Now the sequence of functions 
\be
\big(f(t)^\top \mathds{1}_{\{t\geq s\}}H_m(t-s)f(s)\big)_{m\geq 1}
\ee
defined on $[0,T]^2$ converges pointwise on $[0,T]^2\setminus \Delta$ and thus almost everywhere to
\be
f(t)^\top \mathds{1}_{\{t\geq s\}}H(t-s)f(s).
\ee
Therefore, it follows from \eqref{fe1} that dominated convergence can be applied to get,  
\be
\int_0^T\int_0^t f(t)^\top H(t-s)f(s)dsdt=\lim\limits_{m\to\infty}\int_0^T\int_0^t f(t)^\top H_m(t-s)f(s)dsdt\geq 0,
\ee
which completes the proof.
\end{proof}

\begin{proof} [Proof of Corollary \ref{cor:product of matrix and real function}]
First, note that the function given by
\be
[0,T]\to\R,\quad t\mapsto\int_0^t\phi(t-s)^2ds
\ee
is the convolution of the integrable function $\phi^2$ and the constant function $\mathds{1}_{[0,T]}$. It follows that the convolution is integrable again (see Theorem 2.2.2 (i) in \cite{gripenberg1990volterra}), i.e.
\be
\int_0^T\int_0^t\phi(t-s)^2dsdt<\infty.
\ee
Therefore the Volterra kernel $\overline{G}(t,s) =\mathds{1}_{\{t\geq s\}}H(t-s)$ is in $L^2([0,T]^2,\R^{N\times N})$ for $H$ as in \eqref{h-cfm}. Next, for any $x\in\R^N$ we have that $x^\top Cx\geq 0$. This implies that the function \be
t\mapsto x^\top H(t) x = x^\top Cx \phi(t) 
\ee
is nonincreasing and convex on $(0,T)$ because $\phi$ is nonincreasing and convex on $(0,T)$ by assumption. Finally, the matrix 
\be
H(t)=C \phi(t) 
\ee
is symmetric nonnegative definite for every $t\in (0,T)$, since $\phi$ is nonnegative and $C$ is symmetric nonnegative definite. Thus, the conditions of Theorem \ref{thm:convolution} are satisfied.
\end{proof}

\section{Proof of Lemma \ref{lemma:examples}}\label{sec-proof-lem-ker}
\begin{proof}[Proof of Lemma \ref{lemma:examples}]
Note that all propagator matrices from Example \ref{ex:propagators} except for the singular power-law kernel in Example \ref{ex:propagators}(ii) and the non-convolution kernel in Example \ref{ex:propagators}(v) are of the general form 
\be
G(t,s)=Q^T \mathds{1}_{\{t\geq s\}}\operatorname{diag}\big(g_1(t-s),\dots,g_N(t-s)\big)Q
\ee
for an invertible matrix $Q\in\R^{N\times N}$ and nonnegative, nonincreasing, convex functions $g_1,\dots,g_N:[0,T]\to\R$. For the matrix exponential kernel, this follows from Example 2 in \cite{alfonsi2016multivariate}. For the other kernels, an eigendecomposition of the corresponding symmetric nonnegative definite matrix  $C$ yields the desired representation. Next, in order to see that $G\in\mathcal{G}$, we proceed by verifying the conditions of Theorem \ref{thm:convolution} for the kernel $H(t)\vcentcolon = Q^\top \operatorname{diag}\big(g_1(t),\dots,g_N(t)\big)Q$. Let $x\in\R^N$ and $y\vcentcolon =Qx$. Then it follows that
\be
x^\top H(t) x= y^\top \operatorname{diag}\big(g_1(t),\dots,g_N(t)\big)y=\sum_{i=1}^N y_i^2 g_i(t)
\ee
is nonnegative, nonincreasing and convex on $[0,T]$. Moreover $H(t)$ is symmetric for any $t\in [0,T]$. Therefore, Theorem \ref{thm:convolution} applies, since $G$ is bounded and thus in $L^2([0,T]^2,\R^{N\times N})$.

The singular power-law kernel in Example \ref{ex:propagators}(ii) given by
\be 
G(t,s)=\mathds{1}_{\{t>s\}}(t-s)^{-\beta}C
\ee
for $\beta\in(0,\frac{1}{2})$ and nonnegative definite $C\in\R^{N\times N}$ is in $\mathcal{G}$ due to Corollary \ref{cor:product of matrix and real function}.

For the non-convolution kernel in Example \ref{ex:propagators}(v) we follow similar lines as in the proof of Lemma \ref{lemma:F+F^* nonnegative def}, using the fact that $H:[0,T]\to\R$ is nonnegative definite (see Example 2.7 in \cite{GSS}) to get
\bn
&&\int_{0}^T\int_0^T f(t) \alpha (T-t) \mathds 1_{\{t>s\}}H(t-s)  f(s)  dsdt  \\ 
 &&=  \alpha \int_0^T \int_{0}^{T-r}\int_0^{T-r}  f(t) \mathds 1_{\{t>s\}}H(t-s)  f(s)   dsdt dr\\
&& \geq 0, \quad \textrm{for all }f\in L^2([0,T],\R). 
\en
Hence $G$ is a product of a nonnegative, nonincreasing, convex function taking values in $\R$ and a nonnegative definite matrix $C\in\R^{N\times N}$, so the result follows from the argument used for examples (i)--(iv) above. 
\end{proof}

\textbf{Acknowledgment}. We are very grateful to the anonymous referees and the editors for their careful reading of the manuscript and for their insightful comments and suggestions, which significantly improved the paper.

\textbf{Data Availability Statement.}
Data sharing is not applicable to this article as no new data were created or analyzed in this study.

\end{document}